\documentclass[onecolumn,journal,noadjust]{IEEEtran}
\usepackage{cite}

\ifCLASSINFOpdf
\else
\fi
\usepackage{amssymb}
\usepackage{dsfont}
\usepackage{float}
\usepackage{graphicx}
\newtheorem{lemma}{Lemma}

\newtheorem{definition}{Definition}
\newtheorem{proposition}{Proposition}
\newtheorem{theorem}{Theorem}
\newtheorem{remark}{Remark}
\newtheorem{corollary}{Corollary}
\usepackage{xcolor}
\usepackage{mathrsfs}
\usepackage{mathtools}
\usepackage{enumitem}
\usepackage{caption}
\usepackage[ colorlinks = true,              linkcolor = blue, urlcolor  =
blue,              citecolor = red,              anchorcolor = green,
]{hyperref}

\newcommand{\ind}{\perp \!\!\! \perp}

\usepackage [autostyle, english = american]{csquotes}
\MakeOuterQuote{"}

\usepackage{amsmath}
\DeclareMathOperator*{\argmax}{arg\,max}
\DeclareMathOperator*{\argmin}{arg\,min}
\begin{document}
%

\title{Lossy Joint Source-Channel Coding over\\Unknown Channels}
%
%
%

\author{%
  \begin{tabular}[t]{@{}c@{\hspace{6em}}c@{\hspace{6em}}c@{}}
    Adeel Mahmood & Harish Viswanathan & Jinfeng Du\\
    Nokia Bell Labs & Nokia Bell Labs & Nokia Bell Labs\\
    Murray Hill, NJ, USA & Murray Hill, NJ, USA & Murray Hill, NJ, USA
  \end{tabular}
}

\maketitle


\begin{abstract}
We analyze the performance of joint
source-channel codes in an unknown-channel framework, where the
true channel is unknown but the source distribution is known. We derive achievability bounds for a family of mismatched-design joint source-channel codes constructed for a design channel $Q_{Y|X}$ and operated
over a possibly different true channel $P_{Y|X}$. Our one-shot achievability bound allows for standard Borel alphabets for the source, reproduction, channel input and channel output. The subsequent block coding result based on the normal approximation applies to
stationary memoryless sources and memoryless, possibly nonstationary
channels under regularity and moment conditions. The achievability
bound is given in terms of the rate-distortion and rate-dispersion
functions, as well as two channel-dependent quantities that we call
the mismatched-design rate and mismatched-design rate-dispersion. We
use a family of Gibbs posteriors parameterized by a single scalar as decoder-side kernels,
and the envelope of the corresponding achievable rates recovers the
generalized mutual information. In the stationary matched setting
covered by our assumptions, our result recovers the achievability part
of Kostina and Verd\'u's 2013 Gaussian approximation result and improves its
third-order term. We also formalize a notion of a second-order universal family of source-channel codes under which there is no first- or second-order asymptotic penalty. We then construct two
channel-blind families of source-channel codes: one that is
second-order universal over a regular class of nonstationary block erasure channels and
another that is second-order universal over stationary Gaussian
channels. Our code construction
uses Poisson functional representations of suitable conditional
probability measures to produce the encoder and decoder outputs.
\end{abstract}

\section{Introduction}

With the advent of AI, joint source–channel coding (JSCC) has attracted significant recent interest \cite{gunduz-wigger, tung-gunduz, 8723589, 9398576}. In AI-based JSCC, encoder models are trained to map source data directly to channel inputs, while decoder models are trained to reconstruct the source data with the highest possible fidelity. Image and video datasets, combined with synthetic additive white Gaussian noise (AWGN) channels, are commonly used for training. Often, the encoder and decoder models are trained at a single nominal channel signal-to-noise ratio (SNR) and then operated over a range of SNR values \cite{8723589,tung-gunduz}. Similarly, \cite{homapaper2} has studied the semantic communication performance by training a source-channel code over block erasure channels with non-uniform erasure probabilities. Since the actual channel is unknown in such training frameworks, we investigate achievable performance bounds for lossy JSCC when the source distribution is assumed to be known but the actual channel statistics are unknown to both the encoder and the decoder. In this unknown-channel framework, we present results for two classes of random joint source-channel codes: the \emph{mismatched-design} class and the \emph{universal-design} class.

Let $P_{Y|X}$ denote the unknown channel encountered at deployment,
which we also call the true channel. Mismatched-design JSCC for unknown channels is driven by the needs of real deployments. On the one hand, for transmission over mobile channels, only an estimate $Q_{Y|X}$ of the true channel $P_{Y|X}$ may be known, so that the JSCC design may be $Q_{Y|X}$-based. On the other hand, replacing separate source and channel coding by JSCC in real communication networks is difficult~\cite{homaspaper1}  because practical networks are inherently modular: application providers responsible for source coding and network providers responsible for channel coding are typically distinct entities connected through standardized packet interfaces. It is impractical to  relay the real-time channel state information to the application provider to enable channel-aware source coding. In this case, neural joint source-channel codes are trained for a fixed channel abstraction $Q_{Y|X}$ while the effective channel $P_{Y|X}$ at the network may be molded to optimize transmission.

Recent works \cite{homapaper2}, \cite{homaspaper1} proposed a new standard interface, called the multi-level reliability interface, between the application layer and the network in order to abstract the network channel into multiple reliability levels for the application provider. The multiple reliability levels motivate a nonstationary channel model for the channel abstraction $Q_{Y|X}$. Specifically, the channel abstraction is chosen to be block erasure channels (BLECs) with different erasure probabilities in \cite{homapaper2}. The motivation for this choice is that in real packet networks, decoding failures are naturally represented as packet erasures, with the decoder explicitly knowing which blocks/packets are missing. Hence, the true channel $P_{Y|X}$ can be modeled as a nonstationary BLEC, with different packet erasure probabilities that are shaped by network-side resource allocation\footnote{This could include orthogonal resource allocation weights or different power allocations in a power-domain superposition scheme \cite{adeelUEP}.} and scheduling. The source coder at the application layer is then trained as a JSCC over the nonstationary block erasure channel abstraction $Q_{Y|X}$, thereby inducing a learned encoded representation comprising blocks (or packets) with nonuniform block importance. The channel coder at the network optimizes the transmission of the message packets based on packet importance weights \cite{adeelUEP} that are also derived from the multi-level reliability interface.  Understanding the achievable performance in the channel-mismatched framework provides valuable insights in optimizing the design and operation: the application uses a JSCC trained for a channel abstraction $Q_{Y|X}$ while the network chooses its parameters and policies, including unequal error protection (UEP) and selective packet dropping, to optimize the effective channel $P_{Y|X}$.

To derive our mismatched-design result, we consider a specific family of randomized
$Q_{Y|X}$-based source-channel codes, where we call $Q_{Y|X}$ the design channel. The law of the common randomness shared by the encoder and
decoder is fixed at design time, and the channel-dependent components
of the construction---namely, the channel input distribution $Q_X^*$
and the decoder-side kernel---are determined by $Q_{Y|X}$; the
parameters of $P_{Y|X}$ do not enter the code construction. This makes our study different from the analysis of mismatch capacity in the mismatched decoding framework \cite{revisited_paper}, which entails optimizing over the channel input law while the decoding metric is arbitrary but fixed. Through refined asymptotic analysis based on the normal approximation, we analyze the performance of these codes when they are operated
over $P_{Y|X}$. We refer to the resulting achievability statements
under channel mismatch as \emph{mismatched-design} results.

Our JSCC construction departs from the conventional framework of shared i.i.d. source and channel codebooks between the encoder and decoder. Instead, following \cite{alternative_oneshot_JSCC}, we have a marked Poisson point process as shared randomness, and the encoder and decoder obtain the channel input and source reconstruction, respectively, using the Poisson functional representation \cite[Definition 1]{alternative_oneshot_JSCC} of suitable conditional probability measures. The source distribution $P_S$ is assumed to be fully known at design time, so this paper does not focus on source mismatch. 

Our one-shot results in the mismatched-design framework encompass arbitrary sources and channels on standard Borel spaces, albeit with certain technical assumptions which ensure, in particular, that the information density of the joint distribution $Q_X^* \times Q_{Y|X}$ is well-defined under the mismatched distribution $Q_X^* \times P_{Y|X}$.  On the other hand, our block coding results based on the normal approximation encompass memoryless, stationary sources and memoryless, but not necessarily stationary, channels, whose single-letter probability laws are arbitrary but satisfy mild regularity conditions. We prove refined asymptotic achievability bounds in the memoryless setting in terms of the usual source-related quantities, the rate-distortion and rate-dispersion functions \cite{kostina_SC}, as well as two channel-related single-letter quantities, which we call the \textit{mismatched-design rate} and the \emph{mismatched-design rate-dispersion}. A substantive contribution of this work, preceding the normal-approximation analysis, is a measure-theoretically rigorous formulation of mismatched JSCC on standard Borel alphabets, with conditions ensuring that arbitrary decoder kernels and Radon–Nikodym information densities are compatible with expectations under the true-channel law and that the resulting mismatched-design rate is independent of the chosen density version of the design channel law.

For block erasure channels (BLECs), the mismatched-design rate and mismatched-design rate-dispersion are simply the usual capacity and dispersion of the true block erasure channel $P_{Y|X}$ as long as $Q_{Y|X}$ has an erasure probability in $(0, 1)$. Hence, when both $P_{Y|X}$ and $Q_{Y|X}$ are block erasure channels, with $Q_{Y|X}$ having erasure probabilities in $(0, 1)$, we show that the family of $Q_{Y|X}$-based source-channel codes achieves the same first- and second-order asymptotic performance as the family of optimal $P_{Y|X}$-matched codes.
We use this observation to formalize a notion of second-order universal source-channel codes and then construct such codes over a regularized set of nonstationary block erasure channels and over the set of stationary Gaussian channels. At a high level, a JSCC scheme is second-order channel-universal over a set of channels $\mathcal{W}$ if its encoder and decoder do not depend on the true $P_{Y|X} \in \mathcal{W}$, yet for every $P_{Y|X} \in \mathcal{W}$, it achieves the same first- and second-order asymptotic performance as the $P_{Y|X}$-matched optimal JSCC scheme. This universal result should be understood as a property of a family of codes, indexed by the channel blocklength $n$
and source blocklength $k$. Our universal results do not imply the existence of a single source-channel code that is optimal for every channel in a given set $\mathcal{W}$. This is because a single source-channel code $U_{n, k}^{(d)}$ in our universal family is associated with an operating point $(n, k)$ and also depends on the target distortion level $d$. Then for a certain channel $P_{Y|X}$ in a channel set $\mathcal{W}$, the code $U_{n, k}^{(d)}$ might be optimal in the sense that the rate $R = k/n$ is the best achievable rate for that channel and the distortion level $d$, while for other channels in $\mathcal{W}$, a higher rate might be achievable, making the code $U_{n, k}^{(d)}$ suboptimal. Even though a single code $U_{n, k}^{(d)}$ need not be optimal for every channel, we show that for any given channel in $\mathcal{W}$, the code $U_{n, k}^{(d)}$ in our universal construction achieves any fixed nonzero excess-distortion probability under a sufficient condition that is asymptotically tight to second-order (Remark \ref{remarkeeusefulprop}). We make the notion of universality rigorous in Section \ref{universalpluginconstruction} and present our universal results in Sections \ref{tiredaf} and \ref{tiredafgaussian}.  

\subsection{Related Work and Contributions}

Mismatched-design frameworks have been studied in several works under a variety of closely related formulations and mostly in the context of channel coding. Second-order asymptotics of the achievable rate based on the normal approximation were studied in \cite{7605463} for the case when a channel code optimized for AWGN channels (i.e., one that uses spherical codebooks and nearest neighbor decoding) is deployed over arbitrary additive (non-Gaussian) noise memoryless channels; first-order results for this same framework were previously derived by Lapidoth \cite{lap_channel_coding}. The "mismatch" in \cite{7605463} and \cite{lap_channel_coding} is that between a Gaussian model and a non-Gaussian model for the additive noise in the channel. However, most works on mismatched-design frameworks have focused on mismatched decoding in the context of channel coding, where the decoding rule is the argmax of a fixed, possibly suboptimal, decoding metric. In this mismatched maximum-metric decoding formulation, first-order asymptotically achievable rates given by the generalized mutual information (GMI) and the lower bound on mismatched (LM) capacity have been derived; see the survey paper \cite{survey_mismatched.decoding} and references therein. In particular, \cite{revisited_paper} derives lower bounds to the mismatch capacity in several settings, which include general alphabet channels with memory and non-single-letter decoding metrics. Second-order asymptotics of achievable rates (second-order coding rates) with mismatched decoding are studied in \cite{scarlett_mismatched_decoding}, featuring the GMI and the LM rate together with the dispersion as the second-order term in the asymptotic expansion \cite[Theorem 5]{scarlett_mismatched_decoding}.

Our mismatched-design formulation is closer in spirit to the mismatched-decoding setting of \cite{revisited_paper}, \cite{survey_mismatched.decoding} and \cite{scarlett_mismatched_decoding}, but differs in several aspects. First, we derive achievability results for JSCC rather than for channel coding, which requires proof techniques different from the random-coding union bound used in \cite[Theorem 1]{scarlett_mismatched_decoding}. As we will see later\footnote{See the discussion surrounding \eqref{differentchannelcodingJSCC}.}, the analysis of a channel-mismatched JSCC decoder differs from that of a mismatched decoder in the channel coding setting. Second, we restrict our attention to a parameterized family of decoding metrics induced by the design channel $Q_{Y|X}$ rather than completely arbitrary decoding metrics. However, our general JSCC construction in Definition \ref{unified_def} allows the decoder-side Gibbs posterior $K_{X|Y}$ to be chosen as a metric-induced Gibbs posterior\footnote{We elaborate on this in our concluding remarks; see \eqref{mentionconcccc}.}. Third, unlike \cite{revisited_paper}, our mismatched-design results are in the finite-blocklength approximation setting similar to the second-order asymptotic result in \cite[Theorem 5]{scarlett_mismatched_decoding}. Unlike \cite[Theorem 5]{scarlett_mismatched_decoding}, our main mismatched-design result (Theorem \ref{thm_finite_block1}) applies to nonstationary channels, specifies an explicit third-order term and is in the JSCC framework instead of the channel coding framework. Our third-order term also improves upon that presented in \cite{kostina_JSCC} for the traditional matched JSCC. Our mismatched-design result recovers the GMI as the envelope over a parameterized family of $Q_{Y|X}$-based JSCC designs. We also highlight the significance of our $Q_{Y|X}$-based mismatched-design architecture in a post-design receiver-only optimization paradigm (see discussion after Theorem \ref{thm_finite_block1}). 

Literature on mismatched JSCC frameworks is relatively sparse, especially in the finite-blocklength regime. Optimal mismatched analysis, or even standard matched analysis, in JSCC in the finite-blocklength regime is harder than in channel coding because an optimal JSCC decoder in the finite-blocklength regime does not generally reduce to a maximum-likelihood (ML) channel decoder with equiprobable messages; a JSCC decoder must generally account for the residual encoded-source redundancy as we will explain later in the discussion section after Theorem \ref{thm_finite_block1}. Zhou, Tan and Motani \cite{mismatchJSCC} studied a JSCC architecture called NN-JSCC that is constrained to use random codebooks (for both source reproduction and channel codewords) that are either i.i.d. Gaussian or spherical. However, instead of nearest neighbor encoding and decoding rules that are optimal for Gaussian sources and channels, they use modified minimum distance and nearest neighbor schemes that serve to maximize the communication rate when transmitting an unknown arbitrary memoryless source over an unknown arbitrary additive noise memoryless channel, although the source power and noise power are known and fixed. 


The NN-JSCC is part of the mismatched-design class of JSCC. Indeed, the fact that the NN-JSCC uses random codebooks optimized for Gaussian channels represents exactly a channel-mismatch design because such codebooks are not necessarily optimal for general additive non-Gaussian noise channels. Hence, results for NN-JSCC characterize the performance under mismatch between a Gaussian model and a non-Gaussian model for the source and channel, while the source power and channel noise power are matched and known \cite[(2)]{mismatchJSCC}. 
The NN-JSCC is "robust" in the sense that for any arbitrary source with fixed source power (or second-moment) $\sigma^2$ and an additive channel with fixed SNR $P$, the first-order asymptotic rate (number of source symbols per channel use) achieved by the NN-JSCC is $C(P)/R_d(\sigma^2)$, where $C(P)$ and $R_d(\sigma^2)$ are the Gaussian capacity and rate-distortion function, respectively. A similar notion of robustness was proved in \cite{lap_channel_coding} for channel codes and in \cite{lap_source_coding} for lossy source codes. 

However, universality is a stronger notion than robustness. Our results on universal source-channel codes are therefore limited to certain classes of channels, namely nonstationary block erasure channels and stationary Gaussian channels, and assume a known source distribution. In particular, our universal JSCC results for Gaussian channels do not include additive non-Gaussian noise channels, but they allow for an arbitrary and unknown noise power unlike NN-JSCC. The latter is significant because, at finite blocklength, a channel-matched JSCC decoder that exploits the nonuniform prior on the codeword indices induced by source encoding generally depends on the channel noise power (see $(\ref{seethisforexampleknownp2})$ for example). In contrast, the conventional ML channel decoder used in separate source-channel coding (SSCC) over an AWGN channel reduces to the ordinary minimum-distance rule and does not require knowledge of the noise power. Hence, a channel-matched joint source-channel code can exhibit a stronger dependence on channel knowledge at the decoder than conventional channel coding. To investigate whether this dependence translates into a first- or second-order asymptotic performance penalty, we study the universal-design class of JSCC.  

The notion of universal codes was first introduced in the source coding literature \cite{davisson}. Universal source codes have two defining characteristics: they do not have knowledge
of the source distribution, and they achieve the same asymptotic
performance as optimal codes that do. For joint source-channel codes, we refer to the analogous first property as \emph{channel-blindness}. For the second property, we strengthen the usual notion from universal source coding: instead of requiring only first-order asymptotic optimality, we require the channel-blind JSCC construction to achieve both first- and second-order asymptotically optimal performance, matching an optimal channel-matched JSCC scheme. This is motivated by the fact that first-order optimality is already achieved by separate source-channel coding, so the study of first-order universal joint source-channel codes 
over a set of channels can simply be reduced to a universal channel coding problem. 

To prove our universal results, we first construct a channel-blind family of codes and then derive achievable normal-approximation bounds for these codes. For stationary Gaussian channels, second-order optimality of the said achievability bounds can be proven by invoking the converse result for stationary channels from \cite{kostina_JSCC}. However, to establish second-order optimality for nonstationary block erasure channels, we derive a new converse result (Theorem \ref{convoghlswhr0vm} and Corollary \ref{corocoro}). Beyond proving the second-order optimality of our universal construction for nonstationary block erasure channels, Theorem \ref{convoghlswhr0vm} is novel in that the normal-approximation bound therein holds uniformly over all $n \geq 1$ and all $p_1, \ldots, p_n \in [0, 1]$, where $n$ is the number of channel uses and $p_1, \ldots, p_n$ are the block erasure probabilities. Such a result has significance given the recent use of synthetic block erasure channels in training JSCC-based semantic communication systems as well as the block erasure channels being used as a model for modern packet networks \cite{homapaper2}, \cite{fayaz2026}. Often in these applications, a single block size $m$ is large so that the 
transmission of a source uses a small number of relatively large packets, which represents a small number $n$ of independent block erasure
channel uses. This makes the $n$-uniform normal-approximation bound presented in Theorem \ref{convoghlswhr0vm} more relevant than standard JSCC normal-approximation bounds that require both the channel blocklength $n$ and source blocklength $k$ to be sufficiently large.    

All source-channel code constructions and achievability results in this paper concern randomized codes, where the common randomness between the encoder and decoder is a marked Poisson point process. In particular, the excess-distortion probabilities of these codes are averaged over this shared randomness in addition to the source and channel randomness. For the $Q_{Y|X}$-based constructions, the distribution of the randomized code does not depend on the true channel $P_{Y|X}$. For each fixed true channel, the standard averaging argument implies the existence of some deterministic realization of a code whose excess-distortion probability is no larger than the ensemble average. However, the selected realization may depend on the true channel. Consequently, this pointwise argument does not establish a deterministic $Q_{Y|X}$-based code whose design is independent of $P_{Y|X}$. Similarly, 
for our universal results in Theorems \ref{BEC_particularization} and \ref{thm_finite_block1_awgn_univ}, choosing a good deterministic realization separately for each true channel could make the chosen realization of the code depend on that channel, violating channel-blindness. Establishing deterministic channel-blind counterparts would require a simultaneous derandomization argument over the relevant class of channels and is beyond the scope of this paper. In general, there is no
simple way to deduce the existence of a universal family of  deterministic codes from a universal family of random codes. In the universal source coding literature, a few subtle derandomization techniques are used instead; see, e.g., \cite[proof of Theorem 3]{minimaxRD} and \cite[proof of Theorem 2]{mryang}.

The remainder of the paper is organized as follows. Section \ref{dj80dnkfd} provides the notation, definitions and regularity assumptions that hold throughout the paper. Section \ref{unified_construction} provides the general one-shot JSCC construction as well as the specialization of the general one-shot construction to obtain a parameterized family of $Q_{Y|X}$-based source-channel codes. Section \ref{slqmda_section} provides  the definitions of mismatched-design rate and mismatched-design rate-dispersion. Section \ref{qyxbasedsourcechannelcodesresults} presents the mismatched-design results in the memoryless block coding framework for $Q_{Y|X}$-based source-channel codes when the true (unknown) channel is $P_{Y|X}$. Section \ref{3i8rt48} gives the definition of second-order universality and constructs second-order universal families of source-channel codes over specific sets of channels. Section \ref{conclusion} gives the concluding remarks and possible future directions. Appendices \ref{bxc_proof}--\ref{mytwosidedbndprop_proof} contain the technical proofs, while 
Appendix \ref{examples_of_sources} provides a few examples of source distributions that satisfy the regularity conditions of our main results.

\section{Preliminaries \label{prelims_section}}

\subsection{Notation, Regularity Assumptions and Basic Definitions \label{dj80dnkfd}}

Let $\mathcal{S}, \mathcal{X}, \mathcal{Y}$ and $\mathcal{Z}$ be the source, channel input, channel output and source reconstruction alphabets, respectively. Focusing on the one-shot paradigm, we will let these alphabets be arbitrary standard Borel spaces, with no imposition of a Cartesian product structure. We will write $\mathcal{B}(\mathcal{S})$ to denote the Borel $\sigma$-algebra on $\mathcal{S}$. When needed for certain results later in the paper, we will instantiate the alphabets as $n$-fold Cartesian products.     

Throughout the paper, we write $\log$ to denote the natural logarithm and $\exp(x)$ to mean $e$ to the power $x$. We write $Q$ to denote the complementary CDF of the standard normal distribution, and $Q^{-1}$ denotes its inverse. Let $(x)^+ \coloneqq \max \{ 0,x\}$. For any input distribution $Q \in \mathcal{P}(\mathcal{X})$ and a channel $W$ from $\mathcal{X}$ to $\mathcal{Y}$, we write $Q \times W$ for the joint distribution on $\mathcal{X} \times \mathcal{Y}$ and $QW$ for the induced marginal distribution on $\mathcal{Y}$.  
Given a kernel $K_{X|Y}$ from $\mathcal{Y}$ to $\mathcal{X}$ and a distribution $K_{\overline{X}} \in \mathcal{P}(\mathcal{X})$, we denote the information density as 
\begin{align}
    \imath_{K_{X|Y}, K_{\overline{X}}}(y, x) \coloneqq 
        \log \left(\frac{dK_{X|Y=y}}{d K_{\overline{X}} }(x)\right), \label{conven}
\end{align}
where we define the RHS only for $K_{\overline{X}}$-a.e. $x$ and for $y \in \mathcal{Y}$ such that $K_{X|Y = y} \ll K_{\overline{X}}$. We use the convention $\log 0 = -\infty$. 
Note that the definition $(\ref{conven})$ of information density does not require that there exists a distribution $K_{Y}$ such that $K_{\overline{X}}$ is the $\mathcal{X}$-marginal induced by the joint $K_Y \times K_{X|Y}$. Thus, $K_{X|Y}$ need not arise as the posterior induced by the prior $K_{\overline{X}}$ and any auxiliary channel from $\mathcal{X}$ to $\mathcal{Y}$.    

A lossy source-channel code is a pair of (possibly randomized) mappings $\operatorname{f} : \mathcal{S} \to \mathcal{X}$ and $\operatorname{g} : \mathcal{Y} \to \mathcal{Z}$. A distortion
measure $\operatorname{d} : \mathcal{S} \times \mathcal{Z} \to [0, \infty)$ is used to quantify the performance of the lossy code. The code does not rely on feedback.

\begin{definition}
A source-channel code $(\operatorname{f}, \operatorname{g})$ is called a $(d, \epsilon)$ source-channel code with respect to $\{P_S, \operatorname{d}, P_{Y|X} \}$ if \\$\mathbb{P}(\operatorname{d}(S, \operatorname{g}(Y)) > d) \leq \epsilon$, where $X = \operatorname{f}(S)$, $S \sim P_S$ and $Y|X \sim P_{Y|X}$. 
\label{wocostdef}
\end{definition}

For any channel $Q_{Y|X}$ from $\mathcal{X}$ to $\mathcal{Y}$, define  
\begin{align}
    C(Q_{Y|X}) \coloneqq \sup_{Q_X} I(Q_X, Q_{Y|X}), \label{cap}
\end{align}
where $I(Q_X, Q_{Y|X})$ denotes the mutual information. For any source distribution $P_S \in \mathcal{P}(\mathcal{S})$, distortion level $d > 0$ and distortion measure $\operatorname{d} : \mathcal{S} \times \mathcal{Z} \to [0, \infty)$, define 
\begin{align}
    R_d(P_S) &\coloneqq \inf_{\substack{P_{Z|S}:\\\mathbb{E}[\operatorname{d}(S, Z)] \leq d}} I(P_S, P_{Z|S}). \label{64}
\end{align}
We make the following source-distortion regularity  assumptions (cf. \cite[p. 2547]{kostina_JSCC} and \cite[p. 3310]{kostina_SC}) throughout the paper.

\textbf{Source-distortion regularity  assumptions:}  These are regularity conditions on the source distribution $P_S$, distortion level $d$ and distortion measure $\operatorname{d}$.
\begin{enumerate}[label=A\arabic*., ref=A\arabic*]
    \item \label{assumpr1_source} The infimum in $(\ref{64})$ is achieved by a unique $P_{Z|S}^*$.
    \item \label{assumpr2_source} The distortion level $d > d_{\min}$, where  
    $d_{\min} \coloneqq \inf \left \{ d : R_d(P_S) < \infty\right \}$.
    \item \label{assumpr3_source} There exists a finite set $E \subset \mathcal{Z}$ such that 
    \begin{align*}
        \mathbb{E}_{P_S}\left [ \min_{z \in E} \operatorname{d}(S, z) \right] < \infty. 
    \end{align*}
\end{enumerate}

For our mismatched-design definitions and mismatched-design result in Theorem \ref{thm_finite_block1}, we make the following additional assumptions: 

\textbf{Channel-pair regularity assumptions:} These are regularity conditions on the design channel $Q_{Y|X}$ and the true channel $P_{Y|X}$: 
\begin{enumerate}[start=4, label=A\arabic*., ref=A\arabic*]
    \item \label{assumpr1_channels} The supremum in $(\ref{cap})$ is achieved by a unique $Q_{X}^*$.
     \item \label{assumpr2_channels}  $C(P_{Y|X}) < \infty$.
    \item \label{assumpr3_channels} There exists a $\sigma$-finite measure $\mu$ on $\mathcal{Y}$ such that for $Q_X^*$-almost every $x$, $P_{Y|X = x} \ll \mu$ and $Q_{Y|X = x} \ll \mu$.
\end{enumerate}

\begin{remark}
We enforce Assumption \ref{assumpr3_channels} so that $P_{Y|X = x}$ and $Q_{Y|X = x}$ can be described by densities/PMFs relative to the same reference measure. This will yield an important simplification in the mismatched setting described later. Most channel pairs $(Q_{Y|X}, P_{Y|X})$ of interest, e.g., AWGN and discrete memoryless channels (DMC), satisfy Assumption \ref{assumpr3_channels} regarding the existence of a common  dominating measure $\mu$. For AWGN channels, this will be the Lebesgue measure; for DMCs, this will be the counting measure. In light of Assumption \ref{assumpr3_channels}, we define 
\begin{align*}
    q_\mu(y|x) &\coloneqq \frac{d Q_{Y|X=x}}{d \mu}(y),\\
    p_\mu(y|x) &\coloneqq \frac{d P_{Y|X=x}}{d \mu}(y)
\end{align*}
as arbitrary $\mu$-density versions of $Q_{Y|X}$ and $P_{Y|X}$, respectively. These definitions are unique up to $(Q_X^* \times \mu )$-null sets.    
\end{remark}

Let $Q_Y^* = Q_X^*Q_{Y|X}$. Then define 
\begin{align}
    V(Q_{Y|X}) \coloneqq \operatorname{Var}_{Q_{X}^* \times Q_{Y|X}}\left( \imath_{Q_{Y|X}, Q_Y^*}\left(X, Y \right) \right). \label{mismatched_setting}
\end{align}
Let 
\begin{align}
    P_Z^*(A) \coloneqq \int_{\mathcal{S}} P_{Z|S}^*(A|s) P_S(ds), \quad A \in \mathcal{B}(\mathcal{Z}). \label{pzstar}
\end{align}
Define the $\operatorname{d}$-tilted information \cite{kostina_SC} in a source realization $s$ with source distribution $P_S$ and distortion level $d > d_{\min}$ as  
\begin{align}
    \jmath_d(s, P_S, \operatorname{d}) &\coloneqq  \log \left( \frac{1}{\mathbb{E}\left [ \exp \left( \lambda^* \left(d -  \operatorname{d}(s, Z)\right) \right) \right ]}\right), \label{oneshotdtiltedinfo}
\end{align}
where the expectation above is w.r.t. $Z \sim P_{Z}^*$ and $\lambda^* = -\partial_d R_d(P_S)$. Note that $\mathbb{E}_{P_S}\left [ \jmath_d(S, P_S, \operatorname{d})\right] = R_d(P_S)$. We define 
\begin{align}
    V_d(P_S) \coloneqq \operatorname{Var}_{P_S}\left(  \jmath_d(S, P_S, \operatorname{d}) \right).
\end{align}  
Lastly, define the distortion ball $B_d(s) \coloneqq \left \{ z \in \mathcal{Z} : \operatorname{d}(s, z) \leq d \right \}.$ 
\begin{remark}
In the block coding setting with a memoryless, stationary channel $Q_{Y|X}^{\otimes n}$ and under certain assumptions, $C(Q_{Y|X})$ and $V(Q_{Y|X})$ are the capacity and dispersion, respectively, of the channel \cite{5452208}. Similarly, for a fixed-length compression of a memoryless, stationary source $S^k \sim P_S^{\otimes k}$ with separable distortion, $R_d(P_S)$ and $V_d(P_S)$ are the rate-distortion function and rate-dispersion, respectively \cite{kostina_SC}.          
\end{remark}

\textit{Marked Poisson point processes:} Our source-channel code construction uses common randomness (fixed at design time) between the encoder and decoder in the form of a marked Poisson point process, which we briefly describe. The Lebesgue measure restricted
to the set $S \subset \mathbb{R}$ is denoted as $\lambda_S$. Let $P_X \in \mathcal{P}(\mathcal{X})$ be a probability measure. Let $\{\bar{X}_i, T_i\}_{i=1}^\infty$ denote a marked Poisson point process with intensity measure $P_X \times \lambda_{\mathbb{R}_{\geq 0}}$, where the points are ordered so that $T_1<T_2<\cdots$. Here $T_i$ is the time of the $i$th arrival, and $\bar{X}_i \in \mathcal{X}$ is the mark attached to that arrival. The two defining properties of this process are as follows. First, the number of arrivals occurring in any time interval $[t_1, t_2]$ with marks falling in a subset $A \subset \mathcal{X}$ follows a Poisson distribution with mean $P_X(A)(t_2 - t_1)$. Second, the number of arrivals across any non-overlapping time intervals or disjoint mark sets are independent random variables. Equivalently, $T_1,T_2,\ldots$ are the arrival times of a rate-$1$ Poisson process, and each arrival independently receives a mark $\bar{X}_i$ drawn from $P_X$.

\subsection{One-Shot Joint Source-Channel Coding \label{unified_construction}}

Definition \ref{unified_def} below presents a general one-shot random  $(P_Z^*, Q_X, K_{X|Y})$ JSCC construction that will be specialized later in the paper via specific choices for $P_Z^*, Q_X$ and $K_{X|Y}$.   

\begin{definition}
   Fix any $Q_X \in \mathcal{P}(\mathcal{X})$ and $K_{X|Y}$ such that $K_{X|Y=y} \ll Q_X$ for $P_Y$-a.e. $y$, where $P_Y = Q_X P_{Y|X}$ is the marginal distribution induced by the joint $Q_X \times P_{Y|X}$. Then the source-channel code given by $(\ref{unif_enc})$
   and $(\ref{unif_dec})$ below is called a $(P_Z^*, Q_X, K_{X|Y})$ source-channel code. 
   \label{unified_def}
\end{definition}

Let $P_Z^* \in \mathcal{P}(\mathcal{Z})$ be the source reproduction distribution defined in $(\ref{pzstar})$. Let $\{ (\bar{X}_i, \bar{Z}_i), T_i  \}_{i \in \mathbb{N}}$ be the points of a marked Poisson process with intensity measure $Q_X \times P_Z^* \times \lambda_{\mathbb{R}_{\geq 0}}$ and independent of the source $S \sim P_S$. Here, each mark $(\bar{X}_i, \bar{Z}_i)$ is a pair of channel and reproduction codewords which are independently generated according to the product measure $Q_{X} \times P_{Z}^*$. We assume that both the encoder and decoder of the JSCC have access to this Poisson point process as shared randomness (roughly speaking, this is analogous to the conventional framework of shared source reproduction $\{\bar{Z}_i \}_{i \in \mathbb{N}}$ and channel codebooks $\{\bar{X}_i \}_{i \in \mathbb{N}}$ between the encoder and decoder). 
Given a source realization $S =s$, define $\rho(s) \coloneqq P_Z^*(B_d(s))$. 

\textbf{Encoder Operation:} Given $S = s$, the encoder transmits 
\begin{align}
    X = \bar{X}_{i^\star(s)}, \text{ where } i^\star(s) = \begin{cases}
        \argmin_{i:\bar{Z}_i \in B_d(s)} T_i & \text{ if } \rho(s) > 0,\\
        \argmin_i  T_i & \text{ otherwise.}
    \end{cases} \label{unif_enc} 
\end{align}

\textbf{Decoder Operation:} 
For $P_Y$-a.e. $y$, the decoder is specified as 
\begin{align}
    \widehat{Z} = \bar{Z}_{i^*(y)}, \text{ where } i^*(y) = \argmax_i \left [ \log\left( \frac{d K_{X|Y=y}}{d Q_X}(\bar{X}_i) \right) - \log(T_i) \right ] \label{unif_dec} 
\end{align}  
and arbitrarily for $y$ in the $P_Y$-null set. In case of ties, the smallest index is chosen.  
\begin{remark}
$K_{X|Y}$ is a decoder-side kernel and need not be induced by the true channel $P_{Y|X}$ or by any auxiliary channel with input prior $Q_X$. This freedom allows us to use different choices of Gibbs posteriors for $K_{X|Y}$. Proposition \ref{bxc} below extends the one-shot JSCC bound \cite[Theorem 4]{alternative_oneshot_JSCC} to the general JSCC construction in Definition \ref{unified_def}, which permits an arbitrary Gibbs posterior $K_{X|Y}$ for the decoder-side kernel. The technical condition on $K_{X|Y}$ under which such an extension is possible is identified in Definition \ref{unified_def}.    
\end{remark}

\begin{proposition}[{cf. \cite[Theorem 4]{alternative_oneshot_JSCC}}]
Assume that $Q_X \times P_{Y|X} \ll Q_X \times P_Y$, where $P_Y = Q_X P_{Y|X}$. Then the ensemble-average excess-distortion probability $\overline{P}_e(d)$ with respect to $\{P_S, \operatorname{d}, P_{Y|X} \}$ of the $(P_Z^*, Q_X, K_{X|Y})$ source-channel code satisfies
    \begin{align}
        \overline{P}_e(d) &\leq \mathbb{E}\left [ \left(1 + P_Z^*(B_d(S)) \exp\left( \imath_{K_{X|Y}, Q_X}(Y, X) \right) \right)^{-1}  \right], \label{ljskfj777}
    \end{align}
    where the expectation above is w.r.t. $P_S, Q_X$ and $P_{Y|X}$. 
\label{bxc}
\end{proposition}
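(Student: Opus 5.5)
The plan is to follow the Poisson functional representation argument of \cite[Theorem 4]{alternative_oneshot_JSCC}. We condition on the source realization $S=s$ and on the channel output, and use the independence and thinning properties of the marked Poisson process. We only need the normalization $\int \frac{dK_{X|Y=y}}{dQ_X}\,dQ_X \le 1$; the posterior structure of the kernel is never used. This is exactly why an arbitrary decoder-side kernel $K_{X|Y}$ with $K_{X|Y=y}\ll Q_X$ suffices.

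\textbf{Step 1: structure of the encoder's selection.} Fix $s$ with $\rho=\rho(s)>0$; if $\rho(s)=0$ the integrand in $(\ref{ljskfj777})$ equals $1$ and there is nothing to prove. Split the process by the marking theorem into two independent Poisson processes. The first, $\Pi_{\mathrm{in}}$, has $\bar Z_i\in B_d(s)$ and intensity $\rho\, Q_X\times P^*_{Z}|_{B_d(s)}/\rho \times\lambda$. The second, $\Pi_{\mathrm{out}}$, has $\bar Z_i\notin B_d(s)$ and intensity $(1-\rho)\,Q_X\times P^*_{Z}|_{B_d(s)^c}/(1-\rho)\times \lambda$. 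The encoder picks the first arrival $(X,\bar Z_{i^\star},T^\star)$ of $\Pi_{\mathrm{in}}$. Hence:
\begin{itemize}
\item $T^\star\sim\mathrm{Exp}(\rho)$;
\item $X\sim Q_X$ and $\bar Z_{i^\star}\in B_d(s)$, with $X$ independent of $T^\star$;
\item $\Pi_{\mathrm{out}}$ is independent of $(X,T^\star)$.
\end{itemize}
The channel output $Y$ is drawn from $P_{Y|X=X}$, so given $S=s$ we have $(X,Y)\sim Q_X\times P_{Y|X}$, and the pair $(X,Y)$ is independent of $(T^\star,\Pi_{\mathrm{out}})$.

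\textbf{Step 2: bounding the error event.} Write $k_y(x)=\frac{dK_{X|Y=y}}{dQ_X}(x)$. If the decoder selects any point of $\Pi_{\mathrm{in}}$, the reproduction lies in $B_d(s)$ and no excess distortion occurs. So an error requires some point $j\in\Pi_{\mathrm{out}}$ whose score is at least that of the selected point:
\begin{align*}
\log k_Y(\bar X_j)-\log T_j \ \ge\ \log k_Y(X)-\log T^\star, \quad\text{i.e.}\quad T_j\le T^\star k_Y(\bar X_j)/k_Y(X).
\end{align*}
Condition on $T^\star=t$, $X$ and $Y$. The number of such points in $\Pi_{\mathrm{out}}$ is Poisson with mean at most
\begin{align*}
(1-\rho)\,\frac{t}{k_Y(X)}\int k_Y\,dQ_X \;\le\; \frac{(1-\rho)\,t}{k_Y(X)} .
\end{align*}
Therefore the conditional error probability is at most $1-\exp(-at)$ with $a=(1-\rho)/k_Y(X)$. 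Averaging over $t\sim\mathrm{Exp}(\rho)$ gives
\begin{align*}
\frac{a}{a+\rho}=\frac{1-\rho}{1-\rho+\rho\, k_Y(X)}\le \bigl(1+\rho\, e^{\imath_{K_{X|Y},Q_X}(Y,X)}\bigr)^{-1}.
\end{align*}
Finally, take the expectation over $(X,Y)\sim Q_X\times P_{Y|X}$ and $S\sim P_S$.

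\textbf{Main obstacle: rigor of the conditioning and the a.e.\ definitions.} First, one must justify that conditioning on the first arrival of $\Pi_{\mathrm{in}}$ leaves $\Pi_{\mathrm{out}}$ an independent Poisson process. This follows from independence of thinned processes and requires no Palm calculus, since we discard the remaining points of $\Pi_{\mathrm{in}}$, which only help. Second, $k_y$ is defined only for $P_Y$-a.e.\ $y$ and $Q_X$-a.e.\ $x$. The hypothesis $Q_X\times P_{Y|X}\ll Q_X\times P_Y$ guarantees that the true joint law of $(X,Y)$ puts no mass on the exceptional set, so $\imath_{K_{X|Y},Q_X}(Y,X)$ is a.s.\ well-defined. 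The bound is also unaffected by the choice of version, and the case $k_Y(X)=0$ simply gives the trivial bound $1$. Ties in the decoder occur with probability zero because the $T_j$ are continuous, so the smallest-index rule does not matter.
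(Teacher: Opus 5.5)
Your argument is correct, but it reaches the bound by a different route than the paper.

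The paper identifies the encoder and decoder as Poisson functional representations of $Q_X\times P_{\check Z;s}$ (with $P_{\check Z;s}$ the normalized restriction of $P_Z^*$ to $B_d(s)$) and of $K_{X|Y=y}\times P_Z^*$. It bounds the excess-distortion probability by $\mathbb{P}(\rho(S)=0)+\mathbb{P}(\rho(S)>0,\ \widehat Z\neq\check Z)$. It then invokes the conditional Poisson matching lemma \cite[Lemma 2]{alternative_oneshot_JSCC}, whose output $1-\bigl(1+\frac{1}{\rho(S)k_Y(X)}\bigr)^{-1}$ equals $(1+\rho(S)k_Y(X))^{-1}$.

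You instead use the restriction theorem to split the process into independent in-ball and out-of-ball parts, and compute directly with the first arrival of $\Pi_{\mathrm{in}}$. Your use of the independence of $\Pi_{\mathrm{out}}$ from $(X,T^\star,Y)$ is legitimate, because $Y$ depends on the shared randomness only through $X$ and independent channel noise. In the paper's proof, the matching lemma handles exactly this dependence abstractly.

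The two routes buy different things:
\begin{itemize}
\item The paper's route is modular: the same lemma and functional-representation template apply verbatim to other one-shot problems.
\item Your route is self-contained and slightly sharper. You count as errors only decoder selections outside $B_d(s)$, rather than any selection different from the encoder's point. This gives the intermediate bound $\frac{1-\rho}{1-\rho+\rho k_Y(X)}$, which is strictly smaller than $(1+\rho k_Y(X))^{-1}$ whenever $\rho k_Y(X)>0$, and is exact when $K_{X|Y}=Q_X$.
\item Your route also makes explicit that only the normalization $\int k_y\,dQ_X=1$ of the decoder kernel is used.
\end{itemize}

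Your remark on ties is unnecessary, since your error event already uses a non-strict inequality.
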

\textit{Proof:} The proof of Proposition \ref{bxc} is given in Appendix \ref{bxc_proof}. The proof first establishes that the encoder and decoder functions in $(\ref{unif_enc})$ and $(\ref{unif_dec})$ are equivalent to Poisson functional representations of certain conditional probability measures. The proof then applies the conditional Poisson matching lemma \cite[Lemma 2]{alternative_oneshot_JSCC}.   

The proof of Proposition \ref{bxc} is similar to that of \cite[Theorem 4]{alternative_oneshot_JSCC}. But we give it for completeness and to explain the application of the technical condition given in Definition 
\ref{unified_def} and the use of an arbitrary decoder kernel. Note that $\imath_{K_{X|Y}, Q_X}(y, x)$ is defined for $(Q_X \times P_Y)$-a.e. $(x, y)$ because of
the definition of the information density in \eqref{conven} and the technical condition on $K_{X|Y}$ in Definition \ref{unified_def}. On the other hand, the expectation in \eqref{ljskfj777} involves $(X,Y) \sim Q_X \times P_{Y|X}$. Hence, the absolute continuity assumption in Proposition \ref{bxc} ensures that the expectation in 
\eqref{ljskfj777} is well-defined. 

We now formulate the $Q_{Y|X}$-based source-channel code by directly specializing Definition \ref{unified_def}.  
\begin{definition}
    Fix any $t > 0$ and a design channel $Q_{Y|X}$. Let $Q_X^* = \argmax_Q I(Q, Q_{Y|X})$, and let $P_Y^*$ denote the $\mathcal{Y}$-marginal induced by the joint $Q_X^* \times P_{Y|X}$. Assume that  
    \begin{align}
     0 < \int_\mathcal{X} q_\mu(y|u)^t Q_X^*(du) < \infty   
\end{align}
for $P_Y^*$-a.e. $y$. Define $Q^{\sim t}_{X|Y}(\cdot|y) \in \mathcal{P}(\mathcal{X})$ as
\begin{align}
    Q^{\sim t}_{X|Y}(A|y) &\coloneqq \frac{ \int_A q_\mu(y|x)^t Q_X^*(dx)  }{\int_\mathcal{X} q_{\mu}(y|u)^t Q_X^*(du)  }  \label{QXYkernel}
\end{align}
for $P_Y^*$-a.e. $y$ and arbitrarily otherwise, where $A \in \mathcal{B}(\mathcal{X})$. We then define a $(t, Q_{Y|X})$-based source-channel code as the $(P_Z^*, Q_X^*, Q^{\sim t}_{X|Y})$ source-channel code from Definition \ref{unified_def}.
\label{qyxbasedcodedef}
\end{definition}

The following corollary is obtained by specializing the result in Proposition \ref{bxc} for a $(t, Q_{Y|X})$-based code.   
\begin{corollary}
The ensemble-average excess-distortion probability $\overline{P}_e(d)$ with respect to $\{P_S, \operatorname{d}, P_{Y|X} \}$ of the $(t, Q_{Y|X})$-based source-channel code satisfies
    \begin{align}
        \overline{P}_e(d) &\leq \mathbb{E}\left [ \left(1 + P_Z^*(B_d(S)) \exp\left( \imath_{Q_{X|Y}^{\sim t}, Q_X^*}(Y, X) \right) \right)^{-1}  \right], \label{g2vn}
    \end{align}
    where the expectation above is w.r.t. $P_S, Q_X^*$ and $P_{Y|X}$. 
\label{bc}
\end{corollary}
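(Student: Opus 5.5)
The plan is to apply Proposition \ref{bxc} with $Q_X = Q_X^*$ and $K_{X|Y} = Q^{\sim t}_{X|Y}$. By Definition \ref{qyxbasedcodedef}, the $(t, Q_{Y|X})$-based code is exactly the $(P_Z^*, Q_X^*, Q^{\sim t}_{X|Y})$ code of Definition \ref{unified_def}. The bound \eqref{g2vn} is then literally \eqref{ljskfj777} under this substitution, with $P_Y$ replaced by $P_Y^* = Q_X^* P_{Y|X}$. The work is therefore to verify the two hypotheses needed to invoke Proposition \ref{bxc}.

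\emph{First hypothesis: the kernel condition of Definition \ref{unified_def}.} This requires $Q^{\sim t}_{X|Y=y} \ll Q_X^*$ for $P_Y^*$-a.e. $y$. This follows directly from \eqref{QXYkernel}. For $P_Y^*$-a.e. $y$ the normalizer $c_t(y) \coloneqq \int_\mathcal{X} q_\mu(y|u)^t Q_X^*(du)$ lies in $(0,\infty)$. Hence $Q^{\sim t}_{X|Y}(\cdot|y)$ is a probability measure, and it has the explicit density
\begin{align*}
\frac{dQ^{\sim t}_{X|Y=y}}{dQ_X^*}(x) = \frac{q_\mu(y|x)^t}{c_t(y)}.
\end{align*}
Consequently $\imath_{Q^{\sim t}_{X|Y}, Q_X^*}(y,x) = t\log q_\mu(y|x) - \log c_t(y)$ for $(Q_X^*\times P_Y^*)$-a.e. $(x,y)$. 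One should check that $(x,y)\mapsto q_\mu(y|x)$ can be taken jointly measurable. This holds because $\mathcal{X}$ and $\mathcal{Y}$ are standard Borel, $\mu$ is $\sigma$-finite, and $x \mapsto Q_{Y|X=x}$ is a kernel, so a jointly measurable Radon--Nikodym derivative exists (e.g., via a Doob-type martingale construction). With this, $c_t$ is measurable and the kernel is well defined.

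\emph{Second hypothesis: $Q_X^*\times P_{Y|X} \ll Q_X^*\times P_Y^*$.} I expect this to be the only step with real content. I would use Assumption \ref{assumpr3_channels}. Since $P_{Y|X=x} \ll \mu$ for $Q_X^*$-a.e. $x$, the measure $P_Y^*$ has the $\mu$-density $p^*(y) = \int p_\mu(y|u)\,Q_X^*(du)$. The joint measure $Q_X^*\times P_{Y|X}$ has density $p_\mu(y|x)$ with respect to $Q_X^*\times\mu$. Now take a set $A$ with $(Q_X^*\times P_Y^*)(A) = 0$. Then $p^*(y) = 0$ for $(Q_X^*\times\mu)$-a.e. $(x,y)\in A$. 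On the set $\{p^* = 0\}$ we have $p_\mu(y|x) = 0$ for $(Q_X^*\times\mu)$-a.e. $(x,y)$, by Fubini applied to $\int p_\mu(y|u)\,Q_X^*(du) = 0$. Hence $(Q_X^*\times P_{Y|X})(A) = 0$. (Alternatively, this absolute continuity holds for any channel on standard Borel spaces via the disintegration of $Q_X^*\times P_{Y|X}$ against its marginal $P_Y^*$; I would cite whichever is cleaner.)

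With both hypotheses verified, Proposition \ref{bxc} gives \eqref{g2vn} immediately. The expectation in \eqref{g2vn} is well defined because the information density is defined $(Q_X^*\times P_Y^*)$-a.e., and by the absolute continuity just shown this extends to $(Q_X^*\times P_{Y|X})$-a.e. pairs.
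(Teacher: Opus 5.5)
Your proposal is correct and matches the paper, which obtains the corollary by directly specializing Proposition \ref{bxc} to $Q_X=Q_X^*$ and $K_{X|Y}=Q^{\sim t}_{X|Y}$ and leaves the hypothesis checks implicit; your Fubini argument via Assumption \ref{assumpr3_channels} correctly supplies $Q_X^*\times P_{Y|X}\ll Q_X^*\times P_Y^*$. Discard the parenthetical alternative, though: that absolute continuity fails for general channels on standard Borel spaces (e.g., $P_{Y|X=x}=\delta_x$ with $Q_X^*$ uniform on $[0,1]$ puts all its mass on the diagonal, which is $(Q_X^*\times P_Y^*)$-null), so it genuinely needs a common $\sigma$-finite dominating measure as in Assumption \ref{assumpr3_channels} (or finiteness of $I(Q_X^*,P_{Y|X})$, which Assumption \ref{assumpr2_channels} also provides).
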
 

\subsection{Single-Letter Quantities for Mismatched-Design Analysis \label{slqmda_section}}

When we later carry out refined asymptotic analysis in the memoryless block coding setting based on the normal approximation, then under certain conditions, the \emph{mismatched-design rate} (defined below) is the single-letter quantity that features in the first-order term in the asymptotic expansion. The second-order term in the asymptotic expansion features the single-letter quantity which we call the \emph{mismatched-design rate-dispersion}.

\begin{definition}
Consider any two channels $Q_{Y|X}$ and $P_{Y|X}$ satisfying Assumptions \ref{assumpr1_channels}\,-\,\ref{assumpr3_channels}. Let $Q_{X}^* = \argmax_Q I(Q, Q_{Y|X})$ and $P_Y^* = Q_X^* P_{Y|X}$. Let $t > 0$ such that 
\begin{align}
    0 < \int_\mathcal{X} q_{\mu}(y|u)^t Q_X^*(du) < \infty \text{ for } P_Y^*\text{-a.e. } y. \label{thisimpliesit}  
\end{align}
We then define the \emph{mismatched-design rate} and the \emph{mismatched-design rate-dispersion} as
\begin{align}
    C_{t}(P_{Y|X} \| Q_{Y|X}) &\coloneqq \left(\mathbb{E}_{Q_{X}^* \times P_{Y|X}}\left[ \imath_{Q_{X|Y}^{\sim t}, Q_X^*}\left(Y, X \right) \right] \right)^+,
     \label{gccn}\\
     V_{t}(P_{Y|X} \| Q_{Y|X}) &\coloneqq \operatorname{Var}_{Q_{X}^* \times P_{Y|X}}\left( \imath_{Q_{X|Y}^{\sim t}, Q_X^*}\left(Y, X \right) \right)
    \label{mismatched_disp_def},
\end{align}
respectively. 
\label{def_mismatched_rate}
\end{definition}

A few remarks about $(\ref{gccn})$ are in order: 
\begin{itemize}
\item \textit{Role of Condition \eqref{thisimpliesit}:} This condition ensures that $(a)$ $Q_{X|Y = y}^{\sim t}$ is a valid probability measure for $P_Y^*$-a.e. $y$, and $(b)$ $Q_{X|Y = y}^{\sim t} \ll Q_X^*$ for $P_Y^*$-a.e. $y$.  The latter property is the absolute-continuity condition needed to invoke the information-density definition in \eqref{conven}. Note also that the arbitrary definition of
$Q_{X|Y=y}^{\sim t}$ on the 
$P_Y^*$-null set does not affect the expectation in
\eqref{gccn}.
\item \textit{Independence from the density version:}
Assumption \ref{assumpr3_channels} implies
$P_Y^*\ll\mu$. If $q_\mu$ and $\bar q_\mu$ are two $\mu$-density versions of the same design
channel $Q_{Y|X}$, then
\[
q_\mu(y|x)=\bar q_\mu(y|x)
\quad \text{ for } \quad 
(Q_X^*\times\mu)\text{-a.e. }(x,y).
\]
Consequently, the normalizing integrals and the kernels
constructed from $q_\mu$ and $\bar q_\mu$ through
\eqref{QXYkernel} agree for $\mu$-a.e.\ $y$, and hence for
$P_Y^*$-a.e.\ $y$. Thus, the information density and its
expectation in \eqref{gccn} do not depend on the selected
$\mu$-density version of the design channel. Moreover, since $t>0$, Condition \eqref{thisimpliesit} implies
$P_Y^*\ll Q_Y^*$, where $Q_Y^*=Q_X^*Q_{Y|X}$.
\item \textit{What can fail without Assumption
\ref{assumpr3_channels}:}
Suppose that $\mu$ dominates the design channel
$Q_{Y|X=x}$ but does not dominate the true channel
$P_{Y|X=x}$. There may then exist a $\mu$-null set
$A\subseteq\mathcal Y$ such that $P_Y^*(A)>0$, while
$Q_Y^*(A)=0$. The values of the density version
$q_\mu(y|x)$ on $A$ can be modified arbitrarily without
changing $Q_{Y|X}$. Such modifications can
nevertheless change the normalizing integral in
\eqref{QXYkernel}, the resulting kernel
$Q_{X|Y=y}^{\sim t}$ for $y\in A$, and consequently the
expectation in \eqref{gccn}. Thus, without Assumption
\ref{assumpr3_channels}, Condition \eqref{thisimpliesit} alone
does not guarantee that
$C_t(P_{Y|X}\|Q_{Y|X})$ is a well-defined functional of the
channel pair.
\end{itemize}
Having established the preceding density version-invariance under
Assumption \ref{assumpr3_channels}, we henceforth suppress the
notation $q_\mu(y|x)$ and, whenever a pointwise likelihood is
intended, write $Q_{Y|X}(y|x)$ for the fixed $\mu$-density
$q_\mu(y|x)$. A few more discussion points about the mismatched-design rate definition in \eqref{gccn} are as follows:  
\begin{itemize}  
    \item Under Assumption \ref{assumpr3_channels}, a necessary condition\footnote{See \cite[(2.46)]{survey_mismatched.decoding} for a similar argument} for $C_{t}(P_{Y|X} \| Q_{Y|X}) > 0$ is
\begin{align}
    P_{Y|X}(\cdot|x) \ll Q_{Y|X}(\cdot|x) \text{ for } Q_X^*\text{-almost every } x. \label{nec_for_>zero}   
\end{align}  
\item We have $C_{t}(P_{Y|X} \| Q_{Y|X}) < \infty$. This follows from Assumption \ref{assumpr2_channels} and the fact that 
\begin{align}
        C_{t}(P_{Y|X} \| Q_{Y|X}) &= \left(I(Q_X^*, P_{Y|X}) - D(P_{X|Y} \| Q_{X|Y}^{\sim t} | P_Y^*)\right)^+, \label{532bb}
    \end{align}
where $P_{X|Y}$ is the conditional distribution associated with the joint $P_{X, Y} = Q_X^* \times P_{Y|X}$. 
\end{itemize}

\begin{remark}
For simplicity and clarity, we will occasionally use notation that is specialized for finite alphabets, e.g., summations instead of integrals and ratio of PMFs instead of Radon–Nikodym derivatives. For example, we may write $(\ref{gccn})$ as 
\begin{align}
    C_{t}(P_{Y|X} \| Q_{Y|X}) &= \left(\sum_{x \in \mathcal{X}, y \in \mathcal{Y}} Q_X^*(x) P_{Y|X}(y|x) \log \left(\frac{Q_{Y|X}(y|x)^t}{Q_Y^{\sim t}(y)} \right)\right)^+, \label{hpv}
\end{align}
where 
\begin{align*}
    Q_Y^{\sim t}(y) = \sum_{\bar{x} \in \mathcal{X}} Q_X^*(\bar{x}) Q_{Y|X}(y|\bar{x})^t.  
\end{align*}
Unless otherwise noted, all our results will hold for abstract probability spaces where the underlying measurable space is a standard Borel space. 
\label{simplicity_remark}
\end{remark}

Since $D(\cdot \| \cdot) \geq 0$, it follows from $(\ref{532bb})$ that 
\begin{align*}
    C_{t}(P_{Y|X} \| Q_{Y|X}) \leq I(Q_X^*, P_{Y|X}) \leq C(P_{Y|X}), 
\end{align*}
which aligns well with the intuition
that a channel code optimized for $Q_{Y|X}$ but deployed on $P_{Y|X}$ cannot exceed the true capacity (of $P_{Y|X}$). Writing the mismatched-design rate $(\ref{532bb})$ as  
\begin{align*}
    C_{t}(P_{Y|X} \| Q_{Y|X}) = \left(C(P_{Y|X}) - \left( C(P_{Y|X}) - I(Q_X^*, P_{Y|X}) \right) - D(P_{X|Y} \| Q_{X|Y}^{\sim t} | P_Y^*)\right)^+, 
\end{align*}
one can interpret the first penalty, $C(P_{Y|X}) - I(Q_X^*, P_{Y|X})$, as coming from the mismatched codebook distribution and the second penalty, $D(P_{X|Y} \| Q_{X|Y}^{\sim t} | P_Y^*)$, as coming from mismatched decoding. The quantity $C_t(P_{Y|X} \| Q_{Y|X})$ belongs to the $t$-parameterized family of achievable rates in channel coding with mismatched decoding, where the codebook distribution is fixed as $Q_X^*$. Specifically, define the generalized mutual information \cite[(2.10)]{survey_mismatched.decoding} as  
\begin{align}
    I_{\operatorname{GMI}}(Q_X^*, q) &\coloneqq \left( \sup_{t \geq 0} \mathbb{E}_{Q_X^* \times P_{Y|X}} \left [ \log \left( \frac{q(X, Y)^t }{\sum_{\bar{x}} Q_X^*(\bar{x}) q(\bar{x},Y)^t} \right) \right ]\right)^+, \label{b4vn}
\end{align}
where $q = q(x, y)$ is a nonnegative function called the decoding metric. Note that $(\cdot)^+$ in $(\ref{b4vn})$
is redundant since the expression inside the parenthesis is always nonnegative \cite[(2.9)]{survey_mismatched.decoding}. Hence,   
\begin{align*}
    C_t(P_{Y|X} \| Q_{Y|X}) &= \left(\mathbb{E}_{Q_X^* \times P_{Y|X}} \left [ \log \left( \frac{ Q_{Y|X}(Y|X)^t }{\sum_{\bar{x}} Q_X^*(\bar{x}) Q_{Y|X}(Y|\bar{x})^t} \right) \right ]\right)^+ 
\end{align*}
can be obtained from $(\ref{b4vn})$ by choosing the decoding metric $q(x, y) = Q_{Y|X}(y|x)$ and a fixed value of $t$. Hence, we have $C_t(P_{Y|X} \| Q_{Y|X}) \leq I_{\operatorname{GMI}}(Q_X^*,Q_{Y|X})$. When $t = 1$ and $P_{Y|X} = Q_{Y|X}$,
the mismatched-design rate and  mismatched-design rate-dispersion are equal to the capacity $C(P_{Y|X})$ and dispersion $V(P_{Y|X})$, respectively.

\section{Mismatched-design results in the memoryless block coding framework \label{qyxbasedsourcechannelcodesresults}}

In the remainder of the paper, we let the source $S$ be a sequence of length $k$ and the channel input $X$ be a sequence of length $n$. We denote a sequence of length $n$ as $X^n = (X_1, \ldots, X_n)$.
\begin{definition}
An $(n, k)$ source-channel code is a pair of (possibly randomized) mappings $\operatorname{f} : \mathcal{S}^k \to \mathcal{X}^n$ and $\operatorname{g}: \mathcal{Y}^n \to \mathcal{Z}^k$. 
\label{defjsccnk}
\end{definition}
Let $\operatorname{d}_k : \mathcal{S}^k \times \mathcal{Z}^k \to [0, \infty)$ denote the distortion measure defined for sequences. Then we have the following definition as the counterpart of Definition \ref{wocostdef} in the block coding setting: 
\begin{definition}
An $(n, k)$ source-channel code $(\operatorname{f}, \operatorname{g})$ is called a $(d, \epsilon, n, k)$ source-channel code with respect to $\{P_{S^k}, \operatorname{d}_k, P_{Y^n|X^n} \}$ if $\mathbb{P}(\operatorname{d}_k(S^k, \operatorname{g}(Y^n)) > d) \leq \epsilon$, where $X^n = \operatorname{f}(S^k)$, $S^k \sim P_{S^k}$ and $Y^n|X^n \sim P_{Y^n|X^n}$. 
\label{wocostdefnk}
\end{definition}

We assume that the source is stationary and memoryless and the channel is memoryless, but not necessarily stationary. We let the distortion measure $\operatorname{d}_k$ over sequences be additively separable. Effectively, we make the following substitutions in the one-shot setting of the previous sections:    
\begin{align}
\begin{split}
    \mathcal{S} &\to \mathcal{S}^k,\\
    \mathcal{Z} &\to \mathcal{Z}^k,\\
    \mathcal{X} &\to \mathcal{X}^n,\\
    \mathcal{Y} &\to \mathcal{Y}^n,\\
    P_S &\to P_{S^k} = P_S \times \cdots \times P_S,\\
    P_{Y|X} &\to P_{Y^n|X^n} = P^{(1)}_{Y|X} \times \cdots \times P^{(n)}_{Y|X},\\
    Q_{Y|X} &\to Q_{Y^n|X^n} = Q^{(1)}_{Y|X} \times \cdots \times Q^{(n)}_{Y|X},\\
    \operatorname{d}(S, Z) &\to \operatorname{d}_k(S^k, Z^k) = \frac{1}{k} \sum_{i=1}^k \operatorname{d}(S_i, Z_i),\\
    B_d(s) &\to B_d(s^k) = \left \{z^k \in \mathcal{Z}^k : \operatorname{d}_k(s^k, z^k) \leq d \right\}.
\end{split}
\label{rep1}
\end{align}

We now present mismatched-design results in the block coding setting. By the memoryless property of both the source and the channels and the additive separability of the distortion measure, the substitutions in $(\ref{rep1})$ imply the following replacements in the $Q_{Y|X}$-based source-channel code construction in Definition \ref{qyxbasedcodedef}: 
\begin{align}
\begin{split}
    P_Z^* &\to P_{Z^k}^* = (P_{Z}^*)^{\otimes k},\\
    Q_X^* &\to Q^*_{X^n} = \bigotimes_{i=1}^n Q_{X_i}^*, \\
     Q_{X|Y}^{\sim t} &\to Q_{X^n|Y^n}^{\sim t} = \bigotimes_{i=1}^n Q_{X_i|Y_i}^{(i),\sim t},
\end{split}
\label{rep2}
\end{align}
where 
\begin{align}
\begin{split}
    P_{Z|S}^* &= \argmin_{\substack{P_{Z|S}:\\\mathbb{E}[\operatorname{d}(S, Z)] \leq d}} I(P_S, P_{Z|S}),\\
    P_{Z}^* &= P_S P_{Z|S}^*,\\
    Q_{X_i}^* &= \argmax_{Q} I(Q, Q_{Y|X}^{(i)}),\\
    Q_{X|Y}^{(i),\sim t}(x_i|y_i) &= \frac{Q_{X_i}^*(x_i) Q_{Y|X}^{(i)}(y_i|x_i)^t}{Q_{Y_i}^{\sim t}(y_i)},\\
    Q_{Y_i}^{\sim t}(y_i) &= \int_\mathcal{X} Q^{(i)}_{Y|X}(y_i|u)^t Q_{X_i}^*(du).
\end{split} 
\label{rep3}
\end{align}

Note that we continue to enforce 
the Assumptions \ref{assumpr1_source}\,-\,\ref{assumpr3_source} for $P_S, d$ and $\operatorname{d}$, and Assumptions \ref{assumpr1_channels}\,-\,\ref{assumpr3_channels} for $Q_{Y|X}^{(i)}$ and $ P_{Y|X}^{(i)}$ for each $i$; in particular, $Q_{X_i}^*$ and $P_{Z|S}^*$ in $(\ref{rep3})$ are assumed to be unique. With the above extensions to the block coding setting, the $(t, Q_{Y|X})$-based source-channel code defined in Definition \ref{qyxbasedcodedef} is now called a $(t, Q_{Y^n|X^n}, n, k)$-based source-channel code.
\begin{definition}
    We define a $(t, Q_{Y^n|X^n}, n, k)$-based source-channel code as the $(P_{Z^k}^*, Q_{X^n}^*, Q^{\sim t}_{X^n|Y^n})$ source-channel code from Definition \ref{unified_def}. \label{Qbasedwithoutcost}
\end{definition}

We then have the following straightforward corollary to Corollary \ref{bc}.

\begin{corollary}
The ensemble-average excess-distortion probability $\overline{P}_e(d, n, k)$ with respect to $\{P_{S^k}, \operatorname{d}_k, P_{Y^n|X^n} \}$ of the \\
$(t, Q_{Y^n|X^n}, n, k)$-based source-channel code satisfies  
\begin{align}
         \overline{P}_e(d, n, k) \leq \mathbb{E}\left [ \left(1 + P_{Z^k}^*(B_d(S^k)) \exp\left( \imath_{Q_{X^n|Y^n}^{\sim t}, Q_{X^n}^*}(Y^n, X^n) \right) \right)^{-1}  \right], \label{g2n}
    \end{align}
    where the expectation above is w.r.t. $P_{S^k}, Q_{X^n}^*$ and $P_{Y^n|X^n}$, and
    \begin{align*}
         \imath_{Q_{X^n|Y^n}^{\sim t}, Q_{X^n}^*}(Y^n, X^n) = \sum_{i=1}^n \imath_{Q_{X|Y}^{(i), \sim t}, Q_{X_i}^*}(Y_i, X_i). 
    \end{align*}
    \label{cor378ryhre}
\end{corollary}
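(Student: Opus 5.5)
The plan is to obtain Corollary \ref{cor378ryhre} as a direct instantiation of Corollary \ref{bc}, equivalently of Proposition \ref{bxc}, under the substitutions \eqref{rep1}--\eqref{rep3}. The one-shot results hold on arbitrary standard Borel alphabets. Finite products of standard Borel spaces are standard Borel, so $\mathcal{S}^k,\mathcal{X}^n,\mathcal{Y}^n,\mathcal{Z}^k$ are admissible alphabets. We may therefore apply Proposition \ref{bxc} with $P_S\to P_{S^k}$, $P_Z^*\to P_{Z^k}^*$, $Q_X\to Q_{X^n}^*$, $P_{Y|X}\to P_{Y^n|X^n}$ and $K_{X|Y}\to Q^{\sim t}_{X^n|Y^n}$. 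This immediately gives \eqref{g2n}, provided two things hold: the hypotheses of Definition \ref{unified_def} and Proposition \ref{bxc} are satisfied, and the block quantities in \eqref{rep2} really are what Definition \ref{qyxbasedcodedef} produces for the product channel.

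\textbf{Step 1: the block quantities.} I will check that $P_{Z^k}^*=(P_Z^*)^{\otimes k}$ is the output marginal of $(P^*_{Z|S})^{\otimes k}$, which is just a product-measure computation and is taken as the definition in \eqref{rep2}. Next, $Q_{X^n}^*=\bigotimes_i Q^*_{X_i}$. Since $q_\mu^{(n)}(y^n|x^n)=\prod_i q^{(i)}_{\mu}(y_i|x_i)$ with respect to $\mu^{\otimes n}$, Tonelli factorizes the normalizer: $\int \prod_i q^{(i)}(y_i|u_i)^t\,Q^*_{X^n}(du^n)=\prod_i Q^{\sim t}_{Y_i}(y_i)$. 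This factorization yields both the product form of $Q^{\sim t}_{X^n|Y^n}$ and the condition that the normalizer lies in $(0,\infty)$ for $P^*_{Y^n}$-a.e.\ $y^n$. The latter holds because $P^*_{Y^n}=\bigotimes_i P^*_{Y_i}$, and a finite intersection of coordinatewise full-measure sets has full product measure.

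\textbf{Step 2: the hypotheses.} The absolute continuity $Q^{\sim t}_{X^n|Y^n=y^n}\ll Q^*_{X^n}$ follows coordinatewise. Its Radon--Nikodym derivative is $\prod_i q^{(i)}(y_i|x_i)^t/Q^{\sim t}_{Y_i}(y_i)$, whose logarithm is the sum $\sum_i \imath_{Q^{(i),\sim t}_{X|Y},Q^*_{X_i}}(y_i,x_i)$ stated in the corollary. Here I use that a product of densities is a density for the product measure, again by Tonelli on rectangles followed by $\pi$--$\lambda$.

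The remaining hypothesis is $Q^*_{X^n}\times P_{Y^n|X^n}\ll Q^*_{X^n}\times P^*_{Y^n}$, which is the step I expect to need the most care. The paper uses this condition in the one-shot setting, so I will establish it per letter first. For each $i$, Assumption \ref{assumpr3_channels} gives $p^{(i)}_\mu(y|x)$ with respect to $\mu_i$. Any set $A$ with $(Q^*_{X_i}\times P^*_{Y_i})(A)=0$ has $\mu_i$-null sections wherever the marginal density $p^*_{Y_i}(y)=\int p^{(i)}_\mu(y|u)\,Q^*_{X_i}(du)$ is positive. On the set where $p^*_{Y_i}=0$, the conditional densities $p^{(i)}_\mu(\cdot|x)$ vanish for $Q^*_{X_i}$-a.e.\ $x$. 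Together these give $(Q^*_{X_i}\times P^{(i)}_{Y|X})(A)=0$, i.e.\ the joint is dominated by the product of its marginals. Products of absolutely continuous pairs remain absolutely continuous, with density equal to the product of the per-letter densities. This closes the block hypothesis, and the corollary then follows verbatim from Corollary \ref{bc}.
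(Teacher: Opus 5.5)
Your proposal is correct and follows the paper's route: the paper obtains this corollary as a direct instantiation of Corollary~\ref{bc} (equivalently Proposition~\ref{bxc}) under the product substitutions \eqref{rep1}--\eqref{rep3}, with no further argument. The checks you supply---factorization of the normalizer, the product Radon--Nikodym derivative yielding the sum of per-letter information densities, and $Q^*_{X^n}\times P_{Y^n|X^n}\ll Q^*_{X^n}\times P^*_{Y^n}$ derived from Assumption~\ref{assumpr3_channels}---are left implicit in the paper and are sound (applying Proposition~\ref{bxc} directly, as Definition~\ref{Qbasedwithoutcost} permits, also spares you from verifying that $\bigotimes_i Q^*_{X_i}$ is the capacity-achieving input of the block channel).
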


Next, to obtain refined asymptotic results via a normal approximation, we enforce two conditions as described next. Define 
$$d_{\max} = \inf_{z \in \mathcal{Z}} \mathbb{E}_{P_S}\left [\operatorname{d}(S, z) \right],$$ 
which is the lowest value of $d$ such that $R_d(P_S) = 0$.

\textbf{Conditions:}
\begin{enumerate}[label=C\arabic*., ref=C\arabic*]   
\item \label{cond1dee} The distortion level $d \in (d_{\min}, d_{\max})$ so that $0 < R_d(P_S) < \infty$.
\item \label{weird9thmomcond} $\mathbb{E}\left [(\operatorname{d}(S, Z))^9 \right] < \infty$, where the expectation is w.r.t. $S \sim P_S$ and $Z \sim P_Z^*$ with $S \ind Z$.
\end{enumerate}
Under these conditions, the following lemma from \cite{kostina_SC} provides a useful bound for $P_{Z^k}^*(B_d(S^k))$, enabling us to further upper bound the excess-distortion probability in $(\ref{g2n})$. Note that Condition \ref{weird9thmomcond} is only used for the purpose of applying the following lemma.

\begin{lemma}[{\cite[Lemma 2]{kostina_SC}}] Let the source distribution $P_S$, distortion level $d$ and distortion measure $\operatorname{d}$ satisfy Assumptions 
\ref{assumpr1_source}\,-\,\ref{assumpr3_source} and Conditions \ref{cond1dee}\,-\,\ref{weird9thmomcond}. Let the source $S^k$ be i.i.d. $P_S$. Let $P_{Z^k}^*$ be as defined in $(\ref{rep2})$. Then there exist positive constants $k_0, c_1$ and $K$ such that for all $k \geq k_0$, 
\begin{align*}
        \mathbb{P}\left( \log \frac{1}{P_{Z^k}^*(B_d(S^k))} \leq \sum_{i=1}^k \jmath_d(S_i, P_S, \operatorname{d}) + \left( \overline{c} - \frac{1}{2} \right) \log k + c_1  \right) \geq 1 - \frac{K}{\sqrt{k}},
\end{align*}
where 
\begin{align}
\overline{c} &= 1 + \frac{\text{Var}_{P_S}\left( \Lambda'(S, \lambda^*) \right)}{\mathbb{E}_{P_S}\left[ | \Lambda''(S, \lambda^*) | \right]  }, \label{c28f}\\
        \Lambda(s, \lambda) &= \log \left(\frac{1}{ \mathbb{E}_{P_Z^*}\left [ \exp \left( \lambda \left( d -  \operatorname{d}(s, Z)\right) \right) \right ]}\right),
    \end{align}
$\lambda^* = -\partial_d R_d(P_S)$, and $(\cdot)'$ denotes differentiation with respect to $\lambda$.
\label{lemma5}
\end{lemma}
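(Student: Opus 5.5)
This lemma is imported directly from Kostina–Verd\'u, and in the paper it is used as a black box. Still, here is how I would reprove it.

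The idea is to lower bound the probability of the distortion ball by a change-of-measure argument. We tilt $P_{Z^k}^*$ toward the ball and then control the tilted probability with a local (Berry–Esseen type) estimate.

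\textbf{Step 1: exact tilting identity.} Fix $s^k$ and define the tilted measure $dQ_{s^k}/dP_{Z^k}^*(z^k)=\exp\bigl(\lambda^* k(d-\operatorname{d}_k(s^k,z^k))\bigr)\big/\prod_i \mathbb{E}_{P_Z^*}[\exp(\lambda^*(d-\operatorname{d}(s_i,Z)))]$. Rearranging gives
\[
P_{Z^k}^*(B_d(s^k))=\exp\Bigl(-\sum_{i=1}^k \jmath_d(s_i,P_S,\operatorname{d})\Bigr)\,\mathbb{E}_{Q_{s^k}}\Bigl[\exp\bigl(-\lambda^* k(d-\operatorname{d}_k)\bigr)\mathds{1}\{\operatorname{d}_k\le d\}\Bigr].
\]
More generally I would tilt at a data-dependent slope $\lambda$ chosen so that the $Q$-mean of $\operatorname{d}_k$ sits slightly below $d$, paying an extra $\sum_i[\Lambda(s_i,\lambda)-\Lambda(s_i,\lambda^*)]$. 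This is the source of the $\overline{c}$ correction.

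\textbf{Step 2: choose the slope.} Under the tilt at $\lambda$, the mean of $k\operatorname{d}_k$ is $kd+\sum_i\Lambda'(s_i,\lambda)$. At $\lambda^*$ the sum $\sum_i\Lambda'(s_i,\lambda^*)$ has mean zero and fluctuates at order $\sqrt{k}$. I would pick $\hat\lambda$ solving $\sum_i\Lambda'(s_i,\hat\lambda)=\Theta(\sqrt{k})$, which places the tilted mean inside the ball by about one standard deviation. A Taylor expansion then gives
\[
\sum_i[\Lambda(s_i,\hat\lambda)-\Lambda(s_i,\lambda^*)]\approx \frac{(\sum_i\Lambda'(s_i,\lambda^*))^2}{2\sum_i|\Lambda''(s_i,\lambda^*)|}.
\]
This quantity is $O(1)$ in typical cases. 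With probability $1-O(1/\sqrt k)$ it is bounded by $\frac{\operatorname{Var}(\Lambda')}{\mathbb{E}|\Lambda''|}\cdot\frac12\log k$ plus a constant. That bound follows from Chebyshev and Berry–Esseen on the chi-square-like ratio, together with moment bounds on $\Lambda',\Lambda''$ that come from the ninth-moment Condition~\ref{weird9thmomcond}.

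\textbf{Step 3: the tilted local probability.} Under $Q_{s^k}$ at slope $\hat\lambda$, the variable $k(d-\operatorname{d}_k)$ is a sum of independent non-identical terms with variance of order $k$. Its probability of landing in $[0,\,\sqrt{k}\,]$ while weighted by $e^{-\hat\lambda(\cdot)}$ is at least $c/\sqrt{k}$, by a Berry–Esseen bound for non-identically distributed summands. This step is the source of the $-\tfrac12\log k+\log k$ bookkeeping that combines into $(\overline{c}-\tfrac12)\log k$. It requires third absolute moments of the tilted distortions, uniformly over typical $s^k$; Condition~\ref{weird9thmomcond} and Hölder supply these on an event of probability $1-K/\sqrt k$.

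\textbf{Main obstacle.} The hard part is the uniformity in Step 3: one must control the tilted third moments and variances uniformly over a high-probability set of $s^k$. That in turn forces the large-deviation/Chebyshev bounds on empirical averages of $\Lambda',\Lambda'',\Lambda'''$ near $\lambda^*$. It is precisely here that the ninth moment is consumed.
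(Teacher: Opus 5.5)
Your outline is correct and is essentially the Kostina--Verd\'u argument behind their Lemma~2, which the paper cites without giving a proof of its own: you tilt $P_{Z^k}^*$ at an $s^k$-dependent slope near $\lambda^*$, get the $\frac12\log k$ from a Berry--Esseen local estimate under the tilted product measure, and get the $\frac{\operatorname{Var}(\Lambda'(S,\lambda^*))}{\mathbb{E}|\Lambda''(S,\lambda^*)|}\log k$ term by bounding the quadratic term $\bigl(\sum_i\Lambda'(S_i,\lambda^*)\bigr)^2/\bigl(2\sum_i|\Lambda''(S_i,\lambda^*)|\bigr)$ of the expansion of $\sum_i\Lambda(S_i,\cdot)$ on an event of probability $1-O(1/\sqrt{k})$ (your coefficient $\frac12\operatorname{Var}(\Lambda')/\mathbb{E}|\Lambda''|$ is even smaller than what $\overline{c}$ allows, so the stated bound follows). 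There are two harmless slips: by your own formula for the tilted mean of $k\operatorname{d}_k$, placing that mean inside the ball requires $\sum_i\Lambda'(s_i,\hat\lambda)=-\Theta(\sqrt{k})$ (either sign in fact works, at the cost of an additive constant); and the $c/\sqrt{k}$ in Step~3 should come from restricting $k(d-\operatorname{d}_k)$ to a window $[0,\Delta]$ of large constant width $\Delta$, where the weight is at least $e^{-\hat\lambda\Delta}$ and Berry--Esseen leaves mass at least $(c'\Delta-2C_{\mathrm{BE}})/\sqrt{k}$, rather than from the window $[0,\sqrt{k}]$, on most of which the weight $e^{-\hat\lambda(\cdot)}$ is negligible.
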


Examples of source distributions $P_S$, value ranges for $d$, and distortion measures $\operatorname{d}$ satisfying Assumptions 
\ref{assumpr1_source}\,-\,\ref{assumpr3_source} and Conditions \ref{cond1dee}\,-\,\ref{weird9thmomcond} are given in Appendix \ref{examples_of_sources}. Before stating Theorem \ref{thm_finite_block1}, we need one more definition. 

\begin{definition}
Let $\operatorname{S} \subset \mathbb{N}^2$ and $\operatorname{S}_n \coloneqq \{k:(n, k) \in \operatorname{S} \}$. We say that $\operatorname{S}$ is an admissible source-channel blocklength set if 
    \begin{enumerate}
    \item $\operatorname{S}_n \neq \varnothing$ for all sufficiently large $n$, and 
    \item there exist constants $0 < \beta_- \leq  \beta_+ < \infty$ such that for all $(n, k) \in \operatorname{S}$, we have  $\beta_- \leq \frac{k}{n} \leq \beta_+$.
\end{enumerate}
A family of $(n, k)$ source-channel codes
\begin{align}
     \left \{ \mathcal{C}_{n, k} \right \}_{(n,k) \in \operatorname{S}}  \label{C_A}  
\end{align}
is called $\operatorname{S}$-admissible. We will sometimes write $(\ref{C_A})$ as $\{\mathcal{C}_{n, k} \}_{\operatorname{S}}$.  
\label{admissible_def}
\end{definition}

Theorem \ref{thm_finite_block1} is a result about an $\operatorname{S}$-admissible family of $(n, k)$ source-channel codes for an arbitrary admissible set $\operatorname{S}$, where each code in the family is a $(t, Q_{Y^n|X^n}, n, k)$-based source-channel code from Definition \ref{Qbasedwithoutcost}. Given such a family, then under certain conditions and for sufficiently large $n$ and $k$, Theorem \ref{thm_finite_block1} gives a sufficient condition for a code in the family to have excess-distortion probability at most $\epsilon$ over the true channel.

\begin{theorem}[Achievability for an admissible family of source-channel codes]
Fix any $\epsilon \in (0, 1)$, $t > 0$, and an admissible source-channel blocklength set $\operatorname{S}$. Let the source distribution $P_S$, distortion level $d$ and distortion measure $\operatorname{d}$ satisfy Assumptions 
\ref{assumpr1_source}\,-\,\ref{assumpr3_source} and Conditions \ref{cond1dee}\,-\,\ref{weird9thmomcond}. Let $\left(P_{Y|X}^{(i)}, Q_{Y|X}^{(i)} \right)_{i \geq 1}$ be a sequence of channel pairs such that $P_{Y|X}^{(i)}$ and $Q_{Y|X}^{(i)}$ satisfy Assumptions \ref{assumpr1_channels}\,-\,\ref{assumpr3_channels} and $C_t(P_{Y|X}^{(i)} \| Q_{Y|X}^{(i)})$ is well-defined (Definition \ref{def_mismatched_rate}) and positive for each $i$. Let $\left \{\mathcal{C}_{n, k}\right\}_{\operatorname{S}}$ denote the $\operatorname{S}$-admissible family of source-channel codes, where $\mathcal{C}_{n, k}$ is the $(t, Q_{Y^n|X^n}, n, k)$-based source-channel code from Definition \ref{Qbasedwithoutcost}. Assume that there exist constants $0 < \kappa_1, \kappa_2 < \infty$ such that for all $(n, k) \in \operatorname{S}$,
    \begin{align}
        &\frac{1}{n + k} \left [ k\, \mathbb{E}_{P_S}\left[ \big | \jmath_d(S, P_S, \operatorname{d}) - R_d(P_S) \big |^3 \right] +  \sum_{i=1}^n \mathbb{E}_{Q_{X_i}^* \times P_{Y|X}^{(i)}}\left [  \Big | \imath_{Q_{X|Y}^{(i), \sim t}, Q_{X_i}^*}\left(Y_i, X_i \right) - C_t(P_{Y|X}^{(i)} \| Q_{Y|X}^{(i)}) \Big|^3 \right] \right] \leq \kappa_1, \label{assumpbe1}\\
        &\frac{1}{n + k}\left [ k V_d(P_S) +  \sum_{i=1}^n V_t(P_{Y|X}^{(i)} \| Q_{Y|X}^{(i)})  \right]  \geq \kappa_2. \label{assumpbe2}
    \end{align} 
Then there exist constants $c$, $K_0$ and $N_0$ such that for every $(n, k) \in \operatorname{S}$ with $n \geq N_0$ and $k \geq K_0$, the $(t, Q_{Y^n|X^n}, n, k)$-based source-channel code is $(d, \epsilon, n, k)$  with respect to $\{P_{S^k}, \operatorname{d}_k, P_{Y^n|X^n} \}$ if
    \begin{align}
        \sum_{i=1}^n C_t(P_{Y|X}^{(i)} \| Q_{Y|X}^{(i)}) - k R_d(P_S) &\geq  \sqrt{k V_d(P_S) + \sum_{i=1}^n V_t(P_{Y|X}^{(i)} \| Q_{Y|X}^{(i)})} \,\,Q^{-1}(\epsilon) + \left(\overline{c} - \frac{1}{2} \right) \log (k) + c,    \label{thm_cond}
    \end{align}
    where $\overline{c}$ is\footnote{We calculate closed-form expressions for $\overline{c}$ for specific source distributions and distortion measures in Appendix \ref{examples_of_sources}.} defined in $(\ref{c28f})$. 
    \label{thm_finite_block1}
\end{theorem}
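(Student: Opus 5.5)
We start from Corollary \ref{cor378ryhre}. Write $\imath_n \coloneqq \sum_{i=1}^n \imath_{Q_{X|Y}^{(i),\sim t}, Q_{X_i}^*}(Y_i,X_i)$ with $(X_i,Y_i)\sim Q_{X_i}^*\times P^{(i)}_{Y|X}$ independent across $i$, and $J_k \coloneqq \sum_{j=1}^k \jmath_d(S_j,P_S,\operatorname{d})$. Let $\mathcal{E}_k$ be the event of Lemma \ref{lemma5}, namely $\log\frac{1}{P_{Z^k}^*(B_d(S^k))} \le J_k + (\overline{c}-\tfrac12)\log k + c_1$. It depends only on $S^k$, so it is independent of $(X^n,Y^n)$, and $\mathbb{P}(\mathcal{E}_k^c)\le K/\sqrt{k}$ for $k\ge k_0$. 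We bound the integrand in \eqref{g2n} by $1$ on $\mathcal{E}_k^c$. On $\mathcal{E}_k$ we use $(1+e^{a})^{-1}\le e^{-a}$ when $a\ge0$ and $\le 1$ otherwise, i.e.
\begin{align*}
(1+e^{a})^{-1}\le \mathds{1}\{a<\gamma\} + e^{-\gamma}\mathds{1}\{a\ge\gamma\}
\end{align*}
for a threshold $\gamma>0$. A sharper alternative is to keep $\mathbb{E}[e^{-a}\mathds{1}\{a\ge \gamma\}]$ and bound it by a local CLT / Berry–Esseen layering argument, as in \cite{5452208}; this gives $O(1/\sqrt{n+k})$ without losing $\gamma$.

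With $a \coloneqq \imath_n - J_k - (\overline{c}-\tfrac12)\log k - c_1$, this yields
\begin{align*}
\overline{P}_e(d,n,k)\le \frac{K}{\sqrt{k}} + \mathbb{P}\left(\imath_n - J_k < (\overline{c}-\tfrac12)\log k + c_1 + \gamma\right) + \mathbb{E}\left[e^{-a}\mathds{1}\{a\ge 0\}\right].
\end{align*}
The difference $\imath_n - J_k$ is a sum of $n+k$ independent, nonidentically distributed terms. Its mean is $\sum_i C_t(P^{(i)}_{Y|X}\|Q^{(i)}_{Y|X}) - kR_d(P_S)$; here positivity of $C_t$ lets us drop the $(\cdot)^+$, and $\mathbb{E}[\jmath_d]=R_d$. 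Its variance is $V_{n,k}\coloneqq kV_d(P_S)+\sum_i V_t(P^{(i)}_{Y|X}\|Q^{(i)}_{Y|X})$. By \eqref{assumpbe1}–\eqref{assumpbe2}, the Berry–Esseen theorem for nonidentical summands gives a uniform error $B/\sqrt{n+k}$ with $B = 6\kappa_1/\kappa_2^{3/2}$. Under \eqref{thm_cond} with $c$ large enough, the middle probability is then at most $Q(Q^{-1}(\epsilon)+\delta/\sqrt{V_{n,k}})+B/\sqrt{n+k}$, where $\delta = c - c_1 - \gamma$.

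For the third term we use the standard lemma (e.g. \cite[Lemma 47]{5452208}): for a sum $W$ of independent variables with variance $V$ and third absolute moments summing to $T$, and any $A$,
\begin{align*}
\mathbb{E}\left[e^{-W}\mathds{1}\{W> A\}\right]\le 2\left(\frac{\log 2}{\sqrt{2\pi}}+\frac{12T}{V}\right)\frac{e^{-A}}{\sqrt V}.
\end{align*}
Apply it with $W = \imath_n - J_k$ and $A = (\overline{c}-\tfrac12)\log k + c_1$. Because $V\ge\kappa_2(n+k)$ and $T\le\kappa_1(n+k)$, the third term is $O\bigl(k^{-(\overline{c}-1/2)}/\sqrt{n+k}\bigr)$. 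Since $\overline{c}\ge 1$, this is $O(1/\sqrt{n+k})$, uniformly over $(n,k)\in\operatorname{S}$. This is exactly how the $\tfrac12\log$ saving arises: the factor $k^{\overline{c}-1/2}$ is absorbed, and no extra $\log n$ is needed, which is the improvement over \cite{kostina_JSCC}. With this lemma we take $\gamma=0$.

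Finally, admissibility gives $k\asymp n$, so all residual terms $K/\sqrt{k}$, $B/\sqrt{n+k}$ and $O(1/\sqrt{n+k})$ are at most $b/\sqrt{V_{n,k}}$ for a constant $b$, using $V_{n,k}\le$ const$\cdot(n+k)$, which follows from \eqref{assumpbe1} via Lyapunov. We then choose $c$ so that the Gaussian shift compensates. Since $Q'$ is bounded below near $Q^{-1}(\epsilon)$ for $\delta/\sqrt{V_{n,k}}$ small, we have $Q(Q^{-1}(\epsilon)+\delta/\sqrt{V_{n,k}})\le \epsilon - \delta\,\phi_\epsilon/(2\sqrt{V_{n,k}})$ for large $n$, where $\phi_\epsilon>0$ depends on $\epsilon$. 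So $\delta = 2b/\phi_\epsilon$ works, giving $\overline{P}_e\le\epsilon$ for $n\ge N_0$, $k\ge K_0$.

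The main obstacle is the third term. It requires a uniform (nonidentical) large-deviation-style bound on $\mathbb{E}[e^{-W}\mathds{1}\{W>A\}]$. We also need to check that the lemma's constants depend only on $\kappa_1,\kappa_2$, so that $c$ is uniform over $\operatorname{S}$. If $\overline{c}-\tfrac12<\tfrac12$ were possible the $\log k$ bookkeeping would break, but \eqref{c28f} gives $\overline{c}\ge1$.
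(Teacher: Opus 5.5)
Your proof is correct, but it evaluates the Poisson-matching bound by a different route than the paper.

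The paper does not split $(1+e^{a})^{-1}$ after applying Lemma~\ref{lemma5}. It uses the exact identity $\mathbb{E}[(1+e^{U-u})^{-1}]=\Pr(U+L\le u)$, with $L$ an independent standard logistic variable. This turns the whole bound into one distribution function of a sum of $n+k+1$ independent terms. A single Berry--Esseen application then suffices, and the extra variance $V_L$ is absorbed into the constant $c$.

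You instead bound $(1+e^{a})^{-1}\le \mathds{1}\{a<0\}+e^{-a}\mathds{1}\{a\ge 0\}$. You handle the indicator by Berry--Esseen and the exponential term by the tilting lemma of \cite{5452208}, which is itself a Berry--Esseen layering argument.

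What each route buys:
\begin{itemize}
\item Yours is the classical one from channel-dispersion analysis. It makes the dependence of $c$ on $\kappa_1,\kappa_2,\beta_-$ explicit and needs no $V_L$ correction.
\item The paper's is shorter because it needs no such lemma.
\end{itemize}
Both give the same third-order term. This shows that the $\tfrac12\log k$ gain over \cite{kostina_JSCC} comes from the Poisson-matching bound having no $e^{-\gamma}$-type penalty, so no threshold of order $\tfrac12\log k$ is ever introduced. The logistic identity is one convenient way to exploit this, not the only one.

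One bookkeeping slip. The exponential term is $\mathbb{E}[e^{-a}\mathds{1}\{a\ge 0\}]=e^{A}\,\mathbb{E}[e^{-W}\mathds{1}\{W\ge A\}]$. The factor $e^{-A}$ supplied by the lemma therefore cancels, and the term is $O(1/\sqrt{n+k})$ directly, not $O\bigl(k^{-(\overline{c}-1/2)}/\sqrt{n+k}\bigr)$. Your final conclusion is unaffected. However, the $\tfrac12\log k$ saving does not come from absorbing $k^{\overline{c}-1/2}$, and the argument does not need $\overline{c}\ge 1$; it works for any value of $\overline{c}$.
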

\textit{Proof:} The proof of Theorem \ref{thm_finite_block1} is given in Appendix \ref{thm_finite_block1_proof}. 

\subsection{Discussion of Theorem \ref{thm_finite_block1}}

First, note that if $P_{Y|X}^{(i)} = Q_{Y|X}^{(i)}$ for all $i$, then Theorem \ref{thm_finite_block1} is simply an achievability result for traditional matched JSCC with known channel laws. In this case, Theorem \ref{thm_finite_block1} with $t = 1$ extends the achievability half of \cite[Theorem 10]{kostina_JSCC} to nonstationary channels. If, in addition, $P_{Y|X}^{(i)} = Q_{Y|X}^{(i)} = P_{Y|X}$ for all $i$, then Theorem \ref{thm_finite_block1} recovers (for most channels) the achievability half of \cite[Theorem 10]{kostina_JSCC} with an improved third-order term since \cite[Theorem 10]{kostina_JSCC} contains an extra $\frac{1}{2} \log (k) + \log \log k$ term in the RHS of $(\ref{thm_cond})$. The $\log \log k$ improvement is already implicit in \cite[Appendix D]{alternative_oneshot_JSCC}. Indeed, \cite[Theorem 4]{alternative_oneshot_JSCC} is an improved one-shot achievability bound for JSCC than \cite[Theorem 8]{kostina_JSCC}, and our Proposition \ref{bxc} is a generalization of \cite[Theorem 4]{alternative_oneshot_JSCC} to the mismatched setting. However, the more substantial $\frac{1}{2} \log(k)$ improvement in Theorem \ref{thm_finite_block1} is unique to our analysis. The key step in our proof yielding this improvement is the rewriting of the RHS of $(\ref{g2n})$ using the identity 
\begin{align}
    \mathbb{E}\left[ \frac{1}{1 + \mathbb{P}(S) \exp\left( \mathcal{I}_C\right)} \right ] = \operatorname{Pr}\left( \mathcal{I}_C + L \leq \log \left(\frac{1}{\mathbb{P}(S)} \right) \right), \label{log_identity}
\end{align}
where we abbreviated the source distortion ball probability as $\mathbb{P}(S)$, the information density as $\mathcal{I}_C$ and introduced an independent standard logistic random variable $L$. This identity allows the source random variable and the channel information density to be analyzed directly after adding only a bounded-scale logistic perturbation $L$.

\begin{remark}
Under the admissibility condition and the uniform third-moment condition $(\ref{assumpbe1})$, the total dispersion term
\[
    B_{n,k}
    \triangleq
    kV_d(P_S)+
    \sum_{i=1}^n
    V_t(P_{Y|X}^{(i)}\|Q_{Y|X}^{(i)})
\]
is of order \(n\). The lower bound \(B_{n,k}=\Omega(n)\) follows from $(\ref{assumpbe2})$, while the upper bound \(B_{n,k}=O(n)\) is shown in $(\ref{6n})$.
\label{total_dispersion_term_is_linear}
\end{remark}

\begin{remark}[Admissibility and the average mismatched-design rate] Definition \ref{admissible_def} formalizes
the JSCC asymptotic regime in which the source and channel blocklengths grow together. Admissibility of a family of source-channel codes is a condition only on the allowed source-channel blocklength pairs $(n, k)$, and it is imposed
independently of the particular channel sequence $\left (P_{Y|X}^{(i)}, Q_{Y|X}^{(i)} \right)_{i \geq 1}$ in the theorem. Define the average mismatched-design rate 
\begin{align}
    \overline C_n
    \triangleq
    \frac{1}{n}\sum_{i=1}^n
    C_t(P_{Y|X}^{(i)}\|Q_{Y|X}^{(i)}).
\end{align}
Although Theorem \ref{thm_finite_block1} does not impose any condition on the asymptotic behavior of $\overline{C}_n$, for an admissible family of source-channel codes, the theorem is most informative when $\overline{C}_n = \Theta(1)$. If $\overline{C}_n = o(1)$, then the condition $(\ref{thm_cond})$ is impossible to satisfy for all sufficiently large $n$, since the LHS of $(\ref{thm_cond})$ is negative of order $n$ due to the $k R_d(P_S)$ term, whereas the RHS is $O(\sqrt{n})$ with sign depending on $\epsilon$. On the other hand, if $\overline{C}_n = \omega(1)$, then the sufficient reliability condition $(\ref{thm_cond})$ remains valid but it is no longer sharp in the asymptotic sense. 
\label{secremarkafter6}
\end{remark}

Further remarks on Theorem \ref{thm_finite_block1} are as follows:
\begin{enumerate}
\item For an admissible source-channel blocklength set $\operatorname{S}$ from Definition \ref{admissible_def}, the condition $(\ref{assumpbe2})$ is automatically satisfied for all $(n, k) \in \operatorname{S}$ if either $V_d(P_S) > 0$ or a $\delta_1 \in (0, 1)$ fraction of channel pairs $\{ (P_{Y|X}^{(i)}, Q_{Y|X}^{(i)})\}_{i=1}^n$ satisfy $V_t(P_{Y|X}^{(i)} \| Q_{Y|X}^{(i)}) > \delta_2$ for some fixed positive $\delta_1$ and $\delta_2$ independent of $(n, k)$. 
\item \label{iskomw0455} Given that Condition \ref{weird9thmomcond} holds, we automatically have that $\mathbb{E}_{P_S}\left[ \big | \jmath_d(S, P_S, \operatorname{d}) - R_d(P_S) \big |^3 \right] < \infty$ \cite[(339)]{kostina_JSCC}, so the additional assumption $(\ref{assumpbe1})$ only requires that the third absolute moments of the information density under the mismatched joint law do not grow faster than linearly in aggregate. 
\end{enumerate}

To place the result in Theorem \ref{thm_finite_block1} in a broader context, consider a $t$-parameterized family of achievable rates in channel coding with mismatched channels, where the supremum over all $t > 0$ leads to the generalized mutual information \cite[2.10]{survey_mismatched.decoding}:
\begin{align}
     \sup_{t \geq 0} \mathbb{E}_{Q_X^* \times P_{Y|X}} \left [ \log \left( \frac{Q(Y|X)^t }{\sum_{\bar{x}} Q_X^*(\bar{x}) Q(Y|\bar{x})^t} \right) \right ]. \label{gmitachieve}
\end{align}
The first-order channel-dependent term in our achievability result $(\ref{thm_cond})$, namely the mismatched-design rate $C_t(P_{Y|X} \| Q_{Y|X})$, matches the $t$-achievable rate in $(\ref{gmitachieve})$. Since $t > 0$ is an arbitrary parameter in Theorem \ref{thm_finite_block1}, the GMI in $(\ref{gmitachieve})$ is achieved as the envelope over the $t$-parameterized family of $(t, Q_{Y^n|X^n}, n, k)$-based designs as $n, k \to \infty$. Note that the choice of $t$ only affects the Gibbs posterior used as the decoder-side kernel.         

We also remark that there is a notable difference between the achievability of $(\ref{gmitachieve})$ by the $(t, Q_{Y|X})$-based JSCC that we show in Theorem \ref{thm_finite_block1}  and by a channel code with a fixed $Q_{Y|X}$-based maximum likelihood (ML) decoder that can be shown using the random coding union bound, e.g., in \cite[2.6.4]{survey_mismatched.decoding}. In the latter case, $(\ref{gmitachieve})$ is achieved by a single $Q_{Y|X}$-based channel coding scheme in which the parameter $t$ only exists in the analysis and is not an actual parameter of the code construction. We call this a \emph{single-scheme achievability} of the GMI, and it follows from the fact that the $Q_{Y|X}$-based ML decoding rule $\argmax_{j} Q_{Y|X}(y|x_j)$ is mathematically equivalent to $\argmax_{j} Q_{Y|X}(y|x_j)^t$ for any $t > 0$. In contrast, a $Q_{Y|X}$-based joint source-channel code does not straightforwardly admit the aforementioned mathematical equivalence. This is because unlike channel coding with equiprobable messages, a JSCC decoder must generally account for a nonuniform source-induced prior over the candidate indices—or, equivalently, exploit the residual encoded-source redundancy as discussed in \cite[p. 2552]{kostina_JSCC}. In our Poisson construction,
this source-side weighting appears through the Poisson-priority term $- \log(T_i)$, as can be seen by writing down the decoding rule of our $Q_{Y|X}$-based code which is   
\begin{align}
    i^*(y) &= \argmax_i \left \{ t \log Q_{Y|X}(y|\bar{X}_i)  -\log(T_i)\right \}. \label{differentchannelcodingJSCC} 
\end{align}
The GMI parameter $t$ only changes the channel term while leaving the Poisson-priority term fixed. Consequently, different values of $t$ generally induce genuinely different JSCC decoding rules, rather than merely different analyses of a fixed decoding rule. Hence, $\eqref{gmitachieve}$ should not be interpreted as the first-order channel-dependent achievable rate of a single $Q_{Y|X}$-based JSCC design. Rather, it is a rate envelope over the $t$-parameterized family of $(t,Q_{Y|X})$-based JSCC designs. We call this \emph{scheme-envelope achievability}.  Note that such a distinction also has precedent in channel coding with mismatched decoding. For example, the achievability of the LM rate shown in \cite[Theorem 5]{scarlett_mismatched_decoding} via a cost-constrained random-coding ensemble is a scheme-envelope achievability because the parameters $t$ and the auxiliary cost functions $a(\cdot)$ are code-design parameters; specifically, they influence the random codebook ensemble. In contrast, for the case of finite alphabets, it is possible to prove a single-scheme achievability of the LM rate via constant-composition random coding (e.g., \cite[2.6.5]{survey_mismatched.decoding}) using the metric-equivalence result in \cite[Proposition 2.2]{survey_mismatched.decoding}: on a constant-composition codebook, the decoding metrics $q(x,y)$ and $q(x,y)^t e^{a(x)}$ induce the same maximum-metric decoding rule for every $t>0$.

\textit{Post-design receiver-only optimization paradigm:} For each $t > 0$ satisfying the given assumptions, Theorem \ref{thm_finite_block1} gives a sufficient reliability condition $(\ref{thm_cond})$ for the
$(t, Q_{Y^n|X^n}, n, k)$-based source-channel code over the true channel $P_{Y^n|X^n}$. This gives a useful one-dimensional set of achievability bounds, parameterized by $t$, while the design channel $Q_{Y^n|X^n}$ remains fixed. In general, the design channel $Q_{Y^n|X^n}$ may be a complicated object, and redesigning the whole JSCC construction for a newly identified true channel $P_{Y^n|X^n}$ may be substantially more demanding than keeping the $Q_{Y^n|X^n}$-based source-channel architecture fixed and tuning only the scalar parameter $t$. Thus, if some post-design knowledge or estimate of $P_{Y^n|X^n}$ is available, for example, during runtime, the bound $(\ref{thm_cond})$ can be optimized over the one-dimensional family 
$$\left \{ \left ( C_t(P_{Y|X}^{(i)}\|Q_{Y|X}^{(i)}), V_t(P_{Y|X}^{(i)}\|Q_{Y|X}^{(i)}) \right)_{1 \leq i \leq n} : t>0\right \}.$$
This is in contrast to the scheme-envelope achievability of the LM rate in \cite[Theorem 5]{scarlett_mismatched_decoding}, where a post-design optimization involves both the parameters $t$ and the auxiliary cost functions $a(\cdot)$ that influence the random codebook distribution. In contrast, a post-design optimization of our scheme-envelope achievability of the GMI is simpler because only a scalar optimization is needed and the parameter $t$ only affects the decoder design, while both the encoder and the codebook ensemble (i.e., the shared randomness) remain unchanged. This is useful in scenarios where the true channel knowledge is limited to the receiver only. Such situations are common because the receiver can estimate the instantaneous channel directly from pilots or known signal components, whereas the transmitter must rely on feedback, channel reciprocity, or statistical channel knowledge. Lastly, even though the GMI can be below the LM rate, equality is achieved for many channels such as block erasure and Gaussian channels.    

In the matched setting, i.e., $Q_{Y^n|X^n} = P_{Y^n|X^n}$, setting $t = 1$ in our results recovers the optimal dispersion as well as the optimal third-order term up to a multiplicative constant factor \cite[Theorem 10]{kostina_JSCC}. In other words, the mismatched-design rate and the mismatched-design rate-dispersion reduce to the channel capacity and dispersion, respectively, when $Q_{Y^n|X^n} = P_{Y^n|X^n}$ and $t = 1$.

\section{Results on universal source-channel codes \label{3i8rt48}}

In the first part of this section, we give the definition of universal source-channel codes, where the universality is only w.r.t. a given set of channels. In the second and third parts, we present universal source-channel code constructions for block erasure channels and Gaussian channels. 

\subsection{Definition of Second-Order Universality \label{universalpluginconstruction}}

In this subsection, we rigorously define the notion of second-order universality of joint source-channel codes over the set of memoryless, nonstationary channels. Let $\mathcal{W}$ be any set of single-use channels $P_{Y|X} : \mathcal{X} \to \mathcal{P}(\mathcal{Y})$. Let 
\begin{align*}
    \mathscr{C}(\mathcal{W}) \coloneqq \left \{ \left(P_{Y|X}^{(i)} \right)_{i \geq 1} :  P_{Y|X}^{(i)} \in \mathcal{W}   \right \}
\end{align*}
denote the set of all memoryless, nonstationary channels with channel laws in $\mathcal{W}$. We will sometimes write $\mathbf{P} \coloneqq \left(P_{Y|X}^{(i)} \right)_{i \geq 1}$ as shorthand. Given any $\mathbf{P} \in \mathscr{C}(\mathcal{W})$, denote its $n$-block prefix channel by $P_{Y^n|X^n}^{\mathbf{P}}$ where  
\begin{align*}
    P_{Y^n|X^n}^{\mathbf{P}} \coloneqq \bigotimes_{i=1}^n P_{Y|X}^{(i)}. 
\end{align*}
Then let  
\begin{align}
    \mathcal{W}_n &\coloneqq \left \{ \bigotimes_{i=1}^n P_{Y|X}^{(i)}:P_{Y|X}^{(i)} \in \mathcal{W} \right \} \label{hellownnonstat}
\end{align}
denote the set of all $n$-block prefix channels, or equivalently, the set of $n$-block memoryless, nonstationary channels with channel laws in $\mathcal{W}$. We also write $P_n \coloneqq P_{Y^n|X^n}^{\mathbf{P}}$ as shorthand.  

To simplify the notation, assume that the source distribution $P_{S^k} = P_S^{\otimes k}$, distortion level $d$ and distortion measure $\operatorname{d}_k$ have been fixed as in $(\ref{rep1})$ such that $P_S$, $d$ and $\operatorname{d}$ satisfy Assumptions 
\ref{assumpr1_source}\,-\,\ref{assumpr3_source} and Conditions \ref{cond1dee}\,-\,\ref{weird9thmomcond}. Let $U_{n, k}^{(d)}$ denote an arbitrary $(n, k)$ source-channel code, where the dependence of $U_{n, k}^{(d)}$ on $P_S$ and $\operatorname{d}$ is implicit.
\begin{definition}
A source-channel code $U_{n, k}^{(d)}$ is channel-blind over $\mathcal{W}_n$ if it does not depend on any particular channel $P_n \in \mathcal{W}_n$, although it may depend on $\mathcal{W}_n$ itself. Furthermore, a family of source-channel codes 
$$\mathcal{U}^{(d)} \coloneqq \{ U_{n, k}^{(d)} : (n, k) \in \mathbb{N}^2\}$$ 
is \emph{channel-blind} over $(\mathcal{W}_n)_{n \geq 1}$ if 
every $U_{n, k}^{(d)} \in \mathcal{U}^{(d)}$ is channel-blind over $\mathcal{W}_n$. 
\end{definition}

Given a channel-blind family $\mathcal{U}^{(d)}$ of source-channel codes over $(\mathcal{W}_n)_{n \geq 1}$, define, for every $P_n \in \mathcal{W}_n$ and $\epsilon \in (0, 1)$,
\begin{align}
    k^*_{\mathcal{U}^{(d)}}(d, \epsilon, n|P_n) &\coloneqq \sup \, \Big \{ k \in \mathbb{N}: 
        U_{n, k}^{(d)} \text{ is a } 
        (d, \epsilon, n, k) \text{ source-channel code w.r.t. }  \{P_{S^{k}}, \operatorname{d}_{k}, P_n \} \Big \},
\label{channelblind_achievability_region}
\end{align}
with the convention that $\sup \varnothing = 0$. The optimal benchmark is defined as 
\begin{align}
    k^*\left(d, \epsilon, n|P_n\right) &\coloneqq \sup \left \{ k \in \mathbb{N}: \exists (d, \epsilon, n, k) \text{ source-channel code w.r.t. } \{P_{S^k}, \operatorname{d}_k, P_n \}  \right \}. \label{unrestricted_benchmark}
\end{align}

\begin{definition}
 Let $\mathscr{C}_{\text{reg}}(\mathcal{W}) \subset \mathscr{C}(\mathcal{W})$. Then a channel-blind family $\mathcal{U}^{(d)} = \{ U_{n, k}^{(d)}\}$ of source-channel codes over $(\mathcal{W}_n)_{n \geq 1}$ is called second-order universal 
    over $\mathscr{C}_{\text{reg}}(\mathcal{W})$ if for every $\mathbf{P} \in \mathscr{C}_{\text{reg}}(\mathcal{W})$ and every $\epsilon \in (0, 1)$, 
    \begin{align}
        \limsup_{n \to \infty} \left(\frac{k^*\left(d, \epsilon, n|P_{n}\right) - k^*_{\mathcal{U}^{(d)}}(d, \epsilon, n|P_{ n})}{\sqrt{n}} \right) = 0. \label{hereheregap}
    \end{align} 
    \label{2ndorderuniversaldef}
\end{definition}

In practice, universality is typically proven for channel sequences satisfying certain regularity assumptions, which motivates the introduction of a subset $\mathscr{C}_{\text{reg}}(\mathcal{W}) \subset \mathscr{C}(\mathcal{W})$. When proving universality only over the set of memoryless, stationary channels, the definition of $\mathcal{W}_n$ in $(\ref{hellownnonstat})$ simplifies to $\mathcal{W}_n = \left \{ P_{Y|X}^{\otimes n} : P_{Y|X} \in \mathcal{W}  \right \}$, and Definition \ref{2ndorderuniversaldef} simplifies to the following: 
\begin{definition}[Stationary channels]
A channel-blind family $\mathcal{U}^{(d)} = \{ U_{n, k}^{(d)}\}$ of source-channel codes over $(\mathcal{W}_n)_{n \geq 1}$ is called second-order universal 
    over $\mathcal{W}$ if for every $P_{Y|X} \in \mathcal{W}$ and every $\epsilon \in (0, 1)$, 
    \begin{align}
        \limsup_{n \to \infty} \left(\frac{k^*\left(d, \epsilon, n|P_{Y|X}^{\otimes n}\right) - k^*_{\mathcal{U}^{(d)}}\left(d, \epsilon, n|P_{Y|X}^{\otimes n} \right)}{\sqrt{n}} \right) = 0. \label{herestatheregap}
    \end{align} 
\end{definition}

\begin{remark}
We note that universality is a property of a channel-blind family of source-channel codes, and it is an asymptotic notion. A single code $U_{n, k}^{(d)}$ in the family is not what one calls universal. Rather, a universality result in the above framework asserts that the same channel-blind
family contains codes whose performance is asymptotically near-optimal for every
channel in the class. This does not, however, remove the need to select an appropriate operating point $(n, k)$ for the channel at hand. In our second-order universal construction (next section), the channel-blind family of codes is generated by a common rule
whose operating point $(n,k)$ can be adjusted. That is, we provide a single code-generation rule over a set of channels that is reused across all $(n,k)$ to obtain the channel-blind family of codes. This is in contrast to a
purely existence result in which the codes $U_{n,k}^{(d)}$ in the family could be unrelated across different blocklength pairs. In this sense, our universality result can be viewed as a form of construction-level separation: the rule used to generate the source-channel code is fixed uniformly over the set of channels, while channel adaptation is reduced to selecting the source blocklength $k$ for a given channel blocklength $n$. 
\label{remarkeeuseful22prop}
\end{remark}

\begin{remark} \label{remarkeeusefulprop}
Although a single code $U_{n, k}^{(d)}$ is not universal or optimal for every channel in $\mathcal{W}$, there is one useful per-code property of our channel-blind construction that is worth mentioning. First, recall that every code $U_{n, k}^{(d)}$ is independent of the target excess-distortion probability $\epsilon$. Then for every $\epsilon$ and for every channel sequence $\mathbf{P} \in \mathscr{C}_{\text{reg}}(\mathcal{W})$, we later show that the individual channel-blind code $U^{(d)}_{n,k}$ in our construction is $(d,\epsilon,n,k)$ w.r.t. $P_n$ under a sufficient condition that is asymptotically tight to second-order, i.e., it matches the corresponding  necessary condition in our converse results through the first two asymptotic terms. As we show later, this is true for both our block erasure and Gaussian channel constructions.
\end{remark}

Lastly, the definition of universality in Definition \ref{2ndorderuniversaldef} requires pointwise convergence to the optimal limit for each fixed sequence $\mathbf{P}$ as opposed to uniform convergence over the set $\mathscr{C}_{\text{reg}}(\mathcal{W})$:
\begin{align}
      \limsup_{n \to \infty} \, \sup_{\mathbf{P} \in \mathscr{C}_{\text{reg}}(\mathcal{W})}\left(\frac{k^*\left(d, \epsilon, n|P_{n}\right) - k^*_{\mathcal{U}^{(d)}}(d, \epsilon, n|P_{ n})}{\sqrt{n}} \right) = 0.
\end{align}
This parallels the distinction in source coding: weak universality requires convergence for each fixed source law, whereas strong universality, or minimax universality, requires uniform convergence over the source class.
Such a distinction is not merely technical; it leads to different convergence rates \cite{minimaxRD}. In this paper, we only enforce a pointwise, weak-universality requirement over channel sequences $\mathbf P\in\mathscr{C}_{\mathrm{reg}}(\mathcal W)$.

\subsection{Universal Source-Channel Code Construction for Block Erasure Channels \label{tiredaf}}

Throughout this subsection, let $\mathcal{W}$ be the set of block erasure channels with input alphabet $\mathcal{X} = \{0,1 \}^m$, output alphabet $\mathcal{Y} = \mathcal{X} \cup \{ e\}$, and erasure probability in $[0, 1)$. We denote $P_{Y|X} \in \mathcal{W}$ as $P_{Y|X} = \operatorname{BLEC}(2^m, p)$. Specifically,
\begin{align}
    \mathcal{W} &= \left \{ \operatorname{BLEC}(2^m, p) : p \in [0, 1) \right \},\\    \mathcal{W}_n &= \left \{ \bigotimes_{i=1}^n P^{(i)}_{Y|X} : (p_1, \ldots, p_n) \in [0, 1)^n\right \}, \label{v38}
\end{align}
where we write $P_{Y|X}^{(i)} = \operatorname{BLEC}(2^m, p_i)$. Also note that 
\begin{align*}
    C(P_{Y|X}) &= (1-p) \log(2^m),\\
    V(P_{Y|X}) &= p(1-p) (\log (2^m))^2.
\end{align*}
\begin{proposition}
Fix any $t > 0$ and a sequence $(q_i)_{i \geq 1}$, where each  $q_i \in (0, 1)$. Let $Q_{Y^n|X^n} = Q^{(1)}_{Y|X} \times \cdots \times Q^{(n)}_{Y|X}$, where $Q^{(i)}_{Y|X} = \operatorname{BLEC}(2^m, q_i)$. Let $U_{n, k}^{(d)}$ be the $(t, Q_{Y^n|X^n}, n, k)$-based source-channel code from Definition \ref{Qbasedwithoutcost}. Then the family $\{U_{n, k}^{(d)} :(n, k) \in \mathbb{N}^2 \}$ is channel-blind over $(\mathcal{W}_{n})_{n \geq 1}$, where $\mathcal{W}_n$ is given in $(\ref{v38})$.    
    \label{becuniversaltheorem}
\end{proposition}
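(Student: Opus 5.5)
The plan is to verify that the $(t,Q_{Y^n|X^n},n,k)$-based code is a \emph{well-defined object that is the same for every} $P_n\in\mathcal{W}_n$. This needs a small argument beyond "$P_{Y|X}$ does not appear in the construction". Definition \ref{unified_def} and Definition \ref{qyxbasedcodedef} specify the decoder kernel only for $P_Y^*$-a.e.\ $y$, and "arbitrarily otherwise". Both the validity conditions and the null set on which the decoder is left unspecified therefore refer to the true channel. So I will show three things. First, every ingredient is determined by $(q_i)$, $t$, $P_S$, $d$ and $\operatorname{d}$ alone. Second, the kernel is defined for \emph{every} $y^n\in\mathcal{Y}^n$, with no exceptional set. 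Third, the technical conditions of Definition \ref{unified_def}, Definition \ref{qyxbasedcodedef} and Proposition \ref{bxc} hold simultaneously for all $(p_1,\ldots,p_n)\in[0,1)^n$.

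\textbf{Step 1: the codebook law.} For $Q^{(i)}_{Y|X}=\operatorname{BLEC}(2^m,q_i)$ we have $I(Q,Q^{(i)}_{Y|X})=(1-q_i)H(Q)$. Since $q_i<1$, the unique maximizer is the uniform law on $\{0,1\}^m$, so Assumption \ref{assumpr1_channels} holds and $Q^*_{X^n}$ is uniform on $\{0,1\}^{mn}$. This law does not depend on $p_i$, nor even on $q_i$. Assumptions \ref{assumpr2_channels} and \ref{assumpr3_channels} hold for every pair $(P^{(i)}_{Y|X},Q^{(i)}_{Y|X})$: the capacity is at most $m\log 2$, and the counting measure on the finite set $\mathcal{Y}$ dominates everything. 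The source-side ingredient $P^*_{Z^k}$ and the law of the marked Poisson process depend only on $P_S$, $d$, $\operatorname{d}$ and $Q^*_{X^n}$.

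\textbf{Step 2: the decoder kernel, defined everywhere.} I will compute $Q^{(i)}_{Y_i}{}^{\sim t}(y)=\sum_u 2^{-m}Q^{(i)}_{Y|X}(y|u)^t$ in the two cases.
\begin{align*}
Q^{(i)}_{Y_i}{}^{\sim t}(y)=2^{-m}(1-q_i)^t \ \text{ for } y\in\{0,1\}^m,
\qquad
Q^{(i)}_{Y_i}{}^{\sim t}(e)=q_i^t .
\end{align*}
Both values lie in $(0,\infty)$ because $q_i\in(0,1)$, so condition \eqref{thisimpliesit} holds for \emph{every} $y$. The kernel $Q^{(i),\sim t}_{X|Y}(\cdot|y)$ is then the point mass at $y$ when $y\neq e$, and the uniform law when $y=e$. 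Both are absolutely continuous with respect to $Q^*_{X_i}$, for every output symbol and every $t>0$. Consequently the decoder \eqref{unif_dec} is specified on all of $\mathcal{Y}^n$. Its per-letter information density is $m\log 2$ on unerased matching letters, $-\infty$ on unerased mismatching letters, and $0$ on erasures, and it does not involve $p_i$. This is exactly where $q_i>0$ is essential. If $q_i=0$, the kernel would be undefined at $y=e$, and erasures have positive probability under any true $p_i>0$; the code would then require a channel-dependent patch.

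\textbf{Step 3: compatibility with every true channel.} For each $(p_i)\in[0,1)^n$, the condition $K_{X|Y=y}\ll Q_X$ for $P_Y$-a.e.\ $y$ in Definition \ref{unified_def} holds because, by Step 2, it holds for all $y$. The condition $Q_X\times P_{Y|X}\ll Q_X\times P_Y$ of Proposition \ref{bxc} also holds: the alphabets are finite and every $(x,y)$ with $P_{Y|X}(y|x)>0$ has $P_Y(y)>0$. Hence the same random code, consisting of the Poisson process law, the encoder \eqref{unif_enc} and the decoder \eqref{unif_dec}, is a legitimate $(P^*_{Z^k},Q^*_{X^n},Q^{\sim t}_{X^n|Y^n})$ code over every $P_n\in\mathcal{W}_n$, and Corollary \ref{cor378ryhre} applies to it. Since this holds for every $(n,k)$, the family is channel-blind over $(\mathcal{W}_n)_{n\ge1}$.

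As a by-product feeding the later universality result, Step 2 shows that under $P^{(i)}_{Y|X}$ the information density equals $m\log 2$ with probability $1-p_i$ and $0$ otherwise. Therefore $C_t=(1-p_i)m\log2$ and $V_t=p_i(1-p_i)(m\log2)^2$, independent of $t$ and $q_i$. The only step needing care is Step 2: the "arbitrarily otherwise" clause must not hide a channel dependence. I handle it by showing that the exceptional set is empty.
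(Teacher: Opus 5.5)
Your proposal is correct and follows essentially the paper's route: the paper likewise computes $Q^*_{X^n}=\operatorname{Unif}(\mathcal{X}^n)$ and the kernel \eqref{kernelforbec} (uniform on erasures, point mass at $y$ otherwise), concludes that the code depends on none of $p_i$, $q_i$ or $t$, and additionally writes the decoder out explicitly as the earliest-arrival point consistent with the unerased blocks \eqref{bec_dec}. Your extra care in showing that the exceptional $P_Y^*$-null set is empty, and that the hypotheses of Definition \ref{unified_def} and Proposition \ref{bxc} hold simultaneously for every $(p_1,\ldots,p_n)\in[0,1)^n$, makes explicit a point the paper leaves implicit.
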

\begin{IEEEproof}
We first show that the code construction is independent of $t$ and $Q_{Y^n|X^n}$, which means  that the $(t, Q_{Y^n|X^n}, n, k)$-based source-channel codes have the same encoder and decoder mappings for all values of $t > 0$ and $(q_1, \ldots, q_n) \in (0, 1)^n$. Recall from Definition \ref{Qbasedwithoutcost} that the $(t, Q_{Y^n|X^n}, n, k)$-based source-channel code is simply the $(P_{Z^k}^*, Q_{X^n}^*, Q^{\sim t}_{X^n|Y^n})$ source-channel code from Definition \ref{unified_def}, where $P_{Z^k}^*$ is defined in $(\ref{rep2})$, $Q_{X^n}^* = \operatorname{Unif}( \mathcal{X}^n)$ and, from $(\ref{rep2})$, $Q^{\sim t}_{X^n|Y^n}$ is given by 
\begin{align}
\begin{split}
Q^{\sim t}_{X^n|Y^n} &= \bigotimes_{i=1}^n Q^{(i), \sim t}_{X|Y}, \text{ where } \\ 
    Q^{(i), \sim t}_{X|Y}(x|y) &= \begin{cases}
        2^{-m} & y = e\\
        1 & x=y \neq e\\
        0 & x \neq y \neq e. 
    \end{cases}
\end{split}
\label{kernelforbec}
\end{align}
Hence, the $(t, Q_{Y^n|X^n}, n, k)$-based code has no dependence on $t$ and $(q_1, \ldots, q_n)$. In particular, it is based on the Poisson point process $\{ (\bar{X}_i^n, \bar{Z}_i^k), T_i  \}_{i \in \mathbb{N}}$ with intensity measure $Q_{X^n}^* \times P_{Z^k}^* \times \lambda_{\mathbb{R}_{\geq 0}}$, and the encoder and decoder mappings can be obtained by particularizing $(\ref{unif_enc})$ and $(\ref{unif_dec})$ for the $(P_{Z^k}^*, Q_{X^n}^*, Q^{\sim t}_{X^n|Y^n})$ source-channel code as follows: 
\begin{align}
   &s^k \mapsto \bar{X}_{i^\star}^n, \text{ where } i^\star = \begin{cases}
        \argmin_{i:\bar{Z}_i^k \in B_d(s^k)} T_i & \text{ if } \rho(s^k) > 0,\\
        \argmin_i  T_i & \text{ otherwise,}
    \end{cases} \label{bec_enc}\\
    &y^n \mapsto \bar{Z}_{i^*}^k, \text{ where } i^* = \argmin_{i: \bar{X}_{i, j} = y_j \forall j \text{ with } y_j \neq e} T_i. \label{bec_dec}
\end{align}
The proof of $(\ref{bec_dec})$ is given in Appendix \ref{bec_dec_proof} for convenience.
\end{IEEEproof}

Next, we lower bound $k^*_{\mathcal{U}^{(d)}}(d, \epsilon, n|P_n)$ for the channel-blind family $\mathcal{U}^{(d)} = \{U_{n, k}^{(d)} : (n, k) \in \mathbb{N}^2 \}$ from Proposition \ref{becuniversaltheorem}, where $P_n = P_{Y^n|X^n}^{\mathbf{P}}$ and $\mathbf{P} \in \mathscr{C}_{\text{reg}}(\mathcal{W})$ is a block erasure channel sequence satisfying certain regularity assumptions as described in Theorem \ref{BEC_particularization} below.
\begin{theorem}
Fix any $\epsilon \in (0, 1)$. Let the source distribution $P_S$, distortion level $d$ and distortion measure $\operatorname{d}$ satisfy Assumptions 
\ref{assumpr1_source}\,-\,\ref{assumpr3_source} and Conditions \ref{cond1dee}\,-\,\ref{weird9thmomcond}. Let $ \mathbf{P} = \left(P_{Y|X}^{(i)} \right)_{i \geq 1}$ be a sequence of block erasure channels, where $P_{Y|X}^{(i)} = \operatorname{BLEC}(2^m, p_i)$ and $p_i < 1$, such that there exist constants $\delta > 0$ and $n_0$ depending on $\mathbf{P}$ such that  
\begin{itemize}
    \item $\sum_{i=1}^n C(P_{Y|X}^{(i)}) \geq n\delta$ for all $n \geq n_0$, and 
    \item $\sum_{i=1}^n V(P_{Y|X}^{(i)} ) \geq n \delta$ for all $n \geq n_0$. 
\end{itemize}
If $V_d(P_S) > 0$, the second requirement above can be removed. Let $\mathscr{C}_{\text{reg}}(\mathcal{W})$ denote the set of all such channel sequences. Then for every $\mathbf{P} \in \mathscr{C}_{\text{reg}}(\mathcal{W})$, there exist constants $c$ and $n_0'$ such that for $n \geq n_0'$, 
\begin{align}
    k^*_{\mathcal{U}^{(d)}}(d, \epsilon, n|P_n) &\geq \frac{1}{R_d(P_S)} \sum_{i=1}^n C(P_{Y|X}^{(i)}) - \frac{Q^{-1}(\epsilon)}{R_d(P_S)} \sqrt{\frac{V_d(P_S)}{R_d(P_S)} \sum_{i=1}^n C(P_{Y|X}^{(i)}) + \sum_{i=1}^n V(P_{Y|X}^{(i)})} - \mbox{} \notag \\
    & \quad \quad \quad \quad \quad \quad \quad \quad \quad \quad \frac{(\overline{c}-1/2)}{R_d(P_S)} \log(n) - \frac{c}{R_d(P_S)}, \label{conjecturedoptimal}
\end{align}
where $\mathcal{U}^{(d)} = \{U_{n, k}^{(d)}\}$ is the channel-blind family of source-channel codes over $(\mathcal{W}_{n})_{n \geq 1}$ from Proposition \ref{becuniversaltheorem}, $P_n \coloneqq P_{Y^n|X^n}^{\mathbf{P}} = P_{Y|X}^{(1)} \times \cdots \times P_{Y|X}^{(n)}$, the constant $\overline{c}$ is as given in $(\ref{c28f})$, and
\begin{align*}
    C(P_{Y|X}^{(i)}) &= (1-p_i) \log(2^m),\\
    V(P_{Y|X}^{(i)}) &= p_i(1-p_i) (\log (2^m))^2.
\end{align*}
\label{BEC_particularization}
\end{theorem}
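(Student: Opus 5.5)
The plan is to apply Theorem~\ref{thm_finite_block1} with design channels $Q^{(i)}_{Y|X}=\operatorname{BLEC}(2^m,q_i)$ for some fixed $q_i\in(0,1)$ (say $q_i=1/2$) and any $t>0$. Proposition~\ref{becuniversaltheorem} shows the resulting code $U_{n,k}^{(d)}$ does not depend on $t$ or on the $q_i$. The proof then reduces to three tasks: computing the single-letter quantities, checking the hypotheses of Theorem~\ref{thm_finite_block1}, and inverting the sufficient condition (\ref{thm_cond}) to get a lower bound on $k$.

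\emph{Single-letter quantities.} By (\ref{kernelforbec}) and $Q^*_{X_i}=\operatorname{Unif}(\{0,1\}^m)$, the information density is $\imath(Y_i,X_i)=\log 2^m$ when $Y_i\neq e$ and $0$ when $Y_i=e$. Under $Q_{X_i}^*\times P^{(i)}_{Y|X}$ this variable is $\log(2^m)$ times a Bernoulli$(1-p_i)$. Hence
\[
C_t(P^{(i)}_{Y|X}\|Q^{(i)}_{Y|X})=(1-p_i)\log 2^m=C(P^{(i)}_{Y|X}),\qquad V_t=p_i(1-p_i)(\log 2^m)^2=V(P^{(i)}_{Y|X}).
\]
Assumptions~\ref{assumpr1_channels}--\ref{assumpr3_channels} hold: the uniform input is the unique capacity-achieving input, capacities are finite, and the counting measure dominates. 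However, Theorem~\ref{thm_finite_block1} requires $C_t>0$ for every $i$, and this fails if some $p_i=1$. Since $p_i<1$ by hypothesis, the condition holds.

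\emph{Hypotheses of Theorem~\ref{thm_finite_block1}.} The third-moment bound (\ref{assumpbe1}) is immediate, since each channel term is at most $(\log 2^m)^3$ and the source term is finite by remark~\ref{iskomw0455}. The variance bound (\ref{assumpbe2}) follows from $\sum V\ge n\delta$ (or from $V_d(P_S)>0$) once $k/n$ is bounded.

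For admissibility, I will take $\operatorname{S}=\{(n,k_n)\}$ with $k_n$ the integer part of the right-hand side of (\ref{conjecturedoptimal}). This gives $k_n/n\in[\beta_-,\beta_+]$ for large $n$, because
\[
\delta n\le \sum_i C(P^{(i)}_{Y|X})\le n\log 2^m .
\]
Note that the constant $c$ in Theorem~\ref{thm_finite_block1} depends on $\operatorname{S}$ only through $\kappa_1,\kappa_2,\beta_\pm$. This circularity is harmless: I will first fix $\beta_\pm$ and $\kappa$'s from $\delta$ and $m$, obtain $c$, and only then define $k_n$ with that $c$, enlarged slightly.

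\emph{Inversion, the main obstacle.} Write $A_n=\sum_i C(P^{(i)}_{Y|X})$ and $B_n=\sum_i V(P^{(i)}_{Y|X})$. We need $k=k_n$ to satisfy
\[
A_n-kR_d\ge \sqrt{kV_d+B_n}\,Q^{-1}(\epsilon)+(\bar c-\tfrac12)\log k+c .
\]
Substitute $k=A_n/R_d-\Delta$ with $\Delta=O(\sqrt n)$. Then $\sqrt{kV_d+B_n}=\sqrt{A_nV_d/R_d+B_n}+O(1)$ by a Taylor expansion, since the argument is $\Theta(n)$ and the perturbation is $O(\sqrt n)$. Also $\log k=\log n+O(1)$. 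The resulting $O(1)$ slack is absorbed into the constant $c$, which gives (\ref{conjecturedoptimal}) for $n\ge n_0'$; this absorption must hold for both signs of $Q^{-1}(\epsilon)$. Finally, because $k^*_{\mathcal{U}^{(d)}}$ is a supremum over $k$, exhibiting this single $k_n$ for which $U^{(d)}_{n,k_n}$ is $(d,\epsilon,n,k_n)$ suffices. No monotonicity in $k$ is needed.
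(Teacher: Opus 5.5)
Your proposal is correct and follows the paper's proof essentially step for step. The paper makes the same choice of design channel $q_i=1/2$, verifies the hypotheses of Theorem~\ref{thm_finite_block1} the same way via $C_t=C$ and $V_t=V$, and picks $k_n$ as the floor of the right-hand side of \eqref{conjecturedoptimal}. The differences are minor:
\begin{itemize}
\item \emph{Inversion step.} You absorb an $O(1)$ perturbation of $\sqrt{kV_d(P_S)+\sum_i V(P^{(i)}_{Y|X})}$ via a Taylor expansion. The paper instead splits on the sign of $Q^{-1}(\epsilon)$ and uses $\widehat{k}_n\le\frac{1}{R_d(P_S)}\sum_i C(P^{(i)}_{Y|X})$ (respectively $\ge$) together with monotonicity of the square root.
\item \emph{Dependence of $c$ on $\operatorname{S}$.} You fix $\beta_\pm,\kappa_1,\kappa_2$ before defining $k_n$, which makes rigorous a circularity that the paper's write-up leaves implicit.
\end{itemize}
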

\textit{Proof:} The proof of Theorem \ref{BEC_particularization} is given in Appendix \ref{BEC_particularization_proof}.

Let $\operatorname{S}$ be any admissible source-channel blocklength set from Definition \ref{admissible_def}. An important intermediate result $(\ref{n350c})$ in the proof of Theorem \ref{BEC_particularization} establishes that for every $\epsilon \in (0, 1)$ and every channel sequence $\mathbf{P} \in \mathscr{C}_{\text{reg}}(\mathcal{W})$, the channel-blind code $U_{n, k}^{(d)}$ is $(d, \epsilon, n, k)$ whenever $(n, k) \in \operatorname{S}$, $n$ and $k$ are sufficiently large, and 
\begin{align}
  \sum_{i=1}^n C(P_{Y|X}^{(i)}) - k R_d(P_S) \geq  \sqrt{k V_d(P_S) + \sum_{i=1}^n V(P_{Y|X}^{(i)} )} \,\,Q^{-1}(\epsilon) + \left(\overline{c} - \frac{1}{2}\right) \log(k) + c,  \label{xwrsda}
\end{align}
where $c$ is a constant term.

In order to establish second-order universality over $\mathscr{C}_{\text{reg}}(\mathcal{W})$, where $\mathscr{C}_{\text{reg}}(\mathcal{W})$ is as specified in Theorem \ref{BEC_particularization}'s statement, we need a converse result establishing a refined asymptotic upper bound on 
$$k^*\left (d, \epsilon, n| P_{Y^n|X^n}^{\mathbf{P}} \right ),$$
that matches the RHS of $(\ref{conjecturedoptimal})$ up to the first two terms,
for every $\mathbf{P} \in \mathscr{C}_{\text{reg}}(\mathcal{W})$. For block erasure channels, such a converse result follows as a corollary of Theorem \ref{convoghlswhr0vm} below that we prove using \cite[Theorem 2]{kostina_JSCC} as the starting point. Theorem \ref{convoghlswhr0vm} gives a necessary condition for any code to be $(d, \epsilon, n, k)$ which in turn shows that the sufficient condition \eqref{xwrsda} for the channel-blind code is asymptotically tight to second-order; this was mentioned in Remark \ref{remarkeeusefulprop}. Another notable thing about Theorem \ref{convoghlswhr0vm} is that it is non-asymptotic in the channel blocklength $n$ and holds uniformly over all erasure probabilities $p_1, \ldots, p_n \in [0, 1]$. This is unlike standard JSCC normal approximation results which require both $n$ and $k$ to be sufficiently large. The technical reason for this specific strength of Theorem \ref{convoghlswhr0vm} is that the centered third absolute moment of the BLEC information density is upper bounded by a constant times its variance $V_n$. This means that the Berry--Esseen remainder term is of the form $C/\sqrt{V_n}$; through a careful subsequent analysis that splits the high and low variance cases, we obtain the final result that holds uniformly over all $n \geq 1$ and $p_1, \ldots, p_n \in [0, 1]$.     

\begin{theorem}
    Fix any $\epsilon \in (0, 1)$. Let the source distribution $P_S$, distortion level $d$ and distortion measure $\operatorname{d}$ satisfy Assumptions 
\ref{assumpr1_source}\,-\,\ref{assumpr3_source} and Conditions \ref{cond1dee}\,-\,\ref{weird9thmomcond}. Then there exist positive constants $k_0$ and $K < \infty$, depending only on $P_S$, $d$, $\operatorname{d}$, $\epsilon$ and $m$, such that for every $n \geq 1$, every $k \geq k_0$, and every collection of block erasure probabilities $p_1, \ldots, p_n \in [0, 1]$, if a $(d, \epsilon, n, k)$ source-channel code w.r.t. $\{P_{S^k}, \operatorname{d}_k, P_{Y^n|X^n} \}$ exists, where $P_{Y^n|X^n} = P_{Y|X}^{(1)} \times \cdots \times P_{Y|X}^{(n)}$ and $P_{Y|X}^{(i)} = \operatorname{BLEC}(2^m, p_i)$, then 
\begin{align}
\log(2^m)\sum_{i=1}^n(1-p_i)-kR_d(P_S)
&\geq
\sqrt{ k V_d(P_S)
+
\bigl(\log(2^m)\bigr)^2
\sum_{i=1}^n p_i(1-p_i)
}
\,Q^{-1}(\epsilon)
-\frac{1}{2} \log(k) -K.
\label{eq:normal-approximation-converse}
\end{align}
\label{convoghlswhr0vm}
\end{theorem}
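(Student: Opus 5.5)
The plan is to start from the general JSCC converse \cite[Theorem 2]{kostina_JSCC}. For any $(d,\epsilon,n,k)$ code it gives, for every $\gamma>0$ and every output distribution $\bar P_{Y^n}$,
\[
\epsilon \geq \inf_{P_{X^n|S^k}} \mathbb{P}\bigl(\jmath_d(S^k) - \imath_{X^n;\bar Y^n}(X^n;Y^n) \geq \gamma\bigr) - e^{-\gamma}.
\]
Here $\jmath_d(S^k)=\sum_{j=1}^k \jmath_d(S_j,P_S,\operatorname{d})$ is the $\operatorname{d}$-tilted information of the block.

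For the nonstationary BLEC the natural auxiliary output law is the capacity-achieving one, $\bar P_{Y^n}=\bigotimes_i Q_{X}^*P^{(i)}_{Y|X}$ with uniform inputs. With this choice, for every input $x^n$ the information density $\sum_i \imath_i(x_i;Y_i)$ has the same distribution. Its $i$th term equals $\log 2^m$ if $Y_i$ is not erased and $0$ if it is erased. So the sum is $\log(2^m)\,\sum_i B_i$ with $B_i\sim\mathrm{Bern}(1-p_i)$ independent and independent of $S^k$. This symmetry removes the infimum over encoders. The bound becomes
\[
\epsilon \geq \mathbb{P}\Bigl(\textstyle\sum_{j}\jmath_d(S_j) - \log(2^m)\sum_i B_i \geq \gamma\Bigr) - e^{-\gamma},
\]
and we take $\gamma=\frac12\log k$, so that $e^{-\gamma}=k^{-1/2}$.

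Next we apply Berry--Esseen to the sum of $k+n$ independent centered terms. The mean is $kR_d-\log(2^m)\sum_i(1-p_i)$ and the variance is $V_{n,k}=kV_d+(\log 2^m)^2\sum_ip_i(1-p_i)$. The third absolute moment of the source terms is finite by Condition \ref{weird9thmomcond} \cite[(339)]{kostina_JSCC}. For the channel terms, $\mathbb{E}|B_i-(1-p_i)|^3\leq p_i(1-p_i)$. Hence the total third moment is at most $k\,T_d+(\log 2^m)^3\sum_i p_i(1-p_i)\leq c_0 V_{n,k}$, uniformly in $n$ and $p_i$, provided $V_d>0$. The Berry--Esseen remainder is then at most $c_1/\sqrt{V_{n,k}}$. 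This gives
\[
\epsilon\geq Q\!\left(\frac{\log(2^m)\sum_i(1-p_i)-kR_d+\frac12\log k}{\sqrt{V_{n,k}}}\right)-\frac{c_1}{\sqrt{V_{n,k}}}-\frac{1}{\sqrt k}.
\]

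We invert this with a case split, which I expect to be the main obstacle. If $V_d>0$, then $V_{n,k}\geq kV_d$, so the remainder is $O(1/\sqrt k)$. Inverting $Q$ with $\epsilon+O(k^{-1/2})$ in place of $\epsilon$ perturbs $Q^{-1}$ by $O(k^{-1/2})$, by a Taylor bound that is valid once $k\geq k_0$ and $\epsilon+O(k^{-1/2})<1$. Multiplying by $\sqrt{V_{n,k}}$ then gives an error of $O(\sqrt{V_{n,k}/k})$, and this is bounded only if $\sum_ip_i(1-p_i)=O(k)$. That is not uniform in $n$. So we split on whether $V_{n,k}\leq Mk$ for a large constant $M$. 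If it is, the inversion gives \eqref{eq:normal-approximation-converse} with a constant $K$. If $V_{n,k}>Mk$, we instead need the remainder $c_1/\sqrt{V_{n,k}}$ to be genuinely small. Here I would re-run the Taylor step with error $c_1/\sqrt{V_{n,k}}+k^{-1/2}$, which after scaling by $\sqrt{V_{n,k}}$ leaves $O(1)+\sqrt{V_{n,k}/k}$.

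To control that leftover $\sqrt{V_{n,k}/k}$ term I would take $\gamma=\frac12\log V_{n,k}$ rather than $\frac12\log k$ in this regime. The resulting extra $\frac12\log(V_{n,k}/k)$ is nonpositive relative to the target only if handled carefully. Alternatively one can note that $\sqrt{V}\,Q^{-1}(\epsilon+a/\sqrt V)\geq\sqrt V Q^{-1}(\epsilon)-a'$, choosing $\gamma=\frac12\log V_{n,k}$ so that $e^{-\gamma}=V^{-1/2}$. The cost is then replacing $-\frac12\log k$ by $-\frac12\log V_{n,k}$, which must be compared against the slack available when the $Q^{-1}$ term dominates. I would also treat $V_d=0$, and the degenerate regime where $V_{n,k}$ is bounded: there the right side of \eqref{eq:normal-approximation-converse} is at most a constant, and the Markov/Chebyshev form of the converse gives the bound directly with a large $K$. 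Checking that these regimes patch together with constants depending only on $(P_S,d,\operatorname{d},\epsilon,m)$ is the delicate part.
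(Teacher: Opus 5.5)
\paragraph{The gap: the large-variance regime.}
Your setup matches the paper's proof for most of the way: the converse \cite[Theorem 2]{kostina_JSCC} with the product capacity-achieving output, the symmetry that removes the infimum over encoders, $\gamma=\frac12\log k$, Berry--Esseen with aggregate third moment bounded by a constant times the total variance, a Chebyshev/Cantelli bound when that variance is bounded, and a Lipschitz inversion of $Q^{-1}$ when $V_{n,k}\le Mk$. The third-moment bound needs no $V_d(P_S)>0$, since $V_d(P_S)=0$ forces $\jmath_d(S,P_S,\operatorname{d})=R_d(P_S)$ almost surely.

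The gap is the regime you flag yourself, $V_{n,k}>Mk$, and the remedy you propose does not work. Taking $\gamma=\frac12\log V_{n,k}$ makes the inversion error $O(1)$, but it replaces $-\frac12\log k$ by $-\frac12\log V_{n,k}$. Take $k=k_0$ fixed and $p_i=\frac12$ for all $i$: then $V_{n,k}$ grows linearly in $n$, and the loss $\frac12\log(V_{n,k}/k)$ is of order $\frac12\log n$. That loss cannot be absorbed into a constant $K$ that is uniform in $n$ and $(p_1,\dots,p_n)$. Re-tuning $\gamma$ only trades the $\sqrt{V_{n,k}/k}$ inversion error against the $\gamma$ penalty. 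The best balance is still a loss of order $\frac12\log V_{n,k}$, so the ``slack'' you allude to never appears from the probabilistic estimate.

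\paragraph{The missing idea.}
What closes this case is a structural property of the BLEC parameters, not a sharper probabilistic bound. The channel dispersion is dominated by $\log(2^m)$ times the capacity, because $p_i(1-p_i)\le 1-p_i$. Hence
\[
V_{n,k}\le kV_d(P_S)+\log(2^m)\Bigl(\log(2^m)\sum_{i=1}^n(1-p_i)\Bigr),
\]
which rearranges to
\[
\log(2^m)\sum_{i=1}^n(1-p_i)-kR_d(P_S)\ge \frac{V_{n,k}-k\bigl(V_d(P_S)+\log(2^m)R_d(P_S)\bigr)}{\log(2^m)}.
\]
The right-hand side is quadratic in $\sqrt{V_{n,k}}$. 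So there is a constant $D_\epsilon$, depending only on $(P_S,d,\operatorname{d},\epsilon,m)$, such that whenever $\sqrt{V_{n,k}}\ge D_\epsilon\sqrt{k}$ the right-hand side is at least $\sqrt{k}\,|Q^{-1}(\epsilon)|\sqrt{V_{n,k}}\ge Q^{-1}(\epsilon)\sqrt{V_{n,k}}$. In that regime \eqref{eq:normal-approximation-converse} therefore holds for every choice of parameters, without using the existence of a code at all.

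With $M=D_\epsilon^2$, the Berry--Esseen route is needed only for $V_{n,k}\le D_\epsilon^2 k$. There your $\gamma=\frac12\log k$ inversion costs at most a constant multiple of $D_\epsilon+c_1$, provided the bounded-variance threshold is chosen large enough that $c_1/\sqrt{V_{n,k}}$ keeps the argument of $Q^{-1}$ in a compact subinterval of $(0,1)$. With this replacement for your large-variance step, the case split closes with constants independent of $n$ and the $p_i$.
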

\textit{Proof:} The proof of Theorem \ref{convoghlswhr0vm} is given in Appendix \ref{convoghlswhr0vm_proof}.

\begin{corollary}
Fix any $\epsilon\in(0,1)$. Let the source distribution $P_S$,
distortion level $d$ and distortion measure $\operatorname{d}$ satisfy
Assumptions $(\ref{assumpr1_source})$--$(\ref{assumpr3_source})$
and Conditions $(\ref{cond1dee})$--$(\ref{weird9thmomcond})$. Let $ \mathbf{P} = \left(P_{Y|X}^{(i)} \right)_{i \geq 1}$ be a sequence of block erasure channels, where $P_{Y|X}^{(i)} = \operatorname{BLEC}(2^m, p_i)$ and $p_i < 1$, such that there exist constants $\delta > 0$ and $n_0$ depending on $\mathbf{P}$ such that  
\begin{itemize}
    \item $\sum_{i=1}^n C(P_{Y|X}^{(i)}) \geq n\delta$ for all $n \geq n_0$, and 
    \item $\sum_{i=1}^n V(P_{Y|X}^{(i)} ) \geq n \delta$ for all $n \geq n_0$. 
\end{itemize}
If $V_d(P_S) > 0$, the second requirement above can be removed. Then there exist positive constants $n_1$ and $\widetilde{K} <\infty$ depending on $\mathbf{P}$ such
that for every $n\geq n_1$,
\begin{align}
    k^*(d,\epsilon,n\mid P_n) &\leq
    \frac{1}{R_d(P_S)}
    \sum_{i=1}^n C\left(P_{Y|X}^{(i)}\right)
    -
    \frac{Q^{-1}(\epsilon)}{R_d(P_S)}
    \sqrt{
        \frac{V_d(P_S)}{R_d(P_S)}
        \sum_{i=1}^n C\left(P_{Y|X}^{(i)}\right)
        +
        \sum_{i=1}^n V\left(P_{Y|X}^{(i)}\right)
    } \notag \\
    & \quad \quad \quad \quad \quad \quad 
    +
    \frac{1}{2R_d(P_S)}\log n
    +
    \widetilde{K},
    \label{eq:kstar-converse-nonstationary-bec}
\end{align}
where $P_n \coloneqq P_{Y^n|X^n}^{\mathbf{P}} = P_{Y|X}^{(1)} \times \cdots \times P_{Y|X}^{(n)}$. 
\label{corocoro}
\end{corollary}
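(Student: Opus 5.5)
The corollary follows from Theorem \ref{convoghlswhr0vm} by inverting its necessary condition \eqref{eq:normal-approximation-converse} in $k$. Write $C_n \coloneqq \sum_{i=1}^n C(P_{Y|X}^{(i)})$ and $V_n \coloneqq \sum_{i=1}^n V(P_{Y|X}^{(i)})$, and set $k^* = k^*(d,\epsilon,n|P_n)$.

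\textbf{Step 1: a crude linear bound on $k^*$.} If $k^* < k_0$ there is nothing to prove, so assume $k^*\ge k_0$. By the definition \eqref{unrestricted_benchmark}, a $(d,\epsilon,n,k)$ code then exists for some $k \ge k_0$ arbitrarily close to $k^*$, or equal to it if the supremum is attained. For such $k$, Theorem \ref{convoghlswhr0vm} gives
\begin{align*}
C_n - kR_d(P_S) \ge \sqrt{kV_d(P_S)+V_n}\,Q^{-1}(\epsilon) - \tfrac12\log k - K .
\end{align*}
We have $C_n\le n\log 2^m$ and $V_n \le n(\log 2^m)^2/4$. When $\epsilon<1/2$ the square-root term is nonnegative. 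When $\epsilon\ge 1/2$ it is at least $-|Q^{-1}(\epsilon)|\sqrt{kV_d(P_S)+V_n}$. In both cases the inequality forces $kR_d(P_S) \le C_n + O(\sqrt{k+n}) + \tfrac12\log k + K$. This yields $k \le \beta n$ for some constant $\beta$ and all large $n$; in particular $k^*<\infty$ and $\log k \le \log n + O(1)$.

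\textbf{Step 2: inverting the condition.} Let $k$ be as in Step 1. Rearranging the displayed inequality gives
\begin{align*}
kR_d(P_S) \le C_n - Q^{-1}(\epsilon)\sqrt{kV_d(P_S)+V_n} + \tfrac12\log n + K'.
\end{align*}
From this, $k = C_n/R_d(P_S) + O(\sqrt n)$, using $C_n \ge n\delta$ and the bound from Step 1. Write $k = C_n/R_d(P_S) + \Delta$ with $\Delta = O(\sqrt n)$. Then
\begin{align*}
\sqrt{kV_d(P_S)+V_n} = \sqrt{\tfrac{V_d(P_S)}{R_d(P_S)}C_n + V_n + \Delta V_d(P_S)} .
\end{align*}
The regularity hypotheses make the main term under the root at least of order $n$: if $V_d(P_S)>0$ this follows from $C_n\ge n\delta$, and otherwise from $V_n \ge n\delta$. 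A first-order Taylor expansion then shows that perturbing the argument by $\Delta V_d(P_S) = O(\sqrt n)$ changes the square root by only $O(1)$. Substituting and dividing by $R_d(P_S)$ gives \eqref{eq:kstar-converse-nonstationary-bec} for every admissible $k$. The constants produced along the way depend on $\mathbf{P}$ only through $\delta$ and $n_0$. Taking the supremum over such $k$ gives the bound for $k^*$.

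\textbf{Main obstacle.} The delicate point is this square-root linearization. It needs a uniform $O(1)$ error, which in turn needs the lower bound $\Theta(n)$ on the main variance term together with the a priori $O(\sqrt n)$ control on $\Delta$ from the crude bound. The sign of $Q^{-1}(\epsilon)$ must also be handled in both cases, which the two-sided Taylor estimate does.
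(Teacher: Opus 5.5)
\textbf{Gap: the two-sided bound on $\Delta$.} Your route is the paper's route. You use the crude bound $k\le\beta n$ from Theorem \ref{convoghlswhr0vm}, the identity $kV_d(P_S)+V_n=\mathsf{B}_n+\Delta V_d(P_S)$ with $\mathsf{B}_n=\frac{V_d(P_S)}{R_d(P_S)}C_n+V_n=\Theta(n)$, a square-root linearization with $O(1)$ error, and $\log k\le\log n+O(1)$. The problem is the step ``From this, $k=C_n/R_d(P_S)+O(\sqrt n)$'', which is only half justified.

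\begin{itemize}
\item The necessary condition of Theorem \ref{convoghlswhr0vm} bounds $k$ only from above. Combined with $k\le\beta n$, it gives $\Delta\le O(\sqrt n)$, but no lower bound on $\Delta$.
\item Small $k$ satisfy that inequality with $\Delta=-\Theta(n)$; for example $k=k_0$ does, since $C_n-k_0R_d(P_S)\ge n\delta-O(1)$ while the right-hand side is $O(\sqrt n)$. Neither $C_n\ge n\delta$ nor Step 1 changes this.
\item You then conclude the bound ``for every admissible $k$'' through a Taylor expansion that presupposes $|\Delta|=O(\sqrt n)$. As written, the argument does not cover such $k$.
\end{itemize}

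The missing direction matters exactly when $Q^{-1}(\epsilon)>0$ and $\Delta<0$. In that case $\sqrt{kV_d(P_S)+V_n}<\sqrt{\mathsf{B}_n}$, so replacing the former by the latter in the upper bound $kR_d(P_S)\le C_n-Q^{-1}(\epsilon)\sqrt{kV_d(P_S)+V_n}+\dots$ goes the wrong way. Without $\Delta\ge -O(\sqrt n)$ the resulting error is not $O(1)$. The conclusion is still true for such $k$, but only because $k$ lies far below the target, a fact your argument never uses. In every other sign combination, either the replacement is favorable or it is controlled by the bound $\Delta\le O(\sqrt n)$ that you do have.

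\textbf{Fix.} Use the case split the paper makes at the outset.
\begin{itemize}
\item If $C_n-kR_d(P_S)\ge Q^{-1}(\epsilon)\sqrt{\mathsf{B}_n}-\frac12\log n$, then dividing by $R_d(P_S)$ gives the target inequality for this $k$ directly.
\item Otherwise $C_n-kR_d(P_S)<Q^{-1}(\epsilon)\sqrt{\mathsf{B}_n}\le|Q^{-1}(\epsilon)|\sqrt{\mathsf{B}_n}=O(\sqrt n)$. This is precisely the missing bound $\Delta>-O(\sqrt n)$, and your linearization then goes through unchanged.
\end{itemize}
Alternatively, you could treat only $k=k^*$ and import $k^*\ge C_n/R_d(P_S)-O(\sqrt n)$ from the achievability result of Theorem \ref{BEC_particularization}. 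That also closes the gap, but it is heavier and is not what you wrote.
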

\textit{Proof:} The proof of Corollary \ref{corocoro} is given in Appendix \ref{corocoro_proof}.

In view of Theorem \ref{BEC_particularization} and Corollary \ref{corocoro}, we have that the channel-blind family $\mathcal{U}^{(d)} = \{U_{n, k}^{(d)} \}$ is second-order universal over $\mathscr{C}_{\text{reg}}(\mathcal{W})$. Note also that we have proved a slightly stronger result than second-order universality because we have shown that the gap $k^* - k^*_{\mathcal{U}}$ in $(\ref{hereheregap})$ is $O(\log n)$ instead of merely $o(\sqrt{n})$.

For block erasure channels, the mechanism behind universality is particularly transparent. Indeed, the decoder in \eqref{bec_dec} restricts attention to Poisson points whose channel codewords are consistent with all unerased output blocks and then selects the consistent point with the smallest arrival time. The erasure probabilities affect only the random number and locations of the revealed blocks; they affect neither the consistency constraint nor the ordering among consistent candidates. Thus, the channel-blindness of this construction over BLECs is natural. Our results formalize this intuition for nonstationary BLECs and also show that the absence of a universality penalty persists beyond the first-order limit to the second-order term---in fact, up to an $O(\log n)$ gap. The Gaussian setting considered next is less immediate.

\subsection{Universal Source-Channel Code Construction for Stationary Gaussian Channels \label{tiredafgaussian}}

In this subsection, we let 
\begin{align}
    \mathcal{W} &= \left \{ P_{Y|X} : P_{Y|X}(\cdot|x) = \mathcal{N}(x, p^2) \text{ for all } x \in \mathbb{R}, p \in (0, \infty) \right \},\\  \mathcal{W}_n &= \left \{ P_{Y|X}^{\otimes n} : P_{Y|X} \in \mathcal{W}\right \}. \label{vn2438}
\end{align}
As in the previous subsection, we first construct a channel-blind family $\mathcal{U}^{(d, \alpha)}$ of source-channel codes over $(\mathcal{W}_n)_{n \geq 1}$
and then show that $\mathcal{U}^{(d, \alpha)}$ achieves the optimal first- and second-order asymptotic performance. Here, $\alpha > 0$ specifies the cost constraint on the channel input as follows: $\| X^n \|^2 \leq n \alpha$ almost surely, where $\| \cdot\|$ is the Euclidean norm. The same definitions from Section \ref{universalpluginconstruction} carry over with the additional parameter $\alpha$ included alongside $d$. The value of $\alpha$ is known at design time while the channel noise variance is not known.

Recall that for the case of block erasure channels (BLECs), a family of codes based on an arbitrary block erasure channel sequence with erasure probabilities in the interval $(0, 1)$ was automatically channel-blind (Proposition \ref{becuniversaltheorem}), which meant that the same JSCC construction was optimal across (almost) the whole set of BLEC sequences (Theorem \ref{BEC_particularization} and Corollary \ref{corocoro}). A similar observation applies to Gaussian channels but in the context of  channel coding only. A noise variance mismatch incurs no channel coding performance penalty because (i) spherical codebooks with codeword power $\alpha$ are optimal for every noise variance $p^2$ and (ii) an optimal channel decoder for Gaussian channels with equiprobable codewords is simply a nearest-neighbor decoder rule. However, in the context of JSCC, (ii) no longer holds since the decoder is now a MAP rule which corrects for a non-uniform prior induced by source encoding. In particular, the decoder of a JSCC matched to a specific Gaussian channel has dependence on the noise variance of the channel (see Remark \ref{b25remakr} given shortly).  

Hence, in the following channel-blind source-channel code construction for Gaussian channels, the code design will not be matched to any particular noise variance. Instead, the noise variance $p^2$ will be estimated by a plug-in estimator based on the channel output $y^n$. Define 
\begin{align*}
    \widehat{p}(y^n) &\coloneqq \max \left \{ \frac{1}{n}, \frac{||y^n||^2}{n} - \alpha\right \}
\end{align*}
Note that $\widehat{p}(y^n)$ is essentially the plug-in estimator of the channel noise variance; taking the maximum with $1/n$ prevents a nonpositive value of the estimate. Next, for every $y^n \in \mathbb{R}^n$ and $A \in \mathcal{B}(\mathbb{R}^n)$, define $\widehat{P}_{X^n|Y^n}(\cdot|y^n) \in \mathcal{P}(\mathbb{R}^n)$ by
\begin{align*}
    \widehat{P}_{X^n|Y^n}(A|y^n) &\coloneqq \frac{\int_A  \exp\left(- \frac{||y^n - x^n||^2}{2 \widehat{p}(y^n)} \right) dQ_{X^n}^{\operatorname{u}}(x^n)  }{\int_{\mathbb{R}^n} \exp\left(- \frac{||y^n - u^n||^2}{2 \widehat{p}(y^n)} \right) dQ_{X^n}^{\operatorname{u}}(u^n)   }. 
\end{align*}

\begin{definition}
We let $U_{n, k}^{(d, \alpha)}$ be the $(P_{Z^k}^*, Q_{X^n}^{\operatorname{u}}, \widehat{P}_{X^n|Y^n} )$ source-channel code from Definition \ref{unified_def}, where $P_{Z^k}^*$ is defined in $(\ref{rep2})$-$(\ref{rep3})$ and $$Q_{X^n}^{\operatorname{u}} = \operatorname{Unif}\left( \left \{x^n \in \mathbb{R}^n : \frac{1}{n}\sum_{i=1}^n x_i^2 = \alpha \right \} \right).$$ 
Then let 
\begin{align}
    \mathcal{U}^{(d, \alpha)} \coloneqq \{ U_{n, k}^{(d, \alpha)} : (n, k) \in \mathbb{N}^2\} \label{familyofcodes}
\end{align}
be the family of all such source-channel codes. \label{defofcodesforgaussianuniv}
\end{definition}

Evidently, $U_{n, k}^{(d, \alpha)}$ is channel-blind over $\mathcal{W}_n$ and hence, the family of codes  $(\ref{familyofcodes})$ comprising these $U_{n, k}^{(d, \alpha)}$ is channel-blind over $(\mathcal{W}_n)_{n \geq 1}$. Specializing Definition \ref{unified_def} for the $(P_{Z^k}^*, Q_{X^n}^{\operatorname{u}}, \widehat{P}_{X^n|Y^n} )$ source-channel code, we obtain that the shared randomness between the encoder and decoder is $\{ (\bar{X}_i^n, \bar{Z}_i^k), T_i  \}_{i \in \mathbb{N}}$, which are the points of a marked Poisson process with intensity measure $Q_{X^n}^{\operatorname{u}} \times P_{Z^k}^* \times \lambda_{\mathbb{R}_{\geq 0}}$ and independent of $S^k \sim P_{S^k}$. And the encoder and decoder mappings are 
\begin{align}
    &s^k \mapsto \bar{X}^n_{i^\star}, \text{ where } i^\star = \begin{cases}
        \argmin_{i:\bar{Z}_i^k \in B_d(s^k)} T_i & \text{ if } \rho(s^k) > 0\\
        \argmin_i  T_i & \text{ otherwise,}
    \end{cases} \label{univ_gaussian_enc}\\
    &y^n \mapsto \bar{Z}_{i^*}^k, \text{ where } i^* = \argmin_i \left \{ ||y^n - \bar{X}_i^n||^2 + 2 \widehat{p}(y^n) \log T_i\right \}, \label{univ_gaussian_dec}
\end{align}
where $\rho(s^k) = P_{Z^k}^*(B_d(s^k))$. 
\begin{remark}
    Note that with a known $p^2$, a matched decoder for the Gaussian channel will have
    \begin{align}
        \argmin_i \left \{ ||y^n - \bar{X}_i^n||^2 + 2 p^2 \log T_i\right \} \label{seethisforexampleknownp2}
    \end{align}
    instead of the $\widehat{p}(y^n)$-based rule in $(\ref{univ_gaussian_dec})$. Since $\widehat{p}(Y^n) \to p^2$ in probability, first-order optimality of the source-channel code $(\ref{univ_gaussian_enc})$-$(\ref{univ_gaussian_dec})$ is to be expected. Theorem \ref{thm_finite_block1_awgn_univ} shows, however, that the plug-in scheme also attains the optimal second-order asymptotic performance.
    \label{b25remakr} Before proving Theorem \ref{thm_finite_block1_awgn_univ}, we state a number of intermediate results. 
\end{remark}

The following is a corollary of Proposition \ref{bxc} for the source-channel code $U_{n, k}^{(d, \alpha)}$ as defined in Definition \ref{defofcodesforgaussianuniv}.
\begin{corollary}
Fix a cost level $0<\alpha < \infty$. Let $P_{Y^n|X^n} \in \mathcal{W}_n$ where $\mathcal{W}_n$ is given in $(\ref{vn2438})$. Then the ensemble-average excess distortion probability $\overline{P}_e(d, \alpha, n, k)$  with respect to $\{P_{S^k}, \operatorname{d}_k, P_{Y^n|X^n}\}$ of the code $U_{n, k}^{(d, \alpha)}$ satisfies  
    \begin{align}
        \overline{P}_e(d, \alpha, n, k) \leq \mathbb{E}\left [ \left(1 + P_{Z^k}^*(B_d(S^k)) \exp\left( \imath_{\widehat{P}_{X^n|Y^n}, Q_{X^n}^{\operatorname{u}}}(Y^n, X^n) \right) \right)^{-1}  \right], \label{gva2na2}
    \end{align}
    where the expectation above is w.r.t. $P_{S^k}, Q_{X^n}^{\operatorname{u}}$ and $P_{Y^n|X^n}$.  
\end{corollary}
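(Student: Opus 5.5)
The corollary is a direct instantiation of Proposition \ref{bxc} with $\mathcal{S}\to\mathcal{S}^k$, $\mathcal{Z}\to\mathcal{Z}^k$, $\mathcal{X}\to\mathbb{R}^n$, $\mathcal{Y}\to\mathbb{R}^n$, $P_Z^*\to P_{Z^k}^*$, $Q_X\to Q_{X^n}^{\operatorname{u}}$ and $K_{X|Y}\to\widehat{P}_{X^n|Y^n}$. The plan is therefore to verify the two hypotheses that Proposition \ref{bxc} requires. The first is the technical condition of Definition \ref{unified_def}, namely $\widehat{P}_{X^n|Y^n=y^n}\ll Q_{X^n}^{\operatorname{u}}$ for $P_{Y^n}$-a.e. $y^n$, where $P_{Y^n}=Q_{X^n}^{\operatorname{u}}P_{Y^n|X^n}$. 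The second is the absolute-continuity assumption $Q_{X^n}^{\operatorname{u}}\times P_{Y^n|X^n}\ll Q_{X^n}^{\operatorname{u}}\times P_{Y^n}$. Once both hold, \eqref{gva2na2} is literally \eqref{ljskfj777} with these substitutions.

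\emph{First hypothesis.} For each fixed $y^n\in\mathbb{R}^n$, the estimate $\widehat{p}(y^n)\geq 1/n>0$ is finite. Hence the weight $w(x^n)=\exp\bigl(-\|y^n-x^n\|^2/(2\widehat{p}(y^n))\bigr)$ is continuous and lies in $(0,1]$. The normalizing integral $\int w\,dQ_{X^n}^{\operatorname{u}}$ is therefore strictly positive and at most one, so $\widehat{P}_{X^n|Y^n}(\cdot|y^n)$ is a well-defined probability measure. It has density $w/\int w\,dQ_{X^n}^{\operatorname{u}}$ with respect to $Q_{X^n}^{\operatorname{u}}$, and this gives absolute continuity for every $y^n$, not merely almost every $y^n$. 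Measurability of $y^n\mapsto\widehat{P}_{X^n|Y^n}(A|y^n)$, so that it is a genuine kernel, follows from Fubini/Tonelli, since the integrand is jointly measurable and $\widehat{p}$ is continuous. The same display also yields the explicit information density
\[
\imath_{\widehat{P}_{X^n|Y^n},Q_{X^n}^{\operatorname{u}}}(y^n,x^n) = -\frac{\|y^n-x^n\|^2}{2\widehat{p}(y^n)} - \log\int_{\mathbb{R}^n} \exp\left(-\frac{\|y^n-u^n\|^2}{2\widehat{p}(y^n)}\right)dQ_{X^n}^{\operatorname{u}}(u^n).
\]

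\emph{Second hypothesis.} Every $P_{Y^n|X^n=x^n}=\mathcal{N}(x^n,p^2I_n)$ is mutually absolutely continuous with Lebesgue measure on $\mathbb{R}^n$, with strictly positive density. It follows that $P_{Y^n}$, as a mixture of such measures, also has an everywhere positive Lebesgue density. Consequently $P_{Y^n|X^n=x^n}\ll P_{Y^n}$ for every $x^n$ on the sphere. Integrating against $Q_{X^n}^{\operatorname{u}}$ gives $Q_{X^n}^{\operatorname{u}}\times P_{Y^n|X^n}\ll Q_{X^n}^{\operatorname{u}}\times P_{Y^n}$, and the expectation in \eqref{gva2na2} is well defined.

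\emph{Cost constraint and conclusion.} The cost constraint $\|X^n\|^2\le n\alpha$ holds almost surely automatically, because all marks are drawn from $Q_{X^n}^{\operatorname{u}}$, which is supported on the sphere of radius $\sqrt{n\alpha}$. It remains only to note that the decoder \eqref{univ_gaussian_dec} coincides with \eqref{unif_dec} for this kernel. This is a routine simplification: the normalizer depends only on $y^n$, and multiplying the objective by $-2\widehat{p}(y^n)<0$ turns the argmax into the stated argmin. Applying Proposition \ref{bxc} then gives the bound. I do not expect a real obstacle. The only point needing care is the well-definedness of the data-dependent kernel, and in particular the positivity of $\widehat{p}$, which is exactly why the floor $1/n$ is included.
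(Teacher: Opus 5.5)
Your proposal is correct and takes the paper's route. The paper presents this corollary simply as a direct specialization of Proposition \ref{bxc} to the $(P_{Z^k}^*, Q_{X^n}^{\operatorname{u}}, \widehat{P}_{X^n|Y^n})$ code, with no separate proof; you additionally check the two absolute-continuity hypotheses, using $\widehat{p}(y^n)\geq 1/n>0$ and the everywhere-positive Gaussian output density, which makes explicit what the paper leaves implicit.
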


A central limit theorem-based analysis of the information density 
\begin{align}
  \imath_{\widehat{P}_{X^n|Y^n}, Q_{X^n}^{\operatorname{u}}}(Y^n, X^n) = \log \left( \frac{d \widehat{P}_{X^n|Y^n  = y^n} }{dQ_{X^n}^{\operatorname{u}}}(X^n) \right) \Bigg |_{y^n = Y^n} \label{b2556wgtgh}   
\end{align}
requires new techniques that are given in Lemmas \ref{lemmafirstlowerboundthen} and \ref{mytwosidedbndprop} below. 
\begin{lemma}
    Let $(X^n, Y^n)$ have the joint law $Q_{X^n}^{\operatorname{u}} \times P_{Y^n|X^n}$. Then there exist constants $C_1, C_2 > 0$ and $n_0 > 0$ such that for $n \geq n_0$, 
    \begin{align*}
        \mathbb{P}\left( \imath_{\widehat{P}_{X^n|Y^n}, Q_{X^n}^{\operatorname{u}}}(Y^n, X^n) \geq \frac{n}{2} \log \left(1 + \frac{\alpha}{\widehat{p}(Y^n)} \right) + \frac{||Y^n||^2}{2(\alpha + \widehat{p}(Y^n) )}  - \frac{||Y^n - X^n||^2}{2 \widehat{p}(Y^n)} - C_1\right) \geq 1 - \frac{C_2}{n}.
    \end{align*} \label{lemmafirstlowerboundthen}
\end{lemma}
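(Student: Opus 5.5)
The plan is to write the information density in closed form and reduce the lemma to an upper bound on a single normalizing integral over the sphere. For fixed $y^n$, write $\hat p=\widehat p(y^n)$. Then for $Q^{\operatorname u}_{X^n}$-a.e.\ $x^n$,
\begin{align*}
\imath_{\widehat P_{X^n|Y^n},Q^{\operatorname u}_{X^n}}(y^n,x^n) = -\frac{\|y^n-x^n\|^2}{2\hat p} - \log Z_n(y^n), \qquad Z_n(y^n)\coloneqq\int \exp\left(-\frac{\|y^n-u^n\|^2}{2\hat p}\right)dQ^{\operatorname u}_{X^n}(u^n).
\end{align*}
It therefore suffices to show that, on an event of probability at least $1-C_2/n$,
\begin{align*}
\log Z_n(Y^n)\le -\frac n2\log\left(1+\frac{\alpha}{\hat p}\right)-\frac{\|Y^n\|^2}{2(\alpha+\hat p)}+C_1 .
\end{align*}
The right-hand side, apart from $C_1$, is exactly $\log$ of the Gaussian normalizer
\begin{align*}
G_n(y^n)\coloneqq\int \exp\left(-\frac{\|y^n-u^n\|^2}{2\hat p}\right)d\mathcal N(0,\alpha I_n)(u^n)=\left(\frac{\hat p}{\hat p+\alpha}\right)^{n/2}\exp\left(-\frac{\|y^n\|^2}{2(\alpha+\hat p)}\right).
\end{align*}
So the goal becomes $Z_n\le e^{C_1}G_n$ on a typical set. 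In other words, the sphere prior should cost at most a constant compared with the Gaussian prior.

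\textbf{Step 1 (typical event).} Let $\mathcal E_n$ be the event that $\|Y^n\|^2/n\in[\alpha+p^2/2,\ \alpha+2p^2]$ and that $\hat p=\|Y^n\|^2/n-\alpha$, i.e., the clipping at $1/n$ is inactive. Since $\|X^n\|^2=n\alpha$ almost surely, we can write $\|Y^n\|^2=n\alpha+2\langle X^n,N^n\rangle+\|N^n\|^2$. This has mean $n(\alpha+p^2)$ and variance $O(n)$, so Chebyshev's inequality gives $\mathbb P(\mathcal E_n^c)\le C_2/n$ for $n\ge n_0$.

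\textbf{Step 2 (comparing the two normalizers).} Polar decomposition writes $G_n=\mathbb E_R[f(R)]$, where $R=\|U^n\|$ with $U^n\sim\mathcal N(0,\alpha I_n)$, and $f(r)$ is the integral of the Gaussian kernel against the uniform law on the sphere of radius $r$. Thus $Z_n=f(\sqrt{n\alpha})$. Rotational invariance gives the explicit form
\begin{align*}
f(r)=e^{-(\|y\|^2+r^2)/(2\hat p)}\,\Gamma(n/2)\,(2/\rho)^{n/2-1}I_{n/2-1}(\rho), \qquad \rho=\frac{r\|y\|}{\hat p}.
\end{align*}
The key observation concerns the Gaussian posterior. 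Its second moment is
\begin{align*}
\mathbb E\|U\|^2 = \frac{n\alpha\hat p}{\alpha+\hat p}+\frac{\alpha^2\|y\|^2}{(\alpha+\hat p)^2}.
\end{align*}
On $\mathcal E_n$ we have $\|y\|^2=n(\alpha+\hat p)$, and this moment equals exactly $n\alpha$. So the map $r\mapsto f(r)\,p_R(r)$ peaks at $r=\sqrt{n\alpha}$. I would then show that it is log-concave in $r$ up to $O(1)$ errors, with curvature of order $1$ uniformly on $\mathcal E_n$. Together with the Gaussian tail of $R$ around $\sqrt{n\alpha}$, this gives
\begin{align*}
G_n\ge \int_{|r-\sqrt{n\alpha}|\le 1} f(r)p_R(r)\,dr\ge c\,f(\sqrt{n\alpha})=c\,Z_n .
\end{align*}

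\textbf{Main obstacle.} The hardest step is making the constant in Step 2 uniform. This requires a uniform two-sided estimate of $\log I_\nu(\nu z)$ for $\nu=n/2-1$ and $z=\rho/\nu$ in a compact subset of $(0,\infty)$; this compactness is what $\mathcal E_n$ guarantees. I would use the Debye-type bounds
\begin{align*}
\frac{e^{\nu\eta(z)}}{\sqrt{2\pi\nu}(1+z^2)^{1/4}}\,(1-O(1/\nu)) \le I_\nu(\nu z) \le \frac{e^{\nu\eta(z)}}{\sqrt{2\pi\nu}(1+z^2)^{1/4}}\,(1+O(1/\nu)),
\qquad \eta(z)=\sqrt{1+z^2}+\log\frac{z}{1+\sqrt{1+z^2}},
\end{align*}
or, alternatively, the Amos-type ratio bounds for $I_{\nu+1}/I_\nu$. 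With these in hand, I would substitute Stirling's formula for $\Gamma(n/2)$. The $\sqrt n$ prefactors from the Bessel asymptotic and from the density of $R$ should cancel, and a second-order Taylor expansion of the exponent in $r$ about $\sqrt{n\alpha}$ should produce the bounded curvature. As a fallback if the Bessel route becomes messy, I would compute $Z_n/G_n$ directly: write $G_n/Z_n=\mathbb E_R[f(R)/f(\sqrt{n\alpha})]$ and bound the ratio $f(r)/f(\sqrt{n\alpha})$ using monotonicity of $\rho\mapsto\log I_\nu(\rho)-\rho$.
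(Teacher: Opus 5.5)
Your proposal is correct. The reduction is the same as the paper's: the paper also writes $\imath_{\widehat{P}_{X^n|Y^n}, Q_{X^n}^{\operatorname{u}}}(Y^n,X^n)=\Lambda(X^n,Y^n)-\log\bigl(U(Y^n)/Q(Y^n)\bigr)$, where $U/Q$ is your $Z_n/G_n$.

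The difference lies in how that ratio is bounded. The paper never uses the specific form of $\widehat p$. It imports the Molavianjazi bound $U_{\rho_0,1}(z^n)/Q_{\rho_0,1}(z^n)\le b_{\rho_0}(r)e^{-\frac n2 f_{\rho_0}(r)}$ with $f_{\rho_0}\ge 0$, at the single SNR $\rho_0=2\alpha/p^2$. This gives $\sup_{y^n}U_{\alpha,t_0}(y^n)/Q_{\alpha,t_0}(y^n)\le K_0$. The paper then extends this to every variance $t\ge t_0=p^2/2$ via the Gaussian semigroup identities $U_{\alpha,t}=U_{\alpha,t_0}*\phi^{(n)}_{t-t_0}$ and $Q_{\alpha,t}=Q_{\alpha,t_0}*\phi^{(n)}_{t-t_0}$. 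The resulting bound is uniform in $y^n$ and $t$. The only probabilistic input is the one-sided event $\{\widehat p(Y^n)\ge p^2/2\}$, so the argument would work for any variance estimate bounded below.

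Your route is self-contained and explains \emph{why} the plug-in estimator is harmless. The identity $\|y^n\|^2=n(\alpha+\widehat p)$ places you exactly at the matched point, where the radial mass of the Gaussian posterior sits at $\sqrt{n\alpha}$. Off this point the sphere-to-Gaussian ratio is exponentially small, which is what $f_{\rho_0}\ge 0$ encodes. You also obtain a two-sided $\Theta(1)$ comparison. The costs are that you need two-sided control of $\widehat p$ (to keep $z=\rho/\nu$ in a compact set), uniform Bessel estimates, and the argument is tied to this particular estimator.

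The window step needs less than you plan. It suffices to have a uniform $O(1)$ bound on $\frac{d}{dr}\log f(r)=\widehat p^{-1}\bigl(\|y^n\|\,I_{\nu+1}(\rho)/I_\nu(\rho)-r\bigr)$ for $|r-\sqrt{n\alpha}|\le 1$, together with $\mathbb{P}(|R-\sqrt{n\alpha}|\le 1)\ge c$. Amos-type ratio bounds give $I_{\nu+1}(\nu z)/I_\nu(\nu z)=z/(1+\sqrt{1+z^2})+O(1/\nu)$, which yields the derivative bound directly. The log-concavity, curvature, Stirling and prefactor-cancellation steps can therefore be dropped.

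Do not rely on your fallback. Monotonicity of $\rho\mapsto\log I_\nu(\rho)-\rho$ amounts to replacing $I_\nu'/I_\nu$ by $1$, and the function is not globally monotone anyway. In the relevant regime $\rho=\Theta(\nu)$ one has $I_\nu'(\rho)/I_\nu(\rho)\approx\sqrt{1+\nu^2/\rho^2}$, which is bounded away from $1$. Since $\rho$ changes by $\Theta(\sqrt n)$ over a unit window in $r$, that replacement loses $\Theta(\sqrt n)$ in the exponent rather than a constant.
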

\textit{Proof:} The proof of Lemma \ref{lemmafirstlowerboundthen} is given in Appendix \ref{lemmafirstlowerboundthen_proof}.

\begin{lemma}
Let $(X^n, Y^n)$ have the joint law $Q_{X^n}^{\operatorname{u}} \times P_{Y^n|X^n}$, so that we can write $Y^n = X^n + Z^n$,
    \begin{align*}
    X^n &= \sqrt{n \alpha} \frac{G^n}{||G^n||} \, \text{ and } \,  Z^n = p N^n, 
\end{align*}
where $G^n \ind N^n$, and $G^n$ and $N^n$ have i.i.d. $\mathcal{N}(0, 1)$ components. Define 
\begin{align}
    \Lambda(X^n, Y^n) &\coloneqq \frac{n}{2} \log \left(1 + \frac{\alpha}{\widehat{p}(Y^n)} \right) + \frac{||Y^n||^2}{2(\alpha + \widehat{p}(Y^n) )}  - \frac{||Y^n - X^n||^2}{2 \widehat{p}(Y^n)}, \label{lambdax6ny6n}\\
        S_n &\coloneqq \sum_{i=1}^n \left [ \frac{1}{2} \log \left(1 + \frac{\alpha}{p^2} \right) +  \frac{p \sqrt{\alpha}}{\alpha + p^2} G_i N_i - \frac{\alpha}{2(\alpha + p^2)} (N_i^2 - 1) \right ]. \label{Sndefiid}
\end{align}
Then there exist positive constants $C_1, C_2$ and $n_0$ such that for $n \geq n_0$,
\begin{align*}
        \mathbb{P}\left(  \left| \Lambda(X^n, Y^n) - S_n \right| \leq C_1 \log(n) \right) \geq 1 - \frac{C_2}{n}.
\end{align*} 
    \label{mytwosidedbndprop}
\end{lemma}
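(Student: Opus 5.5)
The plan is a direct algebraic reduction followed by Taylor expansion, all on a single high-probability "good event" $\mathcal{G}_n$ on which every fluctuating quantity is at its typical scale up to a $\sqrt{\log n}$ factor. Since $\|X^n\|^2 = n\alpha$ exactly, we have
\[
\|Y^n\|^2 = n\alpha + 2\langle X^n, Z^n\rangle + \|Z^n\|^2 .
\]
Define
\[
W_n \coloneqq \sum_{i=1}^n (N_i^2-1), \qquad
\Delta_n \coloneqq \frac{2\langle X^n,Z^n\rangle + \|Z^n\|^2 - np^2}{n} = \frac{2\langle X^n,Z^n\rangle + p^2 W_n}{n}.
\]
Take $\mathcal{G}_n$ to be the intersection of the following events, each with a suitable constant $a$:
\[
|W_n|\le a\sqrt{n\log n},\qquad |\langle G^n,N^n\rangle|\le a\sqrt{n\log n},\qquad \bigl|\,\|G^n\|^2-n\,\bigr|\le a\sqrt{n\log n}.
\]
Each of these is a sum of i.i.d. sub-exponential variables, namely $N_i^2-1$, $G_iN_i$ and $G_i^2-1$. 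Bernstein's inequality at deviation level $a\sqrt{n\log n}$ gives a failure probability of order $n^{-a^2 c}$, so choosing $a$ large makes $\mathbb{P}(\mathcal{G}_n^c)\le C_2/n$. I expect this probabilistic step to be the only mildly delicate point. The clipping and Taylor expansions below are then deterministic on $\mathcal{G}_n$.

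On $\mathcal{G}_n$ we have $|\Delta_n| = O(\sqrt{\log n / n})$. Hence for $n\ge n_0$ the estimate $\|Y^n\|^2/n-\alpha = p^2+\Delta_n$ exceeds $1/n$, so the clipping in $\widehat{p}$ is inactive and $\widehat p(Y^n)=p^2+\Delta_n$. The key simplification is then exact. Because $\|Y^n\|^2 = n(\alpha+\widehat p(Y^n))$, the middle term of $(\ref{lambdax6ny6n})$ equals $n/2$. Also $\|Y^n-X^n\|^2=\|Z^n\|^2 = p^2(n+W_n)$. Therefore
\[
\Lambda(X^n,Y^n) = \frac n2\log\Bigl(1+\frac{\alpha}{p^2+\Delta_n}\Bigr) + \frac n2 - \frac{p^2(n+W_n)}{2(p^2+\Delta_n)}.
\]
I then Taylor-expand in $\Delta_n$ around $0$, with remainders of order $n\Delta_n^2 + |W_n\Delta_n| = O(\log n)$ on $\mathcal{G}_n$:
\[
\frac n2\log\Bigl(1+\frac{\alpha}{p^2+\Delta_n}\Bigr) = \frac n2\log\Bigl(1+\frac{\alpha}{p^2}\Bigr) - \frac{n\alpha\,\Delta_n}{2p^2(\alpha+p^2)} + O(n\Delta_n^2),
\]
\[
\frac{p^2(n+W_n)}{2(p^2+\Delta_n)} = \frac n2 + \frac{W_n}{2} - \frac{n\Delta_n}{2p^2} + O(n\Delta_n^2 + |W_n\Delta_n|).
\]
The two first-order coefficients of $n\Delta_n$ combine to $\frac{1}{2p^2}\bigl(1-\frac{\alpha}{\alpha+p^2}\bigr) = \frac{1}{2(\alpha+p^2)}$. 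Substituting $n\Delta_n = 2\langle X^n,Z^n\rangle + p^2W_n$ gives
\[
\Lambda = \frac n2\log\Bigl(1+\frac{\alpha}{p^2}\Bigr) + \frac{\langle X^n,Z^n\rangle}{\alpha+p^2} - \frac{\alpha\,W_n}{2(\alpha+p^2)} + O(\log n).
\]
This matches $S_n$ except for the cross term. Here $\langle X^n,Z^n\rangle = p\sqrt{n\alpha}\,\langle G^n,N^n\rangle/\|G^n\|$, so the discrepancy with the corresponding term in $S_n$ is
\[
\frac{p\sqrt\alpha}{\alpha+p^2}\,\langle G^n,N^n\rangle\Bigl(\frac{\sqrt n}{\|G^n\|}-1\Bigr).
\]
On $\mathcal{G}_n$ we have $\sqrt n/\|G^n\|-1 = O(\sqrt{\log n/n})$ and $|\langle G^n,N^n\rangle| = O(\sqrt{n\log n})$, so the discrepancy is $O(\log n)$.

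Collecting the pieces, $|\Lambda - S_n|\le C_1\log n$ on $\mathcal{G}_n$ for all $n\ge n_0$, where the constants depend only on $\alpha$ and $p$. The remaining bookkeeping is to make the Taylor remainders explicit by bounding the second derivatives uniformly on $|\Delta|\le p^2/2$, which holds on $\mathcal{G}_n$ for large $n$.
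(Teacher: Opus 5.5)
Your proof is correct and follows essentially the same route as the paper. Both arguments work on a high-probability event where the empirical averages of $N_i^2$, $G_iN_i$ and $G_i^2$ lie within $O(\sqrt{\log n/n})$ of their means. On that event the clipping in $\widehat{p}$ is inactive, the middle term of $\Lambda$ equals exactly $n/2$, a second-order Taylor expansion leaves an $O(\log n)$ remainder, and exponential concentration bounds the complement of the event by $O(1/n)$.

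The only difference is bookkeeping. The paper writes $\Lambda = n f(U_n,M_n,W_n)$ and applies a three-variable delta method with a uniform Hessian bound. You instead expand in the scalar $\Delta_n$ and handle the $\sqrt{n}/\|G^n\|$ normalization separately as an explicit second-order cross term. That cross term is exactly the $\partial_m\partial_w$ Hessian contribution in the paper's $f$, whose $w$-gradient vanishes at the mean.
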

\textit{Proof:} The proof of Lemma \ref{mytwosidedbndprop} is given in Appendix \ref{mytwosidedbndprop_proof}.

The proof of Lemma \ref{mytwosidedbndprop} relies on the multivariate delta method. The key step is to express $\Lambda(X^n, Y^n)/n$, on a suitably restricted high-probability domain, as a smooth function of empirical averages of i.i.d. variables. Then we Taylor-expand around the mean so the linear term is the i.i.d. sum $S_n$. Finally, the proof uses concentration to show that the nonlinear remainder term over the expansion neighborhood is $O(n^{-1} \log n)$, while the probability of leaving this neighborhood is only $O(n^{-1})$.

The high-level objective of Lemmas \ref{lemmafirstlowerboundthen} and  \ref{mytwosidedbndprop} is simple: lower bound the complicated information density $(\ref{b2556wgtgh})$ by $\Lambda(X^n, Y^n)$ as defined in 
$(\ref{lambdax6ny6n})$, which in turn is approximated by a sum of i.i.d. random variables $S_n$ as defined in $(\ref{Sndefiid})$. Taken together with the sum of i.i.d. source random variables obtained from the application of Lemma \ref{lemma5}, we can prove the following theorem.   

\begin{theorem}
Fix any $\epsilon \in (0, 1)$ and $0 < \alpha < \infty$. Let the source distribution $P_S$, distortion level $d$ and distortion measure $\operatorname{d}$ satisfy Assumptions 
\ref{assumpr1_source}\,-\,\ref{assumpr3_source} and Conditions \ref{cond1dee}\,-\,\ref{weird9thmomcond}. Then for every $p \in (0, \infty)$, there exist constants $c$ and $n_0$ such that for $n \geq n_0$,
\begin{align}
    k^*_{\mathcal{U}}(d, \alpha, \epsilon, n|P_{Y|X}^{\otimes n}) \geq \frac{n C_\alpha(P_{Y|X})}{R_d(P_S)} - \frac{Q^{-1}(\epsilon)}{R_d(P_S)} \sqrt{ V_d(P_S) \frac{n C_\alpha(P_{Y|X}) }{ R_d(P_S)}   + n V_\alpha(P_{Y|X})} - c \log(n), \label{desire_without_proof}
\end{align}
where $k^*_{\mathcal{U}}$ is defined similarly to \eqref{channelblind_achievability_region}, $\mathcal{U} = \{U_{n, k}^{(d, \alpha)}\}$ is the channel-blind family of source-channel codes over $(\mathcal{W}_{n})_{n \geq 1}$ from Definition \ref{defofcodesforgaussianuniv}, $P_{Y|X = x} = \mathcal{N}(x, p^2)$, and 
\begin{align}
\begin{split}
    C_{\alpha}(P_{Y|X}) &= \frac{1}{2} \log \left(1 + \Gamma \right),\\
    V_{\alpha}(P_{Y|X}) &= \frac{\Gamma(\Gamma + 2)}{2(\Gamma + 1)^2},\\
        \Gamma &= \frac{\alpha}{p^2}.
\end{split}
     \label{anfhdf}
\end{align}
    \label{thm_finite_block1_awgn_univ}
\end{theorem}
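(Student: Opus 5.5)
We will follow the route used for Theorem \ref{thm_finite_block1}, with the mismatched information density replaced by the surrogate sum $S_n$ from Lemma \ref{mytwosidedbndprop}. The starting point is the corollary bound \eqref{gva2na2}. By the logistic identity \eqref{log_identity}, its right-hand side equals
\[
\mathbb{P}\left(\imath_{\widehat{P}_{X^n|Y^n}, Q_{X^n}^{\operatorname{u}}}(Y^n, X^n) + L \leq \log \frac{1}{P_{Z^k}^*(B_d(S^k))}\right),
\]
where $L$ is an independent standard logistic variable. We then bound the two sides of the inequality inside this probability separately.

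\textbf{Channel side.} By Lemma \ref{lemmafirstlowerboundthen}, the information density is at least $\Lambda(X^n,Y^n) - C_1$ except on an event of probability $O(1/n)$. By Lemma \ref{mytwosidedbndprop}, $\Lambda(X^n,Y^n) \geq S_n - C_1' \log n$ except on an event of probability $O(1/n)$.

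\textbf{Source side.} By Lemma \ref{lemma5}, $\log(1/P_{Z^k}^*(B_d(S^k))) \leq \sum_{i=1}^k \jmath_d(S_i,P_S,\operatorname{d}) + (\overline{c}-\tfrac12)\log k + c_1$ except on an event of probability $K/\sqrt{k}$.

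\textbf{The logistic term.} We handle $L$ by $\mathbb{P}(L < -\log n) \leq 1/n$.

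\textbf{Combining.} A union bound gives
\[
\overline{P}_e \leq \mathbb{P}\left(S_n - \sum_{i=1}^k \jmath_d(S_i) \leq c'\log n\right) + O(1/\sqrt{n}),
\]
where $k=\Theta(n)$ is used to absorb $\log k$ into $\log n$.

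\textbf{Checking the moments of $S_n$.} $S_n$ is a sum of i.i.d. terms with mean $C_\alpha(P_{Y|X})$, because $\mathbb{E}[G_iN_i]=0$ and $\mathbb{E}[N_i^2-1]=0$. Its per-term variance is
\[
\frac{p^2\alpha}{(\alpha+p^2)^2} + \frac{\alpha^2}{2(\alpha+p^2)^2} = \frac{\Gamma}{(1+\Gamma)^2} + \frac{\Gamma^2}{2(1+\Gamma)^2} = \frac{\Gamma(\Gamma+2)}{2(\Gamma+1)^2} = V_\alpha,
\]
and its third absolute moment is finite since it involves only Gaussian polynomials. The source terms $\jmath_d(S_i)$ are i.i.d. and independent of $S_n$, with mean $R_d$, variance $V_d$, and finite third moment by Condition \ref{weird9thmomcond} (remark \ref{iskomw0455}).

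\textbf{Berry--Esseen and inversion.} Applying the Berry--Esseen theorem to the combined sum of $n+k$ independent terms, whose total variance $kV_d + nV_\alpha \geq nV_\alpha > 0$ is of order $n$, gives
\[
\overline{P}_e \leq Q\left(\frac{nC_\alpha - kR_d - c'\log n}{\sqrt{kV_d+nV_\alpha}}\right) + \frac{B}{\sqrt{n}}.
\]
Hence the code is $(d,\epsilon,n,k)$ whenever
\[
nC_\alpha - kR_d \geq \sqrt{kV_d + nV_\alpha}\,Q^{-1}\!\left(\epsilon - \frac{B}{\sqrt{n}}\right) + c'\log n .
\]
A Taylor expansion of $Q^{-1}$ turns the $B/\sqrt{n}$ correction into an additive $O(1)$ term. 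This condition is the Gaussian counterpart of \eqref{thm_cond} and is the sufficient condition referred to in Remark \ref{remarkeeusefulprop}.

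\textbf{Choosing $k$.} Set $k = \lfloor nC_\alpha/R_d - \Delta_n\rfloor$ with $\Delta_n = \Theta(\sqrt{n})$. This $k$ lies in a range $\beta_- n\leq k\leq \beta_+ n$, so Lemma \ref{lemma5} applies for $n$ large. Substituting $k$ into the square root, $kV_d = V_d\, nC_\alpha/R_d + O(\sqrt{n})$, so
\[
\sqrt{kV_d+nV_\alpha} = \sqrt{V_d\, nC_\alpha/R_d + nV_\alpha} + O(1).
\]
Solving for the largest admissible $k$ then yields \eqref{desire_without_proof}, with every residual collected into $c\log n$. Thus $k^*_{\mathcal{U}}$ is at least this value.

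\textbf{Main obstacle.} Given Lemmas \ref{lemmafirstlowerboundthen} and \ref{mytwosidedbndprop}, the remaining difficulty is bookkeeping: verifying that all error probabilities are $O(1/\sqrt{n})$ uniformly over the relevant range of $k$, and that the slack $\log k$ versus $\log n$ and the perturbation inside $Q^{-1}$ stay within $O(\log n)$. The substantive work, namely controlling the estimator $\widehat{p}(Y^n)$, is already contained in Lemma \ref{mytwosidedbndprop}.
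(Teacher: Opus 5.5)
Your proposal is correct and follows the same route as the paper's outlined proof. You bound the right-hand side of \eqref{gva2na2} using Lemma \ref{lemma5}, Lemma \ref{lemmafirstlowerboundthen} and Lemma \ref{mytwosidedbndprop}, pass to a probability via \eqref{log_identity}, apply Berry--Esseen to $S_n-\sum_{i=1}^k \jmath_d(S_i,P_S,\operatorname{d})$, invert to obtain the sufficient condition \eqref{thm_cond_awgn_univ}, and then pick $k$ as in the proof of Theorem \ref{BEC_particularization}. Your two minor deviations are immaterial: discarding the logistic variable via $\mathbb{P}(L<-\log n)\le 1/n$ instead of folding it into the Berry--Esseen sum, and comparing the square roots through $\sqrt{x}-\sqrt{y}=(x-y)/(\sqrt{x}+\sqrt{y})$ instead of the paper's case split on $\epsilon\le 1/2$ versus $\epsilon>1/2$. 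Each costs at most $O(\log n)$, which is absorbed into $c\log n$.
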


\begin{IEEEproof}[Proof Outline] The full proof of Theorem \ref{thm_finite_block1_awgn_univ} is omitted because the main novelty in the argument lies in Lemmas \ref{lemmafirstlowerboundthen} and \ref{mytwosidedbndprop} above, while the rest of the argument is similar to the derivation given in the proofs of Theorems \ref{thm_finite_block1} and \ref{BEC_particularization}. Specifically, the proof starts by directly evaluating and upper bounding the RHS of \eqref{gva2na2} and applies Lemma \ref{lemma5}, Lemma \ref{lemmafirstlowerboundthen} and Lemma \ref{mytwosidedbndprop} in that order. Then it uses the identity \eqref{log_identity} and uses similar derivation in the proof of Theorem \ref{thm_finite_block1} in order to obtain the result that $U_{n, k}^{(d, \alpha)}$ is $(d, \epsilon, n, k)$ with respect to $\{P_{S^k}, \operatorname{d}_k, P_{Y^n|X^n}\}$ whenever $n$ and $k$ are sufficiently large, $(n, k)$ is in some admissible set $\operatorname{S}$, and 
\begin{align}
        n C_{\alpha}(P_{Y|X}) - k R_d(P_S) &\geq  \sqrt{k V_d(P_S) + n V_{\alpha}(P_{Y|X})} \,\,Q^{-1}(\epsilon) + \tilde{c} \log(n),    \label{thm_cond_awgn_univ}
\end{align}
where $\tilde{c}$ is some constant independent of $n$ and $k$. Then a similar derivation as was done after \eqref{n350c} in the proof of Theorem \ref{BEC_particularization} obtains the inequality \eqref{desire_without_proof}.
\end{IEEEproof}

Finally, to establish second-order universality of the channel-blind family of codes in Theorem \ref{thm_finite_block1_awgn_univ}, we invoke the JSCC converse result for memoryless and stationary Gaussian channels from \cite{kostina_JSCC}, which is stated below for convenience.    

\begin{proposition}[{\cite[Theorem 10]{kostina_JSCC}}]
Fix any $\epsilon \in (0, 1)$ and $0 < \alpha < \infty$. Let the source distribution $P_S$, distortion level $d$ and distortion measure $\operatorname{d}$ satisfy Assumptions 
\ref{assumpr1_source}\,-\,\ref{assumpr3_source} and Conditions \ref{cond1dee}\,-\,\ref{weird9thmomcond}. Then for every $p \in (0, \infty)$, there exist constants $c$ and $n_0$ such that for $n \geq n_0$, 
\begin{align}
    k^*\left (d, \alpha,\epsilon,  n| P_{Y|X}^{\otimes n} \right ) \leq \frac{n C_\alpha(P_{Y|X})}{R_d(P_S)} - \frac{Q^{-1}(\epsilon)}{R_d(P_S)} \sqrt{ V_d(P_S) \frac{n C_\alpha(P_{Y|X}) }{ R_d(P_S)} + n V_\alpha(P_{Y|X})} + \frac{1}{2 R_d(P_S)} \log(n) + c,  \label{kostinaqwyandverdu}
\end{align}
where $P_{Y|X = x} = \mathcal{N}(x, p^2)$, and $C_\alpha(P_{Y|X})$ and $V_\alpha(P_{Y|X})$ are as specified in $(\ref{anfhdf})$. 
\label{propyuofromkostina}
\end{proposition}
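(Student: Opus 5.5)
The proposition is the converse half of \cite[Theorem 10]{kostina_JSCC} rewritten in terms of $k^*$, so the plan is to reproduce that converse argument and then invert the resulting blocklength condition. The starting point is the general JSCC converse \cite[Theorem 2]{kostina_JSCC}. For any $(d,\epsilon,n,k)$ code, any auxiliary output law $\bar P_{Y^n}$ and any $\gamma>0$, it gives
\begin{align*}
\epsilon \geq \inf_{P_{X^n|S^k}} \mathbb{P}\left( \sum_{i=1}^k \jmath_d(S_i,P_S,\operatorname{d}) - \imath_{P_{Y^n|X^n}\|\bar P_{Y^n}}(X^n;Y^n) \geq \gamma \right) - \exp(-\gamma).
\end{align*}
Here the infimum is over encoders satisfying the cost constraint $\|X^n\|^2\le n\alpha$. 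I will use two facts from earlier in the paper: the $\operatorname{d}$-tilted information is additive over the i.i.d.\ source, and Condition \ref{weird9thmomcond} gives a finite third moment of $\jmath_d$.

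\emph{Step 1 (auxiliary output and reduction to a fixed input).} I choose $\bar P_{Y^n}=\mathcal{N}(0,(\alpha+p^2)I_n)$, the capacity-achieving output law. Conditioned on $X^n=x^n$, the channel information density is then
\begin{align*}
\frac{n}{2}\log(1+\Gamma) + \frac{\|x^n+Z^n\|^2}{2(\alpha+p^2)} - \frac{\|Z^n\|^2}{2p^2}.
\end{align*}
Its distribution depends on $x^n$ only through $\|x^n\|^2$, and it is stochastically increasing in $\|x^n\|^2$ in the sense relevant here. Hence the worst case in the infimum is attained by inputs on the power sphere $\|x^n\|^2=n\alpha$. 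If the cost constraint is instead an inequality, I would handle it via the standard argument of appending one extra channel use to bring the input onto the sphere, at an $O(1)$ cost. On the sphere the conditional law no longer depends on $x^n$. The channel term therefore becomes independent of the source term, and it is distributed as a sum of $n$ i.i.d.\ terms with mean $C_\alpha(P_{Y|X})$ and variance $V_\alpha(P_{Y|X})$, exactly of the form of $S_n$ in \eqref{Sndefiid} up to sign conventions.

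\emph{Step 2 (Berry--Esseen).} Both terms are now independent sums of i.i.d.\ variables with finite third moments; for the Gaussian terms this is immediate. I apply Berry--Esseen to their total, with variance $kV_d(P_S)+nV_\alpha(P_{Y|X})$, and choose $\gamma=\frac12\log n$. This yields the necessary condition
\begin{align*}
nC_\alpha(P_{Y|X})-kR_d(P_S) \geq \sqrt{kV_d(P_S)+nV_\alpha(P_{Y|X})}\,Q^{-1}(\epsilon) - \frac12\log n - c'.
\end{align*}
Some care is needed so that the $B/\sqrt{n}$ Berry--Esseen remainder and the $\exp(-\gamma)$ term are absorbed into a perturbation of $\epsilon$. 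That perturbation is removed by a Taylor expansion of $Q^{-1}$, which costs only $O(1)$ because the total standard deviation is $\Theta(\sqrt n)$. This step needs only the regime $k=\Theta(n)$: for $k$ far from $nC_\alpha/R_d$ the bound on $k^*$ holds trivially, since there the excess-distortion probability is close to $1$ by a law-of-large-numbers argument.

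\emph{Step 3 (inversion).} I solve the condition of Step 2 for the largest admissible $k$. Writing $k=nC_\alpha/R_d-\Delta$ with $\Delta=O(\sqrt n)$, I substitute into the square root and expand. The $\Delta$-dependence inside the root contributes only $O(1)$, which gives \eqref{kostinaqwyandverdu} with the $\frac{1}{2R_d(P_S)}\log n$ term.

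I expect the main obstacle to be Step 1, specifically the rigorous reduction of the infimum over cost-constrained encoders to inputs on the sphere. The infimum couples the source and the channel through $P_{X^n|S^k}$. The plan is to condition on $S^k$ and show that, for each threshold, the conditional probability is maximized pointwise by $\|x^n\|^2=n\alpha$. I would do this via a stochastic-ordering argument for the noncentral chi-square term.
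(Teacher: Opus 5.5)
Your route differs from the paper's mainly in how much it proves. The paper gives no argument of its own for this proposition. It takes the converse half of \cite[Theorem 10]{kostina_JSCC} as a black box and only performs the inversion for $k$. That inversion is your Step 3, and it mirrors the paper's proof of Corollary \ref{corocoro}. Your Steps 1--2 instead re-derive the converse from \cite[Theorem 2]{kostina_JSCC}, using the capacity-achieving output law, $\gamma=\frac12\log n$, Berry--Esseen, and a law-of-large-numbers argument that confines $k$ to $O(n)$. This is the same template the paper uses for its block-erasure converse in Appendix \ref{convoghlswhr0vm_proof}. Your route is self-contained and makes the constants explicit, but you must then handle the power constraint yourself, which the citation inherits from Kostina--Verd\'u.

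On that point, do not pursue the stochastic-ordering plan in Step 1: the claim is false. For $\|x^n\|=r$, completing the square in $Z^n$ gives
\[
\imath(x^n;Y^n)\stackrel{d}{=}\frac{n}{2}\log(1+\Gamma)+\frac{r^2}{2\alpha}-\frac{\alpha}{2(\alpha+p^2)}\,\chi^2_{n,\lambda_r},\qquad \lambda_r=\frac{p^2r^2}{\alpha^2},
\]
where $\chi^2_{n,\lambda}$ is a noncentral chi-square variable with $n$ degrees of freedom and noncentrality $\lambda$. Hence $\{\imath\le\tau\}=\{\chi^2_{n,\lambda_r}\ge u_0+\kappa r^2\}$, with $\kappa=(\alpha+p^2)/\alpha^2$ and $u_0\to\infty$ as $\tau\to-\infty$. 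Since $\log\mathbb{P}(\chi^2_{n,\lambda}\ge u)=-\frac{u}{2}+\sqrt{\lambda u}+O(\log u)$, for $r<r'$ the log-ratio of the two probabilities is $\frac{p}{\alpha}(r'-r)\sqrt{u_0}+O(\log u_0)\to+\infty$.

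So for any fixed radius $r<\sqrt{n\alpha}$, at sufficiently negative thresholds the sphere gives a larger value of $\mathbb{P}(\imath(x^n;Y^n)\le\tau)$ than radius $r$. The converse needs the sphere to give the smallest value. Your closing sentence also says ``maximized'', which is the wrong direction. The thresholds $\sum_i\jmath_d(s_i,P_S,\operatorname{d})-\gamma$ range over all reals as $s^k$ varies, so the pointwise-in-threshold reduction you plan cannot be established.

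The fix is the fallback you already mention, and it is exactly the case at hand, because the constraint here is the inequality $\|X^n\|^2\le n\alpha$. Append $x_{n+1}=\sqrt{(n+1)\alpha-\|x^n\|^2}$ and let the decoder ignore $y_{n+1}$. This gives a $(d,\epsilon,n+1,k)$ code whose codewords all lie on the sphere of radius $\sqrt{(n+1)\alpha}$. The infimum in \cite[Theorem 2]{kostina_JSCC} is then over inputs with identical conditional law of $\imath$, so it is trivial. You can take $x^{n+1}=(\sqrt\alpha,\dots,\sqrt\alpha)$ to get an exactly i.i.d.\ sum with per-letter mean $C_\alpha(P_{Y|X})$ and variance $V_\alpha(P_{Y|X})$. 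This is not literally $S_n$ from \eqref{Sndefiid}, whose summands contain Gaussian $G_i$, but it has the same first two moments. Replacing $n+1$ by $n$ afterwards costs $O(1)$. With Step 1 done this way, Steps 2 and 3 go through as you describe.
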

\textit{Proof:} Proposition \ref{propyuofromkostina} is a  corollary of \cite[Theorem 10]{kostina_JSCC}. Specifically, $(\ref{kostinaqwyandverdu})$ can be derived by applying \cite[(119)]{kostina_JSCC} to \cite[(127)]{kostina_JSCC}.

Note also that for both block erasure channels and Gaussian channels, we have proved a slightly stronger result than second-order universality because we have shown that the gap $k^* - k^*_{\mathcal{U}}$ in $(\ref{herestatheregap})$ is $O(\log n)$ instead of merely $o(\sqrt{n})$.

\section{Conclusion and Future Directions \label{conclusion}}

Our mismatched-design result (Theorem \ref{thm_finite_block1}) is an achievability result for a specific mismatched-design randomized JSCC scheme. This result entails the scheme-envelope achievability of the generalized mutual information in the JSCC setting and for nonstationary channels over a fixed general alphabet. Theorem \ref{thm_finite_block1} also subsumes achievability results for nonstationary channels in the traditional matched JSCC. For future directions, one can study mismatched decoding involving an arbitrary decoding metric $q(x, y)$ as well as derive the achievability of the LM rate in the JSCC framework. A possible step toward that extension is to choose the Gibbs posterior $K_{X|Y}$ in Definition \ref{unified_def} of the form 
\begin{align}
    K_{X|Y}(A|y) &= \frac{ \int_A q(x, y)^t e^{a(x)} Q_X(dx)  }{\int_\mathcal{X} q(u, y)^t e^{a(u)} Q_X(du)  }, \label{mentionconcccc} 
\end{align}
and the decoding rule $(\ref{unif_dec})$ specialized for this choice is 
\begin{align*}
    i^*(y) &= \argmax_i \left \{ t  \log q(\bar{X}_i, y) + a(\bar{X}_i)  -\log(T_i)\right \}. 
\end{align*} 
Furthermore, upgrading from scheme-envelope achievability to a single-scheme achievability of the GMI/LM rate in the JSCC setting is an interesting problem for future study. 

We also defined a new notion of universal JSCC and showed that a channel-blind family of JSCC can be second-order universal over a regularized set of memoryless and nonstationary block erasure channels and a set of memoryless and stationary Gaussian channels. The broader motivation for formalizing a notion of universal JSCC is to better understand when JSCC designs can be separated, at least partially, from exact channel knowledge. As explained in Remarks \ref{remarkeeuseful22prop} and \ref{remarkeeusefulprop}, our universal results imply a construction-level separation, while retaining second-order asymptotic optimality at the level of a family of codes. Each channel-blind code in the family is associated with an operating point $(d, n, k)$ and is certified as $(d,\epsilon,n,k)$ for every $\epsilon$ and every channel in the channel class by an explicit sufficient condition that is asymptotically tight to second-order.





\appendices

\section{Proof of Proposition \ref{bxc} \label{bxc_proof}} 

For $P_S$-a.e. $s \in \mathcal{S}$, let $P_{\check{Z};s} \in \mathcal{P}(\mathcal{Z})$ be defined as 
    \begin{align*}
        P_{\check{Z};s}(A) \coloneqq \begin{cases}
            P_Z^*\left( A \cap B_d(s)  \right)/\rho(s) & \text{ if } \rho(s) > 0,\\
            P_Z^*(A) & \text{ if } \rho(s) = 0, 
        \end{cases}
    \end{align*}
where $\rho(s) \coloneqq P_Z^*(B_d(s))$. Whenever $\rho(s) > 0$, a sample $\check{Z}$ drawn from the distribution $P_{\check{Z};s}$ satisfies $\operatorname{d}(s, \check{Z}) \leq d$ almost surely. The core idea is that the encoder and decoder both have access to a Poisson point process as shared randomness. Given a source realization $S = s$, the encoder queries the shared Poisson point process using $Q_X \times P_{\check{Z};s} \in \mathcal{P}(\mathcal{X} \times \mathcal{Z})$, obtaining a point $(X, \check{Z})$ and then transmits $X$. Given $Y = y$, the decoder queries the same Poisson point process using the probability measure $K_{X|Y=y} \times P_Z^* \in \mathcal{P}(\mathcal{X} \times \mathcal{Z})$, and denotes the selected point by $(\widehat{X}, \widehat{Z})$. The decoder then outputs $\widehat{Z}$. The excess-distortion probability is then $\overline{P}_e(d) = \mathbb{P}(\operatorname{d}(S, \widehat{Z}) > d)$, which can be upper bounded by $\mathbb{P}(\rho(S) = 0) + \mathbb{P}(\rho(S) > 0, \check{Z} \neq \widehat{Z})$. We now make this operation rigorous.

Let $\{ (\bar{X}_i, \bar{Z}_i), T_i  \}_{i \in \mathbb{N}}$ be the points of a Poisson process with intensity measure $Q_X \times P_Z^* \times \lambda_{\mathbb{R}_{\geq 0}}$ and independent of $S \sim P_S$. Using this Poisson point process, the Poisson functional representation \cite[Definition 1]{alternative_oneshot_JSCC} of the target probability measure $Q_X \times P_{\check{Z};s}$ is 
\begin{align*}
    \left(X, \check{Z}\right) &=  \left( \bar{X}_{i^\star(s)}, \bar{Z}_{i^\star(s)} \right),  
\end{align*}
where 
\begin{align*}
    i^\star(s) = \argmin_{i} T_i \left( \frac{d(Q_X \times P_{\check{Z};s} )}{d(Q_X \times P_Z^*)} (\bar{X}_i, \bar{Z}_i) \right )^{-1}. 
\end{align*}
It can be checked that the encoding function $s \mapsto \bar{X}_{i^\star(s)}$ is the same as  $(\ref{unif_enc})$. Likewise, using the same Poisson point process, the Poisson functional representation of the target probability measure $K_{X|Y=y} \times P_Z^*$ is
\begin{align*}
    \left( \widehat{X}, \widehat{Z} \right) = \left( \bar{X}_{i^\star(y)}, \bar{Z}_{i^\star(y)} \right),  
\end{align*}
where 
\begin{align*}
    i^\star(y) = \argmin_{i} T_i \left( \frac{d(K_{X|Y=y} \times P_Z^* )}{d(Q_X \times P_Z^*)} (\bar{X}_i, \bar{Z}_i) \right )^{-1}. 
\end{align*}
It can be checked that the decoding function $y \mapsto \bar{Z}_{i^\star(y)}$ is the same as $(\ref{unif_dec})$. Note that while $(X, \check{Z}) \sim Q_X \times P_{\check{Z};s}$, it is not necessarily true that $( \widehat{X}, \widehat{Z}) \sim K_{X|Y=y} \times P_Z^*$, since $Y$ depends on the Poisson process. 

We have $(S, X, Y, \check{Z}) \sim P_S \times Q_X \times P_{Y|X} \times P_{\check{Z};S}$, where $P_{Y|X}$ is the true channel. Then we have 
\begin{align*}
        &\mathbb{P}(\operatorname{d}(S, \widehat{Z}) > d) \\
        &\leq \mathbb{P}(\rho(S) = 0) + \mathbb{P}\left( \rho(S) > 0 \cap \widehat{Z} \neq \check{Z} \right)\\
        &= \mathbb{P}(\rho(S) = 0) + \mathbb{E}\left [ \mathds{1}\left( \rho(S) > 0   \right) \mathds{1}\left(\widehat{Z} \neq \check{Z} \right) \right]\\
        &= \mathbb{P}(\rho(S) = 0) + \mathbb{E}\left [ \mathbb{E}\left [  \mathds{1}\left( \rho(S) > 0   \right) \mathds{1}\left(\widehat{Z} \neq \check{Z} \right)| S, X, Y, \check{Z} \right] \right]\\
        &= \mathbb{P}(\rho(S) = 0) + \mathbb{E}\left [ \mathds{1}\left( \rho(S) > 0   \right) \mathbb{E}\left [   \mathds{1}\left(\widehat{Z} \neq \check{Z} \right)| S, X, Y, \check{Z} \right] \right]\\
        &= \mathbb{P}(\rho(S) = 0) + \mathbb{E}\left [ \mathds{1}\left( \rho(S) > 0   \right) \mathbb{P}\left (   \widehat{Z} \neq \check{Z} | S, X, Y, \check{Z} \right) \right]\\
        &\leq \mathbb{P}(\rho(S) = 0) + \mathbb{E}\left [ \mathds{1}\left( \rho(S) > 0   \right) \mathbb{P}\left ( (\widehat{X} ,\widehat{Z} ) \neq (X,   \check{Z}) | S, X, \check{Z}, Y  \right) \right]\\
        &\stackrel{(a)}{\leq} \mathbb{P}(\rho(S) = 0) + \mathbb{E}\left [ \mathds{1}\left( \rho(S) > 0   \right) \left [1 - \left(1 + \frac{d (Q_X \times P_{\check{Z}; S} )}{d (K_{X|Y}(\cdot|Y) \times P_Z^* )}(X, \check{Z}) \right)^{-1} \right] \right]\\
        &= \mathbb{P}(\rho(S) = 0) + \mathbb{E}\left [ \mathds{1}\left( \rho(S) > 0   \right) \left [1 - \left(1 + \frac{1}{\rho(S)} \exp\left(-\log \frac{d K_{X|Y}(\cdot|Y)}{dQ_X}(X) \right) \right)^{-1} \right] \right]\\
        &= \mathbb{P}(\rho(S) = 0) + \mathbb{E}\left [ \mathds{1}\left( \rho(S) > 0   \right) \left [ \left(1 + \rho(S) \exp\left( \imath_{K_{X|Y}, Q_X}(Y, X) \right) \right)^{-1}\right] \right]\\
        &= \mathbb{E}\left [ \left(1 + \rho(S) \exp\left( \imath_{K_{X|Y}, Q_X}(Y, X) \right) \right)^{-1}  \right]. 
\end{align*}
Inequality $(a)$ follows by applying the Conditional Poisson Matching Lemma \cite[Lemma 2]{alternative_oneshot_JSCC}, which requires the assumption $K_{X|Y}(\cdot|Y) \ll Q_X$ almost surely from Definition \ref{unified_def}. Specifically, 
\cite[Lemma 2]{alternative_oneshot_JSCC} requires both $Q_X \times P_{\check{Z};s} \ll Q_X \times P_{Z}^*$ and $K_{X|Y = y} \times P_Z^* \ll Q_X \times P_{Z}^*$, both of which are true in our case. Note also that in inequality $(a)$, we used the shorthand  
\begin{align*}
    \frac{d (Q_X \times P_{\check{Z}; S} )}{d (K_{X|Y}(\cdot|Y) \times P_Z^* )}(X, \check{Z}) = \frac{\frac{d (Q_X \times P_{\check{Z}; S} )}{d (Q_X \times P_Z^* )}(X, \check{Z})}{\frac{d (K_{X|Y = y} \times P_{Z}^* )}{d (Q_X \times P_Z^* )}(X, \check{Z})}. 
\end{align*}

\section{Proof of Theorem \ref{thm_finite_block1} \label{thm_finite_block1_proof}}

We start by directly evaluating the RHS of $(\ref{g2n})$. Using Lemma \ref{lemma5}, the RHS of $(\ref{g2n})$ can be upper bounded for $k \geq k_0$ by
\begin{align}
    \mathbb{E}\left [ \left(1 + \exp\left( \sum_{i=1}^n \imath_{Q_{X|Y}^{(i), \sim t}, Q_{X_i}^*}(Y_i, X_i) - \sum_{i=1}^k \jmath_d(S_i, P_S, \operatorname{d}) - \left( \overline{c} - \frac{1}{2} \right) \log k - c_1 \right)  \right)^{-1}  \right] + \frac{K}{\sqrt{k}}. \label{g34rg}
\end{align}
To simplify further analysis, we write $(\ref{g34rg})$ as 
\begin{align}
\mathbb{E}\left [ \frac{1}{1 + \exp(U_{k, n} -u_k )} \right] + \frac{K}{\sqrt{k}},
 \label{67b}
\end{align}
where
\begin{align}
    U_{k, n} &=  \sum_{i=1}^n \imath_{Q_{X|Y}^{(i), \sim t}, Q_{X_i}^*}(Y_i, X_i) - \sum_{i=1}^k \jmath_d(S_i, P_S, \operatorname{d}),\\
    u_k &=  \left( \overline{c} - \frac{1}{2} \right) \log k + c_1.
\end{align}
Now let $L$ be a zero-mean and independent random variable with CDF given by
\begin{align}
    F_L(\ell) = \frac{1}{1 + e^{-\ell}}. \label{32095} 
\end{align}
It can be checked that both the variance $V_L$ and the third absolute moment of $L$ are finite. Then we can write $(\ref{67b})$ as 
\begin{align}
    \operatorname{Pr}\left( U_{k, n} + L \leq u_k \right) + \frac{K}{\sqrt{k}}. \label{-60....}
\end{align}
\begin{remark}
    The proof step in $(\ref{-60....})$ lets us apply the Berry--Esseen Theorem directly to the sum $U_{k, n} + L$ as opposed to applying Berry--Esseen Theorem on a typical event as was done in \cite[(409)]{kostina_JSCC}. This is the reason we obtain an improved third-order term compared to the achievability half of \cite[Theorem 10]{kostina_JSCC}.   
\end{remark}

Recall that $(S^k, X^n, Y^n, L) \sim P_{S^k} Q^*_{X^n} P_{Y^n|X^n} F_L$. Note that
\begin{align*}
    \mathbb{E}\left [ U_{k, n} \right] &= \sum_{i=1}^n C_t(P_{Y|X}^{(i)} \| Q_{Y|X}^{(i)}) -k R_d(P_S),\\ 
   \operatorname{Var}\left(U_{k, n} \right) &= \sum_{i=1}^n V_t(P_{Y|X}^{(i)} \| Q_{Y|X}^{(i)}) + k V_d(P_S).
\end{align*} 
Now applying the Berry--Esseen Theorem to $(\ref{-60....})$ gives us, for all $(n, k) \in \operatorname{S}$, 
\begin{align}
    &\operatorname{Pr}\left( U_{k, n} + L \leq u_k \right) + \frac{K}{\sqrt{k}} \notag \\
    &\leq Q\left(\frac{\mathbb{E}[U_{k, n}] - u_k }{\sqrt{\operatorname{Var}(U_{k, n}) + V_L}} \right) + \frac{K_1}{\sqrt{n + k}} + \frac{K_2}{\sqrt{k}}, \label{06jdhwgki}
\end{align}
where $K_1$ and $K_2$ are finite constant terms whose existence is implied by the Assumptions $(\ref{assumpbe1})$-$(\ref{assumpbe2})$. Now define 
$$\epsilon_{n, k} = \epsilon - \frac{K_1}{\sqrt{n + k}} - \frac{K_2}{\sqrt{k}}.$$
Then $(\ref{06jdhwgki})$ is less than or equal to $\epsilon$ if 
\begin{align}
    \mathbb{E}[U_{k, n}] \geq \sqrt{\operatorname{Var}(U_{k, n}) + V_L} Q^{-1}(\epsilon_{n, k}) + u_k. \label{b35}
\end{align}
It remains to show that $(\ref{b35})$ is implied by $(\ref{thm_cond})$ for some constant $c$ as written in $(\ref{thm_cond})$. We write $\epsilon_{n, k} = \epsilon - \Delta_{n, k}$, 
where 
\begin{align}
    \Delta_{n, k} &= \frac{K_1}{\sqrt{n + k}} + \frac{K_2}{\sqrt{k}}. \label{deltankdefadeel}
\end{align} 
For any fixed $\epsilon \in (0,1)$, define
\begin{align}
    C_\epsilon
    &= \sup_{u \in [\epsilon/2,\epsilon]}
    \frac{1}{\phi\left(Q^{-1}(u)\right)}, \label{Cepsilondef}
\end{align}
where $\phi$ is the standard normal PDF. Since the function $u \mapsto 1/\phi\left(Q^{-1}(u)\right)$ is continuous on the compact interval $[\epsilon/2,\epsilon]$, the constant $C_\epsilon$ is finite. Moreover, for sufficiently large $n$ and $k$, we have $\Delta_{n,k} \leq \epsilon/2$, and hence $\epsilon_{k,n}
    = \epsilon-\Delta_{n,k}
    \in [\epsilon/2,\epsilon].$
Using the mean value theorem, we can write
\begin{align}
    Q^{-1}(\epsilon_{k,n})
    &= Q^{-1}(\epsilon)
    + \frac{\Delta_{n,k}}
    {\phi\left(Q^{-1}(\tilde{\epsilon})\right)} \notag \\
    &\leq Q^{-1}(\epsilon)
    + C_\epsilon \Delta_{n,k}, \notag
\end{align}
for some $\tilde{\epsilon} \in (\epsilon-\Delta_{n,k},\epsilon)$. Therefore, for all sufficiently large $n$ and $k$,
\begin{align}
    Q^{-1}(\epsilon_{k,n})
    \leq Q^{-1}(\epsilon)+C_\epsilon\Delta_{n,k}. \notag
\end{align}

Note also that 
\begin{align}
        &\frac{\operatorname{Var}\left( U_{k, n} \right)}{n + k} \notag  \\
        &\leq \frac{1}{n + k}\left [ k \left(\mathbb{E}\left[ \big | \jmath_d(S, P_S, \operatorname{d}) - R_d(P_S) \big |^3 \right]\right)^\frac{2}{3} +  \sum_{i=1}^n \left(\mathbb{E}\left [  \Big | \imath_{Q_{X|Y}^{(i), \sim t }, Q_{X_i}^*}(Y_i, X_i) - C_t(P_{Y|X}^{(i)} \| Q_{Y|X}^{(i)}) \Big|^3 \right]\right)^{\frac{2}{3}}  \right]  \notag \\
        &\leq \left(\frac{1}{n + k}\left [ k \mathbb{E}\left[ \big | \jmath_d(S, P_S, \operatorname{d}) - R_d(P_S) \big |^3 \right] +  \sum_{i=1}^n \mathbb{E}\left [  \Big | \imath_{Q_{X|Y}^{(i), \sim t}, Q_{X_i}^*}(Y_i, X_i) - C_t(P_{Y|X}^{(i)} \| Q_{Y|X}^{(i)}) \Big|^3 \right]  \right]\right)^{\frac{2}{3}} \notag \\
        &\leq \kappa_1^{\frac{2}{3}} < \infty, \label{6n}
    \end{align}
where the first inequality above uses   
monotonicity of $L^p$ norms, the second inequality above uses Jensen's inequality and the third inequality above uses the assumption $(\ref{assumpbe1})$.

Then for sufficiently large $n$ and $k$, we can upper bound 
\begin{align}
    &\sqrt{\operatorname{Var}(U_{k, n}) + V_L} Q^{-1}(\epsilon_{n, k}) \notag \\
    &\leq \sqrt{\operatorname{Var}(U_{k, n}) + V_L } Q^{-1}(\epsilon) + \sqrt{\operatorname{Var}(U_{k, n}) +V_L }C_\epsilon \Delta_{n, k} \notag \\
    &\stackrel{(a)}{\leq} \sqrt{\operatorname{Var}(U_{k, n}) +V_L} Q^{-1}(\epsilon) +  \sqrt{2\kappa_1^{\frac{2}{3}} (n + k) } \, C_\epsilon \left( \frac{K_1}{\sqrt{n + k}} + \frac{K_2}{\sqrt{k}}\right) \notag \\
    &\stackrel{(b)}{\leq} \sqrt{\operatorname{Var}(U_{k, n})} Q^{-1}(\epsilon) + c_2 +  \sqrt{2\kappa_1^{\frac{2}{3}} (n + k) } \, C_\epsilon \left( \frac{K_1}{\sqrt{n + k}} + \frac{K_2}{\sqrt{k}}\right) \notag \\
    &\leq \sqrt{\operatorname{Var}(U_{k, n}) } Q^{-1}(\epsilon) + \tilde{c}_2  + \tilde{c}_3 \sqrt{1 + \frac{n}{k}} \notag \\
    &\stackrel{(c)}{\leq } \sqrt{\operatorname{Var}(U_{k, n}) } Q^{-1}(\epsilon) + c_2  \label{v3c}
\end{align}
for some constants $c_2, \tilde{c}_2$ and $\tilde{c}_3$ independent of $(n, k) \in \operatorname{S}$. Inequality $(a)$ above follows from $(\ref{deltankdefadeel})$, $(\ref{6n})$ and the fact that $V_L < \infty$. Inequality $(b)$ uses Assumption $(\ref{assumpbe2})$. Inequality $(c)$ follows from the admissibility property that requires $n/k$ to be uniformly upper bounded by a fixed constant for all $(n, k) \in \operatorname{S}$.

Therefore, in view of $(\ref{b35})$ and $(\ref{v3c})$, if $(n, k) \in \operatorname{S}$ with $n$ and $k$ sufficiently large, and  
\begin{align}
    \sum_{i=1}^n C_t(P_{Y|X}^{(i)} \| Q_{Y|X}^{(i)}) - k R_d(P_S) &\geq    \sqrt{\operatorname{Var}(U_{k, n}) } Q^{-1}(\epsilon) +  \left(\overline{c} - \frac{1}{2} \right)  \log (k) + c_1 + c_2,   \label{bv4.35}
\end{align}
then the excess-distortion probability $\overline{P}_e(d, n, k) \leq \epsilon$. The constant $c$ in $(\ref{thm_cond})$ can be taken to be $c = c_1 + c_2$.

\section{\label{bec_dec_proof}}

We consider the setting of block erasure channels from Section \ref{tiredaf}. Let $Q_{X^n}^* = \operatorname{Unif}( \mathcal{X}^n)$ and let $Q^{\sim t}_{X^n|Y^n}$ be as given in $(\ref{kernelforbec})$. For the $(P_{Z^k}^*, Q_{X^n}^*, Q^{\sim t}_{X^n|Y^n})$ source-channel code, the argmax rule in $(\ref{unif_dec})$ particularizes to 
\begin{align}
    &\argmax_i \left [ \log\left( \frac{Q^{\sim t}_{X^n|Y^n}(\bar{X}_i^n|y^n)}{Q_{X^n}^*(\bar{X}_i^n)} \right) - \log(T_i) \right ]\\
    &= \argmax_i \left [  -\log(T_i)  + \sum_{j=1}^n L_j(\bar{X}_{i, j}, y_j) \right], \label{gfndjfn}
\end{align}
where 
\begin{align*}
    L_j(x, y) \coloneqq \log\left(\frac{Q_{Y|X}^{(j)}(y|x)^t}{\sum_{u \in \mathcal{X}} 2^{-m} Q_{Y|X}^{(j)}(y|u)^t }\right) &= \begin{cases}
         0 & y = e\\
         m \log(2) & y \in \mathcal{X}, y = x\\
         -\infty & y \in \mathcal{X}, y \neq x.\\
     \end{cases}
\end{align*}
Hence, $(\ref{gfndjfn})$ simplifies to  
\begin{align*}
    &\argmax_{i: \bar{X}_{i, j} = y_j \forall j \text{ with } y_j \neq e} \left [ -\log(T_i)  + \sum_{j=1}^n  L_j(\bar{X}_{i, j}, y_j) \right ]\\
    &= \argmax_{i: \bar{X}_{i, j} = y_j \forall j \text{ with } y_j \neq e} \left [ -\log(T_i)  + N(y^n) m \log(2) \right ]\\
    &= \argmin_{i: \bar{X}_{i, j} = y_j \forall j \text{ with } y_j \neq e} T_i, 
\end{align*}
where $N(y^n) \coloneqq \#\{j:y_j \neq e \}$.

\section{Proof of Theorem \ref{BEC_particularization} \label{BEC_particularization_proof}}

Consider a sequence of channel pairs $\left(P_{Y|X}^{(i)}, Q_{Y|X}^{(i)} \right)_{i \geq 1}$ where $\left(P_{Y|X}^{(i)} \right)_{i \geq 1}$ is as specified in the statement of Theorem \ref{BEC_particularization} and 
\begin{align}
    Q^{(i)}_{Y|X}(y|x) &= \begin{cases}
        1/2 & y = e\\
        1/2 & y = x
    \end{cases}
\end{align}
for all $i$. Furthermore, consider the channel-blind family $\mathcal{U}^{(d)} = \{ U_{n, k}^{(d)}\}$ of source-channel codes from Proposition \ref{becuniversaltheorem}, where 
$U_{n, k}^{(d)}$ is the $(t, Q_{Y^n|X^n}, n, k)$-based source-channel code where we fix an arbitrary value of $t > 0$. We intend to apply Theorem \ref{thm_finite_block1} for this sequence of channel pairs and for the channel-blind and admissible family of source-channel codes given by 
\begin{align}
    \mathcal{U}_{\operatorname{S}}^{(d)} \coloneqq \left \{ U_{n, k}^{(d)} : (n, k) \in \operatorname{S} \right \},
\end{align}
where $\operatorname{S}$ is an admissible source-channel blocklength set to be specified later. 

It can be checked that this sequence of channel pairs satisfies the hypotheses of Theorem \ref{thm_finite_block1}, specifically Assumptions \ref{assumpr1_channels}\,-\, \ref{assumpr3_channels} and $C_t(P_{Y|X}^{(i)} \| Q_{Y|X}^{(i)}) > 0$. The latter holds because $C_t(P_{Y|X}^{(i)} \| Q_{Y|X}^{(i)}) = C(P_{Y|X}^{(i)}) = (1-p_i) \log(2^m) > 0$ since $p_i < 1$ by assumption.
We also need to check that $(\ref{assumpbe1})$ and $(\ref{assumpbe2})$ are satisfied for all $(n, k) \in \operatorname{S}$ in order to apply Theorem \ref{thm_finite_block1}. First note that for all $t > 0$, 
$C_t(P_{Y|X}^{(i)} \| Q_{Y|X}^{(i)}) = C(P_{Y|X}^{(i)})$ and $V_t(P_{Y|X}^{(i)} \| Q_{Y|X}^{(i)}) = V(P_{Y|X}^{(i)})$. Now define $L : \mathcal{X} \times \mathcal{Y} \to [-\infty, +\infty )$ as 
\begin{align*}
    L(x, y) &= \begin{cases}
        \log (2^m) & y \neq e \text{ and } x = y,\\
        -\infty & y \neq e \text{ and } x \neq y,\\
        0 & y = e.
    \end{cases}
\end{align*}
Then the LHS of $(\ref{assumpbe1})$ from Theorem \ref{thm_finite_block1} simplifies to  
    \begin{align}
        \frac{1}{n + k} \left [ k \mathbb{E}_{P_S}\left[ \big | \jmath_d(S, P_S, \operatorname{d}) - R_d(P_S) \big |^3 \right] +  \sum_{i=1}^n \mathbb{E}_{Q_{X} \times P_{Y|X}^{(i)}}\left [  \Big | L(X_i, Y_i) - C(P_{Y|X}^{(i)}) \Big|^3 \right] \right], \label{tiredforbec,,}
\end{align}
where $Q_X = \operatorname{Unif}( \mathcal{X})$. Then there exists a 
constant $\kappa_1$ such that
$(\ref{tiredforbec,,})$ is upper bounded by $\kappa_1$ for all $(n, k)$ because of the argument given in the second remark after Remark \ref{secremarkafter6} and because 
\begin{align*}
    \sup_{p \in [0, 1]} \mathbb{E}_{Q_{X} \times P_{Y|X}}\left [  \Big | L(X, Y) - C(P_{Y|X}) \Big|^3 \right] < \infty
\end{align*}
for $P_{Y|X} = \operatorname{BLEC}(2^m, p)$. Note that $L(X, Y)$ is finite with probability one whenever $(X, Y) \sim Q_X \times P_{Y|X}$. Similarly, the LHS of condition $(\ref{assumpbe2})$ simplifies to 
\begin{align}
    \frac{1}{n + k}\left [ k V_d(P_S) +  \sum_{i=1}^n V(P_{Y|X}^{(i)})  \right], 
\end{align}
which is uniformly lower bounded by some constant $\kappa_2 > 0$  for all $(n, k) \in \operatorname{S}$ with $n \geq n_0$ due to the assumptions made in the statement of Theorem \ref{BEC_particularization}. Hence, we now obtain from Theorem \ref{thm_finite_block1} that
there exist constants $c$, $K_0$ and $N_0$ such that for every $(n, k) \in \operatorname{S}$ with $n \geq N_0$ and $k \geq K_0$, the source-channel code $U_{n, k}^{(d)} \in \mathcal{U}_{\operatorname{S}}^{(d)}$ is $(d, \epsilon, n, k)$  with respect to $\{P_{S^k}, \operatorname{d}_k, P_{Y^n|X^n} \}$ if 
\begin{align}
    \sum_{i=1}^n C(P_{Y|X}^{(i)}) - k R_d(P_S) &\geq  \sqrt{k V_d(P_S) + \sum_{i=1}^n V(P_{Y|X}^{(i)} )} \,\,Q^{-1}(\epsilon) + \left(\overline{c} - \frac{1}{2}\right) \log(k) + c. \label{n350c}
\end{align}

We now specify an admissible source-channel blocklength set $\operatorname{S}$ for the above result. Define 
\begin{align}
    \widehat{k}_n &=  \left \lfloor \frac{1}{R_d(P_S)} \sum_{i=1}^n C(P_{Y|X}^{(i)}) - \frac{Q^{-1}(\epsilon)}{R_d(P_S)} \sqrt{\frac{V_d(P_S)}{R_d(P_S)} \sum_{i=1}^n C(P_{Y|X}^{(i)}) + \sum_{i=1}^n V(P_{Y|X}^{(i)})} - \frac{\overline{c}-1/2}{R_d(P_S)} \log(n) - \frac{\tilde{c}}{R_d(P_S)} \right \rfloor, \label{choiceofkn}
\end{align}
where 
\begin{align*}
    \tilde{c} \coloneqq (\overline{c}-1/2)\log\left( \frac{2 \log(2^m)}{R_d(P_S)} \right) + c. 
\end{align*}
Then since $\sum_{i=1}^n C(P_{Y|X}^{(i)}) = \Theta(n)$ and
$$\frac{V_d(P_S)}{R_d(P_S)} \sum_{i=1}^n C(P_{Y|X}^{(i)}) + \sum_{i=1}^n V(P_{Y|X}^{(i)}) = \Theta(n)$$
by assumption, we have $\frac{\widehat{k}_n}{n} = \Theta(1)$. Hence, there exists an $N > 0$ such that $\operatorname{S} = \left \{ (n, \widehat{k}_n  ) : n \geq N \right \}$ is a valid (in the sense of Definition \ref{admissible_def}) admissible source-channel blocklength set for the result $(\ref{n350c})$. Next, we show that for sufficiently large $n$, inequality 
$(\ref{n350c})$ is satisfied for $k = \widehat{k}_n$. Indeed, the LHS of $(\ref{n350c})$ evaluated at $k = \widehat{k}_n$ is lower bounded as  
\begin{align}
    &\sum_{i=1}^n C(P_{Y|X}^{(i)}) - \widehat{k}_n R_d(P_S) \notag \\
    &\geq Q^{-1}(\epsilon) \sqrt{\frac{V_d(P_S)}{R_d(P_S)} \sum_{i=1}^n C(P_{Y|X}^{(i)}) + \sum_{i=1}^n V(P_{Y|X}^{(i)})} + (\overline{c}-1/2) \log(n) + \tilde{c}.
\end{align}
The RHS of $(\ref{n350c})$ evaluated at $k = \widehat{k}_n$ is 
\begin{align}
    Q^{-1}(\epsilon)\sqrt{\widehat{k}_n V_d(P_S) + \sum_{i=1}^n V(P_{Y|X}^{(i)} )}  + (\overline{c}-1/2) \log(\widehat{k}_n) + c. \label{c2568}
\end{align}
Hence, it suffices to show that for sufficiently large $n$, 
\begin{align}
    &Q^{-1}(\epsilon) \sqrt{\frac{V_d(P_S)}{R_d(P_S)} \sum_{i=1}^n C(P_{Y|X}^{(i)}) + \sum_{i=1}^n V(P_{Y|X}^{(i)})}  + (\overline{c}-1/2)\log(n) +  \tilde{c} \notag  \\
    &\quad \quad \quad \quad \geq Q^{-1}(\epsilon)\sqrt{\widehat{k}_n V_d(P_S) + \sum_{i=1}^n V(P_{Y|X}^{(i)} )} + (\overline{c}-1/2) \log(\widehat{k}_n) + c. \label{finalsufficientkn}
\end{align}
\textit{Case 1:} If $\epsilon \leq 1/2$, then $(\ref{choiceofkn})$ implies that there exists a constant $N_0'$ such that for $n \geq N_0'$, 
\begin{align}
    \widehat{k}_n \leq  \frac{1}{R_d(P_S)} \sum_{i=1}^n C(P_{Y|X}^{(i)}).   \label{sjhafgdhaf}
\end{align}
Then the RHS of $(\ref{finalsufficientkn})$ can be upper bounded by 
\begin{align*}
    &Q^{-1}(\epsilon) \sqrt{\frac{V_d(P_S)}{R_d(P_S)} \sum_{i=1}^n C(P_{Y|X}^{(i)}) + \sum_{i=1}^n V(P_{Y|X}^{(i)})} + (\overline{c}-1/2) \log\left(\frac{1}{R_d(P_S)} \sum_{i=1}^n C(P_{Y|X}^{(i)}) \right) + c\\
    &\leq Q^{-1}(\epsilon) \sqrt{\frac{V_d(P_S)}{R_d(P_S)} \sum_{i=1}^n C(P_{Y|X}^{(i)}) + \sum_{i=1}^n V(P_{Y|X}^{(i)})} + (\overline{c}-1/2) \log\left(\frac{n \log(2^m)}{R_d(P_S)}  \right) + c.
\end{align*}
Therefore, $(\ref{finalsufficientkn})$ holds for $n \geq N_0'$.

\textit{Case 2:} For $\epsilon > 1/2$, $(\ref{choiceofkn})$ implies that there exists $N_1'$ such that for all $n \geq N_1'$, 
\begin{align}
    \widehat{k}_n \geq  \frac{1}{R_d(P_S)} \sum_{i=1}^n C(P_{Y|X}^{(i)})   \label{sjha2fgdhaf}
\end{align}
and 
\begin{align}
    \widehat{k}_n \leq  \frac{2n \log(2^m)}{R_d(P_S)}.    \label{sjnidadhaf}
\end{align}
Using $(\ref{sjha2fgdhaf})$ and $(\ref{sjnidadhaf})$, the RHS of $(\ref{finalsufficientkn})$ can be upper bounded by
\begin{align*}
    Q^{-1}(\epsilon) \sqrt{\frac{V_d(P_S)}{R_d(P_S)} \sum_{i=1}^n C(P_{Y|X}^{(i)}) + \sum_{i=1}^n V(P_{Y|X}^{(i)})} + (\overline{c} -1/2)\log\left(\frac{2n \log(2^m)}{R_d(P_S)}  \right) + c.
\end{align*}
Hence, $(\ref{finalsufficientkn})$ holds for $n \geq N_1'$.  

Combining both cases, we have that for every $\epsilon \in (0, 1)$,
\begin{align*}
    k^*_{\mathcal{U}^{(d)}}(d, \epsilon, n|P_n) &\geq \frac{1}{R_d(P_S)} \sum_{i=1}^n C(P_{Y|X}^{(i)}) - \frac{Q^{-1}(\epsilon)}{R_d(P_S)} \sqrt{\frac{V_d(P_S)}{R_d(P_S)} \sum_{i=1}^n C(P_{Y|X}^{(i)}) + \sum_{i=1}^n V(P_{Y|X}^{(i)})} - \mbox{} \notag \\
    & \quad \quad \quad \quad \quad \quad \quad \quad \quad \quad \frac{\overline{c}-1/2}{R_d(P_S)} \log(n) - \frac{\tilde{c}}{R_d(P_S)} 
\end{align*}
for sufficiently large $n$.

\section{Proof of Theorem \ref{convoghlswhr0vm} \label{convoghlswhr0vm_proof}}

  Let $P_{\overline{Y}^n} = P_{\overline{Y}_1} \times \cdots \times P_{\overline{Y}_n}$ be given as 
\begin{align}
    P_{\overline{Y}_i}(y_i) &= \begin{cases}
        p_i & y_i = e,\\
        \frac{1 -p_i}{2^m} & y_i \in \mathcal{X}. 
    \end{cases}
    \label{2p2yi}
\end{align}
Then when $Y^n \sim P_{Y^n|X^n = x^n}$, we have 
\begin{align}
    \imath_{P_{Y^n|X^n}, P_{\overline{Y}^n} }(x^n, Y^n) &= \log(2^m) \sum_{i=1}^n  \mathds{1}\{ Y_i \neq e \}, \label{dn1}
\end{align}
where $\{Y_i \neq e \} \sim \text{Bernoulli}(1-p_i)$ independently across $i$. Since the distribution of $(\ref{dn1})$ is independent of $x^n$, we obtain from \cite[Theorem 2]{kostina_JSCC} that if a $(d, \epsilon, n, k)$ source-channel code exists, then 
\begin{align*}
    \epsilon \geq \sup_{\gamma > 0} \left \{ \mathbb{P}\left( \sum_{i=1}^k \jmath_d(S_i, P_S, \operatorname{d}) - \log(2^m) \sum_{i=1}^n  \mathds{1}\{ Y_i \neq e \}  \geq \gamma \right) - e^{-\gamma} \right \}
\end{align*}
where $(S^k, Y^n) \sim P_S^{\otimes k} P_{Y^n|X^n = x^n} $ for an arbitrary $x^n$. Since there is no dependence on $x^n$, the preceding necessary condition is   
\begin{align*}
 \mathbb{P}\left( \sum_{i=1}^k \jmath_d(S_i, P_S, \operatorname{d}) - \log(2^m) \sum_{i=1}^n  B_i  \geq \gamma \right) \leq \epsilon + e^{-\gamma} \quad  \text{ for every } \gamma > 0,
\end{align*}
where $B_1, \ldots, B_n$ are independent and each $B_i \sim \text{Bern}(1 - p_i)$. We first center the source and channel random variables so that the preceding necessary condition is equivalently written as
\begin{align}
    &\mathbb{P}\Bigg(
        \sum_{j=1}^k
        \left(
            \jmath_d(S_j,P_S,\operatorname{d})
            -
            R_d(P_S)
        \right)
        -
        \log(2^m)
        \sum_{i=1}^n
        \left(
            B_i-(1-p_i)
        \right)
        \nonumber\\
    &\hspace{35mm}\geq
        \log(2^m)\sum_{i=1}^n(1-p_i)
        -
        kR_d(P_S)
        +
        \gamma
    \Bigg)
    \leq
    \epsilon+e^{-\gamma}
    \qquad \text{for every } \gamma>0.
    \label{nonstationary_bec_centered_converse}
\end{align}
Define the combined variance
\begin{align}
    \sigma_{n,k}^2
    &:=
    kV_d(P_S)
    +
    \bigl(\log(2^m)\bigr)^2
    \sum_{i=1}^n p_i(1-p_i).
    \label{nonstationary_bec_combined_variance}
\end{align}
This is the variance of the zero-mean random variable on the left-hand side of the inequality
inside the probability in \eqref{nonstationary_bec_centered_converse}. Recall that Condition $(\ref{weird9thmomcond})$ implies that \cite[(339)]{kostina_JSCC}
\begin{align*}
    \mathbb{E}\left[
        \left|
            \jmath_d(S,P_S,\operatorname{d})
            -
            R_d(P_S)
        \right|^3
    \right]
    <\infty.
\end{align*}
Consequently, there exists
a finite constant $K_{\mathrm{S}}$ such that
\begin{align}
    \mathbb{E}\left[
        \left|
            \jmath_d(S,P_S,\operatorname{d})
            -
            R_d(P_S)
        \right|^3
    \right]
    \leq
    K_{\mathrm{S}}V_d(P_S).
    \label{source_third_moment_variance_bound}
\end{align}
Indeed, if $V_d(P_S)>0$, then
$K_{\mathrm{S}}$ can be taken to be the ratio of the two
quantities in \eqref{source_third_moment_variance_bound}.
If $V_d(P_S)=0$, then
$\jmath_d(S,P_S,\operatorname{d})=R_d(P_S)$ almost surely,
and hence the left-hand side of
\eqref{source_third_moment_variance_bound} is zero.

For the Bernoulli summands in $(\ref{nonstationary_bec_centered_converse})$, for each $i$, we have 
\begin{align}
    &\mathbb{E}\left[
        \left|
            \log(2^m)
            \left(
                B_i-(1-p_i)
            \right)
        \right|^3
    \right]
    \nonumber\\
    &\qquad =
    \bigl(\log(2^m)\bigr)^3
    p_i(1-p_i)
    \left(
        p_i^2+(1-p_i)^2
    \right)
    \nonumber\\
    &\qquad \leq
    \bigl(\log(2^m)\bigr)^3p_i(1-p_i).
    \label{bec_third_absolute_moment}
\end{align}
It follows from
\eqref{source_third_moment_variance_bound} and
\eqref{bec_third_absolute_moment} that there exists a finite
constant $K_3$, independent of $n$, $k$ and
$(p_1,\ldots,p_n)$, such that
\begin{align}
    &k\,
    \mathbb{E}\left[
        \left|
            \jmath_d(S,P_S,\operatorname{d})
            -
            R_d(P_S)
        \right|^3
    \right]
    \nonumber\\
    &\quad+
    \bigl(\log(2^m)\bigr)^3
    \sum_{i=1}^n
    \mathbb{E}\left[
        \left|
            B_i-(1-p_i)
        \right|^3
    \right]
    \leq
    K_3\sigma_{n,k}^2.
    \label{aggregate_third_moment_bound}
\end{align}
Therefore, the Berry--Esseen theorem for independent,
not necessarily identically distributed, random variables
implies that there exists a finite constant
$K_{\mathrm{BE}}$, independent of $n$, $k$ and
$(p_1,\ldots,p_n)$, such that, whenever
$\sigma_{n,k}>0$,
\begin{align}
    &\mathbb{P}\Bigg(
        \sum_{j=1}^k
        \left(
            \jmath_d(S_j,P_S,\operatorname{d})
            -
            R_d(P_S)
        \right)
        -
        \log(2^m)
        \sum_{i=1}^n
        \left(
            B_i-(1-p_i)
        \right)
        >
        t
    \Bigg)
    \nonumber\\
    &\qquad\geq
    Q\left(
        \frac{t}{\sigma_{n,k}}
    \right)
    -
    \frac{K_{\mathrm{BE}}}{\sigma_{n,k}}
    \qquad \text{for every } t\in\mathbb{R}.
    \label{nonstationary_bec_berry_esseen}
\end{align}
We now choose
\begin{align}
    \gamma
    =
    \frac{1}{2}\log k.
    \label{nonstationary_bec_gamma_choice}
\end{align}
For $k>1$, this choice is admissible, and
\eqref{nonstationary_bec_centered_converse} implies the following: 
\begin{align}
    &\mathbb{P}\Bigg(
        \sum_{j=1}^k
        \left(
            \jmath_d(S_j,P_S,\operatorname{d})
            -
            R_d(P_S)
        \right)
        -
        \log(2^m)
        \sum_{i=1}^n
        \left(
            B_i-(1-p_i)
        \right)
        \nonumber\\
    &\hspace{31mm}\geq
        \log(2^m)\sum_{i=1}^n(1-p_i)
        -
        kR_d(P_S)
        +
        \frac{1}{2}\log k
    \Bigg)
    \leq
    \epsilon+\frac{1}{\sqrt{k}}.
    \label{nonstationary_bec_converse_gamma_choice}
\end{align}
Let $\sigma_0>0$ and $k_0 \geq 2$ be constants sufficiently large such that
\begin{align}
    \frac{K_{\mathrm{BE}}}{\sigma_0}
    \leq
    \frac{1-\epsilon}{4},
    \label{sigma_zero_choice}
\end{align}
and 
\begin{align}
    \frac{1}{\sqrt{k}}
    \leq
    \frac{1-\epsilon}{4}
    \qquad \text{for every } k\geq k_0.
    \label{k_zero_choice}
\end{align}

The goal is to show that $(\ref{nonstationary_bec_converse_gamma_choice})$ implies  
$(\ref{eq:normal-approximation-converse})$ so that $(\ref{eq:normal-approximation-converse})$ is a necessary condition for a code being $(d, \epsilon, n, k)$. We show this for the cases of bounded and unbounded
$\sigma_{n,k}$ separately.

\emph{Case 1: $\sigma_{n,k}\leq\sigma_0$.}
First suppose that $\sigma_{n,k}=0$. The centered random
variable in \eqref{nonstationary_bec_converse_gamma_choice}
is then equal to zero almost surely. Since
$\epsilon+1/\sqrt{k}<1$ for $k\geq k_0$,
\eqref{nonstationary_bec_converse_gamma_choice} implies
\begin{align}
    \log(2^m)\sum_{i=1}^n(1-p_i)
    -
    kR_d(P_S)
    +
    \frac{1}{2}\log k
    >
    0.
    \label{zero_variance_converse}
\end{align}
Now suppose that $0<\sigma_{n,k}\leq\sigma_0$, and define
\begin{align}
    a_\epsilon
    :=
    \sqrt{\frac{1+\epsilon}{1-\epsilon}}.
    \label{cantelli_constant}
\end{align}
By Cantelli's inequality,
\begin{align}
    &\mathbb{P}\Bigg(
        \sum_{j=1}^k
        \left(
            \jmath_d(S_j,P_S,\operatorname{d})
            -
            R_d(P_S)
        \right)
        -
        \log(2^m)
        \sum_{i=1}^n
        \left(
            B_i-(1-p_i)
        \right)
        \geq
        -a_\epsilon\sigma_{n,k}
    \Bigg)
    \nonumber\\
    &\qquad\geq
    \frac{a_\epsilon^2}{1+a_\epsilon^2} \notag \\
    & \qquad =
    \frac{1+\epsilon}{2} \geq  \epsilon + \frac{1}{\sqrt{k}},
\label{nonstationary_bec_cantelli}
\end{align}
where the last inequality holds for $k \geq k_0$ since $\epsilon \in (0, 1)$ and $k_0$ was chosen to satisfy \eqref{k_zero_choice}. Then given that \eqref{nonstationary_bec_converse_gamma_choice} holds and given 
\eqref{nonstationary_bec_cantelli}, it follows that 
\begin{align}
    \log(2^m)\sum_{i=1}^n(1-p_i)
    -
    kR_d(P_S)
    +
    \frac{1}{2}\log k
    >
    -a_\epsilon\sigma_{n,k}.
    \label{bounded_variance_coarse_converse}
\end{align}
Thus, for every $\sigma_{n,k}\leq\sigma_0$, a  necessary condition for a $(d, \epsilon, n, k)$ code is 
\begin{align}
    \log(2^m)\sum_{i=1}^n(1-p_i)
    -
    kR_d(P_S)
    &\geq -a_\epsilon \sigma_{n, k} - \frac{1}{2} \log(k) + \sigma_{n, k} Q^{-1}(\epsilon) - \sigma_0 |Q^{-1}(\epsilon)| \notag \\
    &\geq 
    \sigma_{n,k}Q^{-1}(\epsilon)
    -
    \frac{1}{2}\log k
    -
    K_{\mathrm{small}},
    \label{bounded_variance_final_converse}
\end{align}
where
\begin{align}
    K_{\mathrm{small}}
    :=
    \left(
        a_\epsilon+
        \left|Q^{-1}(\epsilon)\right|
    \right)\sigma_0.
    \label{bounded_variance_constant}
\end{align}
Hence, we have the desired result $(\ref{eq:normal-approximation-converse})$ in this case. 

\emph{Case 2: $\sigma_{n,k}>\sigma_0$.}
Applying \eqref{nonstationary_bec_berry_esseen} with
\begin{align*}
    t
    =
    \log(2^m)\sum_{i=1}^n(1-p_i)
    -
    kR_d(P_S)
    +
    \frac{1}{2}\log k
\end{align*}
and then using
\eqref{nonstationary_bec_converse_gamma_choice}, we obtain
\begin{align}
    Q\Bigg(
        \frac{
            \log(2^m)\sum_{i=1}^n(1-p_i)
            -
            kR_d(P_S)
            +
            \frac{1}{2}\log k
        }{
            \sigma_{n,k}
        }
    \Bigg)
    &\leq
    \epsilon
    +
    \frac{1}{\sqrt{k}}
    +
    \frac{K_{\mathrm{BE}}}{\sigma_{n,k}} \notag \\
   \implies \quad          \log(2^m)\sum_{i=1}^n(1-p_i)
            -
            kR_d(P_S)
         &\geq \sigma_{n, k} Q^{-1} \left( \epsilon  +
    \frac{1}{\sqrt{k}}
    +
    \frac{K_{\mathrm{BE}}}{\sigma_{n,k}} \right) - \frac{1}{2}\log k. \label{normal_tail_upper_bound}
\end{align}
By \eqref{sigma_zero_choice} and \eqref{k_zero_choice}.
\begin{align}
    0 < \frac{1}{\sqrt{k}}
    +
    \frac{K_{\mathrm{BE}}}{\sigma_{n,k}}
    \leq
    \frac{1-\epsilon}{2}.
    \label{normal_approximation_error_range}
\end{align}
Let $\phi$ denote the standard Gaussian density and define
\begin{align}
    L_\epsilon
    :=
    \max_{\epsilon\leq u\leq(1+\epsilon)/2}
    \frac{1}{
        \phi\left(Q^{-1}(u)\right)
    }.
    \label{inverse_Q_lipschitz_constant}
\end{align}
This is a finite constant. By the mean value theorem, for
every $\eta\in[0,(1-\epsilon)/2]$,
\begin{align}
    Q^{-1}(\epsilon+\eta)
    \geq
    Q^{-1}(\epsilon)
    -
    L_\epsilon\eta.
    \label{inverse_Q_mean_value_bound}
\end{align}
Since $Q$ is decreasing, it follows from
\eqref{normal_tail_upper_bound}--\eqref{inverse_Q_mean_value_bound}
that
\begin{align}
    &\log(2^m)\sum_{i=1}^n(1-p_i)
    -
    kR_d(P_S)
    \nonumber\\
    &\qquad\geq
    \sigma_{n,k}Q^{-1}(\epsilon)
    -
    \frac{1}{2}\log k
    -  
    \left(
        \frac{\sigma_{n,k}}{\sqrt{k}}
        +
        K_{\mathrm{BE}}
    \right) L_\epsilon.
    \label{berry_esseen_preliminary_converse}
\end{align}
It remains to show that $(\ref{berry_esseen_preliminary_converse})$ implies $(\ref{eq:normal-approximation-converse})$. We show this separately for two subcases of \textit{Case 2:} $\sigma_{n, k} > \sigma_0$. Let $D_\epsilon>0$ be a constant sufficiently large such that,
for every $x\geq D_\epsilon$,
\begin{align}
    x^2
    -
    \left(
        V_d(P_S)
        +
        \log(2^m)R_d(P_S)
    \right)
    \geq
    \log(2^m)
    \left|Q^{-1}(\epsilon)\right|x.
    \label{large_variance_threshold_choice}
\end{align}
Such a constant exists because the left-hand side grows
quadratically in $x$, whereas the right-hand side grows
linearly.

If $\sigma_{n,k}
    \leq
    D_\epsilon\sqrt{k},$ then \eqref{berry_esseen_preliminary_converse} gives
\begin{align}
    &\log(2^m)\sum_{i=1}^n(1-p_i)
    -
    kR_d(P_S)
    \nonumber\\
    &\qquad\geq
    \sigma_{n,k}Q^{-1}(\epsilon)
    -
    \frac{1}{2}\log k
    -
    L_\epsilon
    \left(
        D_\epsilon+K_{\mathrm{BE}}
    \right),
    \label{moderate_variance_final_converse}
\end{align}
giving us the desired result. We now consider the case $\sigma_{n,k}
    >
    D_\epsilon\sqrt{k}$ in the rest of the proof. Using $p_i(1-p_i)\leq 1-p_i$, we have
\begin{align}
    \sigma_{n,k}^2
    &=
    kV_d(P_S)
    +
    \bigl(\log(2^m)\bigr)^2
    \sum_{i=1}^n p_i(1-p_i)
    \nonumber\\
    &\leq
    kV_d(P_S)
    +
    \bigl(\log(2^m)\bigr)^2
    \sum_{i=1}^n(1-p_i)
    \nonumber\\
    &=
    kV_d(P_S)
    +
    \log(2^m)
    \left(
        \log(2^m)\sum_{i=1}^n(1-p_i)
    \right).
    \label{dispersion_capacity_relation}
\end{align}
Rearranging \eqref{dispersion_capacity_relation} gives
\begin{align}
    &\log(2^m)\sum_{i=1}^n(1-p_i)
    -
    kR_d(P_S)
    \nonumber\\
    &\qquad\geq
    \frac{
        \sigma_{n,k}^2
        -
        k\left(
            V_d(P_S)
            +
            \log(2^m)R_d(P_S)
        \right)
    }{
        \log(2^m)
    }.
    \label{large_variance_capacity_surplus}
\end{align}
By the choice of
$D_\epsilon$ in
\eqref{large_variance_threshold_choice}, the right-hand side
of \eqref{large_variance_capacity_surplus} is lower bounded as
\begin{align}
    &\frac{k}{\log(2^m)}
    \Bigg[
        \left(
            \frac{\sigma_{n,k}}{\sqrt{k}}
        \right)^2
        -
        \left(
            V_d(P_S)
            +
            \log(2^m)R_d(P_S)
        \right)
    \Bigg]
    \nonumber\\
    &\qquad\geq
    k\left|Q^{-1}(\epsilon)\right|
    \frac{\sigma_{n,k}}{\sqrt{k}}
    \nonumber\\
    &\qquad=
    \sqrt{k}
    \left|Q^{-1}(\epsilon)\right|
    \sigma_{n,k}
    \nonumber\\
    &\qquad\geq
    Q^{-1}(\epsilon)\sigma_{n,k}.
    \label{large_variance_direct_converse}
\end{align}
Consequently, the desired bound also holds in this case.

Combining
\eqref{bounded_variance_final_converse},
\eqref{moderate_variance_final_converse} and
\eqref{large_variance_direct_converse}, there exists a finite
constant
\begin{align}
    K
    :=
    \max\left\{
        K_{\mathrm{small}},
        L_\epsilon
        \left(
            D_\epsilon+K_{\mathrm{BE}}
        \right)
    \right\}
    \label{nonstationary_bec_final_constant}
\end{align}
such that, for every $n\geq 1$ and every $k\geq k_0$,
the existence of a $(d,\epsilon,n,k)$ source-channel code
implies
\begin{align}
    &\log(2^m)\sum_{i=1}^n(1-p_i)
    -
    kR_d(P_S)
    \nonumber\\
    &\qquad\geq
    \sqrt{
        kV_d(P_S)
        +
        \bigl(\log(2^m)\bigr)^2
        \sum_{i=1}^n p_i(1-p_i)
    }
    \,Q^{-1}(\epsilon)
    -
    \frac{1}{2}\log k
    -
    K.
    \label{nonstationary_bec_final_converse}
\end{align}

\section{Proof of Corollary \ref{corocoro} \label{corocoro_proof}}

Recall that we have $0<R_d(P_S)<\infty$ from Condition $(\ref{cond1dee})$. Define
\begin{align}
    \mathsf{B}_n
    &:=
    \frac{V_d(P_S)}{R_d(P_S)}
    \sum_{i=1}^n C\left(P_{Y|X}^{(i)}\right)
    +
    \sum_{i=1}^n V\left(P_{Y|X}^{(i)}\right).
    \label{eq:combined-variance-nonstationary-bec}
\end{align}
For a block erasure channel
$P_{Y|X}^{(i)}=\operatorname{BLEC}(2^m,p_i)$, we have
\begin{align}
    C\left(P_{Y|X}^{(i)}\right)
    &=
    (1-p_i)\log(2^m),
    \label{eq:bec-capacity-corollary}\\
    V\left(P_{Y|X}^{(i)}\right)
    &=
    p_i(1-p_i)\bigl(\log(2^m)\bigr)^2.
    \label{eq:bec-dispersion-corollary}
\end{align}
In particular,
\begin{align}
    0
    &\leq
    C\left(P_{Y|X}^{(i)}\right)
    \leq
    \log(2^m),
    \label{eq:bec-capacity-uniform-upper-bound}\\
    0
    &\leq
    V\left(P_{Y|X}^{(i)}\right)
    \leq
    \frac{1}{4}\bigl(\log(2^m)\bigr)^2.
    \label{eq:bec-dispersion-uniform-upper-bound}
\end{align}
Hence, by the assumptions on $\mathbf{P}$,
we have that there exist
constants $0<b_- \leq b_+<\infty$ such that, for all sufficiently
large $n$,
\begin{align}
    b_-n
    \leq
    \mathsf{B}_n
    \leq
    b_+n.
    \label{eq:combined-variance-linear-bounds}
\end{align}

It suffices to prove that if a $(d, \epsilon, n, k)$ source-channel code exists and $n \geq n_1$, then  
\begin{align}
    k
    &\leq
    \frac{1}{R_d(P_S)}
    \sum_{i=1}^n C\left(P_{Y|X}^{(i)}\right)
    -
    \frac{Q^{-1}(\epsilon)}{R_d(P_S)}
    \sqrt{\mathsf{B}_n}
    +
    \frac{1}{2R_d(P_S)}\log n + \widetilde{K}, \label{desired_uppbndjgkg4}
\end{align}
where $n_1$ and $\widetilde{K}$ are some positive constants independent of $n$ and $k$. The following are the trivial cases that we can rule out at the outset: 
\begin{itemize}
    \item If $k < k_0$, where $k_0$ is the constant from Theorem \ref{convoghlswhr0vm}, then   \eqref{desired_uppbndjgkg4} holds for sufficiently large $n \geq n_1$ for some $n_1$ because the RHS of \eqref{desired_uppbndjgkg4} grows linearly with $n$.
    \item If 
    \begin{align}
    &\sum_{i=1}^n C\left(P_{Y|X}^{(i)}\right)
    -
    kR_d(P_S) \geq
    Q^{-1}(\epsilon)\sqrt{\mathsf{B}_n}
    -
    \frac{1}{2}\log n,
    \label{eq:kstar-first-case}
\end{align}
then \eqref{desired_uppbndjgkg4} again holds since $\widetilde{K} > 0$. 
\end{itemize}

Hence, in the remainder of the proof, we assume $k\geq k_0$ and 
\begin{align}
    &\sum_{i=1}^n C\left(P_{Y|X}^{(i)}\right)
    -
    kR_d(P_S)
    <
    Q^{-1}(\epsilon)\sqrt{\mathsf{B}_n}
    -
    \frac{1}{2}\log n.
    \label{eq:kstar-second-case}
\end{align}
Then if a $(d,\epsilon,n,k)$ source-channel code exists, then Theorem \ref{convoghlswhr0vm} implies that 
\begin{align}
    &\sum_{i=1}^n C\left(P_{Y|X}^{(i)}\right)
    -
    kR_d(P_S)
    \nonumber\\
    &\qquad\geq
    Q^{-1}(\epsilon)
    \sqrt{
        kV_d(P_S)
        +
        \sum_{i=1}^n V\left(P_{Y|X}^{(i)}\right)
    }
    -
    \frac{1}{2}\log k
    -
    K.
    \label{eq:implicit-kstar-converse}
\end{align}
It now suffices to show that for sufficiently large $n \geq n_1$,  $(\ref{eq:implicit-kstar-converse})$ implies \eqref{desired_uppbndjgkg4}.

We first note that $(\ref{eq:implicit-kstar-converse})$ implies that the source blocklength $k$ can be at most linear in $n$. Indeed, rearranging
\eqref{eq:implicit-kstar-converse} and using
\eqref{eq:bec-capacity-uniform-upper-bound} and
\eqref{eq:bec-dispersion-uniform-upper-bound} give
\begin{align}
    kR_d(P_S)
    &\leq
    n\log(2^m)
    +
    \left|Q^{-1}(\epsilon)\right|
    \sqrt{
        kV_d(P_S)
        +
        \frac{n}{4}\bigl(\log(2^m)\bigr)^2
    }
    \nonumber\\
    &\quad
    +
    \frac{1}{2}\log k
    +
    K.
    \label{eq:coarse-linear-k-bound}
\end{align}
Since $R_d(P_S) > 0$, there exist constants $\beta_+<\infty$ and $n_2$
such that every feasible $k\geq k_0$ satisfies
\begin{align}
    k
    \leq
    \beta_+n
    \qquad
    \text{for every } n\geq n_2.
    \label{eq:k-upper-linear-bound}
\end{align}
To see this, suppose otherwise. Then there would exist
pairs $(n_j,k_j)$ satisfying \eqref{eq:implicit-kstar-converse} and thus \eqref{eq:coarse-linear-k-bound} such that $k_j/n_j\to\infty$. Dividing
\eqref{eq:coarse-linear-k-bound} by $k_j$ would give
\begin{align*}
    R_d(P_S)
    &\leq
    \frac{n_j\log(2^m)}{k_j}
    +
    \left|Q^{-1}(\epsilon)\right|
    \sqrt{
        \frac{V_d(P_S)}{k_j}
        +
        \frac{
            n_j\bigl(\log(2^m)\bigr)^2
        }{
            4k_j^2
        }
    }
    +
    \frac{\log k_j}{2k_j}
    +
    \frac{K}{k_j},
\end{align*}
whose right-hand side converges to zero, contradicting
$R_d(P_S)>0$.

By \eqref{eq:combined-variance-linear-bounds},
\eqref{eq:kstar-second-case} implies the upper bound
\begin{align}
    \sum_{i=1}^n C\left(P_{Y|X}^{(i)}\right)
    -
    kR_d(P_S)
    <
    \left|Q^{-1}(\epsilon)\right|\sqrt{b_+n}.
    \label{eq:capacity-surplus-upper-bound}
\end{align}
On the other hand, using
\eqref{eq:k-upper-linear-bound} in
\eqref{eq:implicit-kstar-converse} gives
\begin{align}
    &\sum_{i=1}^n C\left(P_{Y|X}^{(i)}\right)
    -
    kR_d(P_S)
    \nonumber\\
    &\qquad\geq
    -
    \left|Q^{-1}(\epsilon)\right|
    \sqrt{
        \beta_+nV_d(P_S)
        +
        \frac{n}{4}\bigl(\log(2^m)\bigr)^2
    }
    \nonumber\\
    &\qquad\quad
    -
    \frac{1}{2}\log(\beta_+n)
    -
    K.
    \label{eq:capacity-surplus-lower-bound}
\end{align}
It follows from
\eqref{eq:capacity-surplus-upper-bound} and
\eqref{eq:capacity-surplus-lower-bound} that there exists a
finite constant $K_1$, independent of $n$ and $k$, such that
for all sufficiently large $n$,
\begin{align}
    \left|
        \sum_{i=1}^n C\left(P_{Y|X}^{(i)}\right)
        -
        kR_d(P_S)
    \right|
    \leq
    K_1\sqrt{n}.
    \label{eq:capacity-surplus-square-root-bound}
\end{align}

The variance appearing in
\eqref{eq:implicit-kstar-converse} is related to
$\mathsf{B}_n$ through the identity
\begin{align}
    &kV_d(P_S)
    +
    \sum_{i=1}^n V\left(P_{Y|X}^{(i)}\right)
    \nonumber\\
    &\qquad=
    \mathsf{B}_n
    -
    \frac{V_d(P_S)}{R_d(P_S)}
    \left(
        \sum_{i=1}^n C\left(P_{Y|X}^{(i)}\right)
        -
        kR_d(P_S)
    \right).
    \label{eq:combined-variance-identity}
\end{align}
By
\eqref{eq:combined-variance-linear-bounds},
\eqref{eq:capacity-surplus-square-root-bound} and
\eqref{eq:combined-variance-identity}, for all sufficiently
large $n$,
\begin{align}
    kV_d(P_S)
    +
    \sum_{i=1}^n V\left(P_{Y|X}^{(i)}\right)
    \geq
    \frac{b_-}{2}n.
    \label{eq:actual-variance-linear-lower-bound}
\end{align}
Therefore, using $\sqrt{x} - \sqrt{y} = \frac{x-y}{\sqrt{x} + \sqrt{y}}$, we obtain 
\begin{align}
    &\left|
        \sqrt{
            kV_d(P_S)
            +
            \sum_{i=1}^n V\left(P_{Y|X}^{(i)}\right)
        }
        -
        \sqrt{\mathsf{B}_n}
    \right|
    \nonumber\\
    &\quad=
    \frac{
        \dfrac{V_d(P_S)}{R_d(P_S)}
        \left|
            \sum_{i=1}^n C\left(P_{Y|X}^{(i)}\right)
            -
            kR_d(P_S)
        \right|
    }{
        \sqrt{
            kV_d(P_S)
            +
            \sum_{i=1}^n V\left(P_{Y|X}^{(i)}\right)
        }
        +
        \sqrt{\mathsf{B}_n}
    }
    \nonumber\\
    &\quad\leq
    \frac{
        \dfrac{V_d(P_S)}{R_d(P_S)}K_1\sqrt{n}
    }{
        \sqrt{\dfrac{b_-}{2}n}
        +
        \sqrt{b_-n}
    }.
    \label{eq:square-root-comparison}
\end{align}
Hence, there exists a finite constant $K_2$ such that
\begin{align}
    \left|
        \sqrt{
            kV_d(P_S)
            +
            \sum_{i=1}^n V\left(P_{Y|X}^{(i)}\right)
        }
        -
        \sqrt{\mathsf{B}_n}
    \right|
    \leq
    K_2.
    \label{eq:square-root-comparison-constant}
\end{align}

Furthermore, \eqref{eq:k-upper-linear-bound} gives
\begin{align}
    \log k
    \leq
    \log n
    +
    \log\left(\max\{1,\beta_+\}\right).
    \label{eq:log-k-log-n-comparison}
\end{align}
Substituting
\eqref{eq:square-root-comparison-constant} and
\eqref{eq:log-k-log-n-comparison} into
\eqref{eq:implicit-kstar-converse}, we obtain
\begin{align}
    &\sum_{i=1}^n C\left(P_{Y|X}^{(i)}\right)
    -
    kR_d(P_S)
    \nonumber\\
    &\qquad\geq
    Q^{-1}(\epsilon)\sqrt{\mathsf{B}_n}
    -
    \frac{1}{2}\log n
    -
    \widetilde{K}_1,
    \label{eq:explicit-capacity-surplus-converse}
\end{align}
where
\begin{align}
    \widetilde{K}_1
    :=
    K
    +
    \left|Q^{-1}(\epsilon)\right|K_2
    +
    \frac{1}{2}
    \log\left(\max\{1,\beta_+\}\right).
    \label{eq:explicit-capacity-surplus-constant}
\end{align}
The proof is finished by noting that \eqref{eq:explicit-capacity-surplus-converse} implies \eqref{desired_uppbndjgkg4} for 
 $\widetilde{K} = \frac{\widetilde{K}_1}{R_d(P_S)}$. 

\section{Proof of Lemma \ref{lemmafirstlowerboundthen} \label{lemmafirstlowerboundthen_proof}}

Let $Q_{X^n, \alpha} = \mathcal{N}(\mathbf{0}, \alpha I_n)$ and $U_{X^n, \alpha}
=
\operatorname{Unif}
\left(
\left\{
x^n : \|x^n\|^2 = n\alpha
\right\}
\right).$ We first write
\begin{align}
    \imath_{\widehat{P}_{X^n|Y^n}, Q_{X^n}^{\operatorname{u}}}
    (Y^n, X^n)
    &=
    \log
    \left(
    \frac{
    (2\pi \widehat{p}(Y^n))^{-n/2}
    \exp\left(
    -\frac{\|Y^n-X^n\|^2}{2\widehat{p}(Y^n)}
    \right)
    }{
    Q(Y^n)
    }
    \right)
    -
    \log \frac{U(Y^n)}{Q(Y^n)},
    \label{firstermandsecondterm}
\end{align}
where
\begin{align}
    U(y^n)
    &\coloneqq
    \int_{\mathbb{R}^n}
    (2\pi \widehat{p}(y^n))^{-n/2}
    \exp\left(
    -\frac{\|y^n-u^n\|^2}{2\widehat{p}(y^n)}
    \right)
    dU_{X^n,\alpha}(u^n), \label{lkej3.t11}\\
    Q(y^n)
    &\coloneqq
    \int_{\mathbb{R}^n}
    (2\pi \widehat{p}(y^n))^{-n/2}
    \exp\left(
    -\frac{\|y^n-u^n\|^2}{2\widehat{p}(y^n)}
    \right)
    dQ_{X^n,\alpha}(u^n), \label{lkej3.t}\\
    \overline{Q}(y^n)
    &\coloneqq
    \int_{\mathbb{R}^n}
    \exp\left(
    -\frac{\|y^n-u^n\|^2}{2\widehat{p}(y^n)}
    \right)
    dQ_{X^n,\alpha}(u^n). \notag 
\end{align}
Since $(X^n,Y^n)
\sim
U_{X^n,\alpha}\times P_{Y^n|X^n},$ we have $\|X^n\|^2=n\alpha$ almost surely and $Y^n=X^n+Z^n,$ where $Z^n\sim\mathcal{N}(\mathbf{0},p^2I_n).$ Hence, the first term in \eqref{firstermandsecondterm} can be
written as
\begin{align*}
    \log
    \frac{
    \exp\left(
    -\frac{\|Y^n-X^n\|^2}{2\widehat{p}(Y^n)}
    \right)
    }{
    \overline{Q}(Y^n)
    }
    &=
    -\frac{\|Y^n-X^n\|^2}{2\widehat{p}(Y^n)}
    -
    \frac{n}{2}
    \log
    \left(
    \frac{\widehat{p}(Y^n)}
    {\alpha+\widehat{p}(Y^n)}
    \right)
    +
    \frac{\|Y^n\|^2}
    {2(\alpha+\widehat{p}(Y^n))}
    \\
    &=
    \frac{n}{2}
    \log
    \left(
    1+\frac{\alpha}{\widehat{p}(Y^n)}
    \right)
    -
    \frac{\|Y^n-X^n\|^2}{2\widehat{p}(Y^n)}
    +
    \frac{\|Y^n\|^2}
    {2(\alpha+\widehat{p}(Y^n))}.
\end{align*}
To prove Lemma~\ref{lemmafirstlowerboundthen}, it suffices to
upper bound
\[
\log\frac{U(Y^n)}{Q(Y^n)}
\]
by a constant on an event having probability at least
$1-O(n^{-1})$. Define
\[
T_n
\coloneqq
\frac{\|Y^n\|^2}{n}-\alpha.
\]
Then
\[
\mathbb{E}[T_n]=p^2,
\qquad
\operatorname{Var}(T_n)
=
\frac{2p^4}{n}
+
\frac{4p^2\alpha}{n}.
\]
In particular, $\widehat{p}(Y^n)\to p^2$ in probability. For any fixed $t>0$ and $\alpha > 0$, define
\begin{align*}
    U_{\alpha,t}(y^n)
    &\coloneqq
    \int_{\mathbb{R}^n}
    (2\pi t)^{-n/2}
    \exp\left(
    -\frac{\|y^n-u^n\|^2}{2t}
    \right)
    dU_{X^n,\alpha}(u^n),\\
    Q_{\alpha,t}(y^n)
    &\coloneqq
    \int_{\mathbb{R}^n}
    (2\pi t)^{-n/2}
    \exp\left(
    -\frac{\|y^n-u^n\|^2}{2t}
    \right)
    dQ_{X^n,\alpha}(u^n).
\end{align*}
Then note that 
\begin{align}
\begin{split}
    U_{\alpha,t}(y^n)
    &=
    \frac{1}{t^{n/2}}
    U_{\frac{\alpha}{t},1}
    \left(
    \frac{y^n}{\sqrt{t}}
    \right),\\
    Q_{\alpha,t}(y^n)
    &=
    \frac{1}{t^{n/2}}
    Q_{\frac{\alpha}{t},1}
    \left(
    \frac{y^n}{\sqrt{t}}
    \right).
\end{split}
\label{molvako}
\end{align}
We now fix
\[
t_0\coloneqq\frac{p^2}{2},
\qquad
\rho_0\coloneqq\frac{\alpha}{t_0}
=\frac{2\alpha}{p^2}.
\]
For $\rho>0$ and $r\geq 0$, define
\begin{align}
    b_\rho(r)
    &\coloneqq
    \frac{c_0}{2}
    \frac{
    1+\sqrt{1+4\rho r}
    }{
    (1+4\rho r)^{1/4}
    },
    \label{brhodef}\\
    f_\rho(r)
    &\coloneqq
    (1+\rho)
    -
    \log\bigl(2(1+\rho)\bigr)
    +
    \frac{\rho r}{1+\rho}
    -
    \sqrt{1+4\rho r}
    +
    \log\left(
    1+\sqrt{1+4\rho r}
    \right),
    \label{frhodef}
\end{align}
where
\[
c_0=27\sqrt{\frac{\pi}{8}}.
\]
For the single fixed value $\rho=\rho_0$,
\cite[(102)--(103)]{molavianjazi} gives the existence of an
integer $n_1=n_1(\rho_0)$ such that, for every $n\geq n_1$
and every $z^n\in\mathbb{R}^n$,
\begin{align}
    \frac{
    U_{\rho_0,1}(z^n)
    }{
    Q_{\rho_0,1}(z^n)
    }
    &\leq
    b_{\rho_0}(r)
    \exp\left(
    -\frac{n}{2}f_{\rho_0}(r)
    \right),
    \qquad
    r=\frac{\|z^n\|^2}{n}.
    \label{eq102used}
\end{align}
Direct differentiation gives
\begin{align*}
    f_\rho'(r)
    &=
    \rho
    \left(
    \frac{1}{1+\rho}
    -
    \frac{2}{1+\sqrt{1+4\rho r}}
    \right).
\end{align*}
It follows that $f_\rho$ is decreasing on $[0,1+\rho]$,
increasing on $[1+\rho,\infty)$, and $f_\rho(1+\rho)=0.$ Consequently, $f_\rho(r)\geq 0$ for every $r \geq 0$. Moreover, as $r\to\infty$,
\begin{align*}
    b_\rho(r)&=O(r^{1/4}),\\
    f_\rho(r)
    &=
    \frac{\rho}{1+\rho}r
    -
    2\sqrt{\rho r}
    +
    O(\log r),
\end{align*}
and hence $f_\rho(r)\to\infty$. Define
\begin{align}
    K_0
    &\coloneqq
    \sup_{r\geq 0}
    b_{\rho_0}(r)
    \exp\left(
    -\frac{n_1}{2}f_{\rho_0}(r)
    \right).
    \label{Kzerodef}
\end{align}
The preceding asymptotic estimates imply that $K_0<\infty$.
Furthermore, since $f_{\rho_0}(r)\geq 0$, for every $n\geq n_1$,
\begin{align}
    \sup_{z^n\in\mathbb{R}^n}
    \frac{
    U_{\rho_0,1}(z^n)
    }{
    Q_{\rho_0,1}(z^n)
    }
    &\leq K_0.
    \label{fixedrhoK}
\end{align}
By \eqref{molvako} and \eqref{fixedrhoK}, for every
$n\geq n_1$,
\begin{align}
    \sup_{y^n\in\mathbb{R}^n}
    \frac{
    U_{\alpha,t_0}(y^n)
    }{
    Q_{\alpha,t_0}(y^n)
    }
    &\leq K_0.
    \label{fixedendpoint}
\end{align}
We now extend \eqref{fixedendpoint} from the single
variance $t_0$ to every $t\geq t_0$. Let
\[
\phi_s^{(n)}(x^n)
\coloneqq
(2\pi s)^{-n/2}
\exp\left(
-\frac{\|x^n\|^2}{2s}
\right),
\qquad s>0.
\]
For $t>t_0$, the Gaussian convolution identity gives
\begin{align*}
    U_{\alpha,t}(y^n)
    &=
    \int_{\mathbb{R}^n}
    U_{\alpha,t_0}(z^n)
    \phi_{t-t_0}^{(n)}(y^n-z^n)
    dz^n,\\
    Q_{\alpha,t}(y^n)
    &=
    \int_{\mathbb{R}^n}
    Q_{\alpha,t_0}(z^n)
    \phi_{t-t_0}^{(n)}(y^n-z^n)
    dz^n.
\end{align*}
Using \eqref{fixedendpoint} and the nonnegativity of the
Gaussian density, we obtain
\begin{align*}
    U_{\alpha,t}(y^n)
    &\leq
    K_0
    \int_{\mathbb{R}^n}
    Q_{\alpha,t_0}(z^n)
    \phi_{t-t_0}^{(n)}(y^n-z^n)
    dz^n\\
    &=
    K_0 Q_{\alpha,t}(y^n).
\end{align*}
Together with \eqref{fixedendpoint}, this proves that, for
every $n\geq n_1$,
\begin{align}
    \sup_{t\geq t_0}
    \sup_{y^n\in\mathbb{R}^n}
    \frac{
    U_{\alpha,t}(y^n)
    }{
    Q_{\alpha,t}(y^n)
    }
    &\leq K_0.
    \label{uniformtbound}
\end{align}
Define the event
\[
A_n
\coloneqq
\left\{
\widehat{p}(Y^n)\geq \frac{p^2}{2}
\right\}.
\]
On the event $A_n$, we have from \eqref{uniformtbound} that 
\begin{align}
    \frac{
    U_{\alpha,\widehat{p}(Y^n)}(Y^n)
    }{
    Q_{\alpha,\widehat{p}(Y^n)}(Y^n)
    }
    &\leq K_0.
    \label{randomtbound}
\end{align}
The above step follows because
\eqref{uniformtbound} holds simultaneously for every
$t\geq t_0$ and every $y^n\in\mathbb{R}^n$, with a common
blocklength threshold $n \geq n_1$. Let $C_1\coloneqq\log K_0$.
It follows from \eqref{randomtbound} that, on $A_n$,
\begin{align*}
    \log\frac{U(Y^n)}{Q(Y^n)}
    &=
    \log
    \frac{
    U_{\alpha,\widehat{p}(Y^n)}(Y^n)
    }{
    Q_{\alpha,\widehat{p}(Y^n)}(Y^n)
    }
    \leq C_1.
\end{align*}
It remains to show that the event $A_n$ has probability at least $1 - O(n^{-1})$. For all sufficiently large $n$ we
have $1/n<p^2/2$. Therefore,
\begin{align*}
    \mathbb{P}(A_n^c)
    &=
    \mathbb{P}
    \left(
    \widehat{p}(Y^n)<\frac{p^2}{2}
    \right)\\
    &\leq
    \mathbb{P}
    \left(
    \frac{\|Y^n\|^2}{n}-\alpha
    <
    \frac{p^2}{2}
    \right)\\
    &\leq
    \mathbb{P}
    \left(
    |T_n-p^2|
    \geq
    \frac{p^2}{2}
    \right)\\
    &\leq
    \frac{
    4\operatorname{Var}(T_n)
    }{
    p^4
    } = O\left(\frac{1}{n} \right).
\end{align*}

\section{Proof of Lemma \ref{mytwosidedbndprop} \label{mytwosidedbndprop_proof}}

We start by rewriting 
\begin{align*}
    \Lambda(X^n, Y^n) &= \frac{n}{2} \log \left(1 +  \frac{\alpha}{\widehat{p}(Y^n)} \right) - \frac{||Z^n||^2}{2 \widehat{p}(Y^n)} + \frac{||Y^n||^2}{2(\alpha + \widehat{p}(Y^n) )}. 
\end{align*}
We use the following equalities in distribution: 
\begin{align*}
    X^n &= \sqrt{n \alpha} \frac{G^n}{||G^n||}, \quad Z^n = p N^n,
\end{align*}
where $G^n, N^n$ are independent random vectors in $\mathbb{R}^n$, and both have the standard normal distribution $\mathcal{N}(\mathbf{0}, I_n)$. We will show that on a high probability event $E_n$ to be specified later, $\Lambda(X^n, Y^n) = n f(U_n, M_n, W_n)$ can be written as a smooth function of the following empirical means of i.i.d. random variables: 
\begin{align*}
    U_n &= \frac{1}{n} \sum_{i=1}^n p^2 N_i^2, \quad M_n = \frac{1}{n} \sum_{i=1}^n p G_i N_i, \quad W_n = \frac{1}{n} \sum_{i=1}^n G_i^2.  
\end{align*}
Define $T_n = \frac{||Y^n||^2}{n} - \alpha$. Let the event $E_n \subset \left \{T_n > \frac{1}{n} \right \}$, where the superset is a high probability event as $n \to \infty$ because $T_n \to p^2 > 0$. Specifically, 
\begin{align*}
    &\mathbb{P} \left(\frac{||Y^n||^2}{n} - \alpha \leq \frac{1}{n} \right)\\
    &= \mathbb{P} \left(\frac{||Y^n||^2}{n} - \alpha - p^2 \leq \frac{1}{n} - p^2 \right)\\
    &\leq \mathbb{P} \left(\left |\frac{||Y^n||^2}{n} - \alpha - p^2 \right| \geq  p^2 - \frac{1}{n} \right)\\
    &\leq O\left(\frac{1}{n} \right). 
\end{align*}
Hence, on the high probability event $E_n$ (yet to be fully specified), we have $\widehat{p}(Y^n)=  T_n$ and  
\begin{align}
    \Lambda(X^n, Y^n) &= \frac{n}{2} \log \left(1 +  \frac{\alpha}{T_n} \right) - \frac{||Z^n||^2}{2 T_n} + \frac{||Y^n||^2}{2(\alpha + T_n )}\\
    &= \frac{n}{2} \log \left(1 +  \frac{\alpha}{T_n} \right) - \frac{p^2 ||N^n||^2}{2 T_n} + \frac{n \alpha + n T_n}{2(\alpha + T_n )}\\
    &= \frac{n}{2} \log \left(1 +  \frac{\alpha}{T_n} \right)  + \frac{n}{2} - \frac{n U_n}{2T_n}, \label{b25cu}
\end{align}
where $T_n$ can further be expressed in terms of $U_n, M_n$ and $W_n$ as 
\begin{align*}
    T_n &= \frac{2p\sqrt{\alpha}}{\sqrt{n} ||G^n||}  (G^n)^T N^n + \frac{p^2}{n} ||N^n||^2\\
    &= U_n + \frac{2 \sqrt{\alpha}M_n}{\sqrt{W_n}}.  
\end{align*}
Therefore, $\Lambda(X^n, Y^n) = nf(U_n, M_n, W_n)$ on the event $E_n$ where we define
\begin{align*}
    f(u, m, w) &= \frac{1}{2} \log \left(1 +  \frac{\alpha}{t(u, m, w)} \right)  + \frac{1}{2} - \frac{u}{2 t(u, m, w)},
\end{align*}
where 
\begin{align*}
    t(u, m, w) &= u + \frac{2 \sqrt{\alpha} m}{\sqrt{w}}. 
\end{align*}
Note that the function $f$ is $C^2$ on 
\begin{align*}
    \mathcal{D} \coloneqq \left \{ (u, m, w): w > 0, t(u, m, w) > 0 \right \}.
\end{align*}
Now define a random vector $\mathbf{V}_i \in \mathbb{R}^3$ as $\mathbf{V}_i = (p^2 N_i^2, p G_i N_i, G_i^2 )$. Then 
\begin{align*}
    f(U_n, M_n, W_n) &= f \left(\frac{1}{n} \sum_{i=1}^n \mathbf{V}_i \right),
\end{align*}
where $\{\mathbf{V}_i \}_{i=1}^\infty$ is a sequence of i.i.d. random vectors. We have 
\begin{align*}
    \overrightarrow{\mu} = \mathbb{E}[\mathbf{V}_1] &= (p^2, 0, 1)^T,\\
    f(\overrightarrow{\mu}) &= \frac{1}{2} \log \left(1 + \frac{\alpha}{p^2} \right),\\
    \nabla f(\overrightarrow{\mu}) &= \left(-\frac{\alpha}{2p^2(\alpha + p^2)}, \frac{\sqrt{\alpha}}{\alpha + p^2}, 0  \right)^T. 
\end{align*}
Fix some constant $A > 0$ to be specified later. Let $\delta_n = A \sqrt{\log(n)/n}$. Choose $n$ large enough so that    
\begin{align*}
    B_n \coloneqq \left \{ (u, m, w) : |u - p^2| \leq \delta_n, |m| \leq \delta_n, |w - 1| \leq \delta_n \right \}
\end{align*}
is contained in $\mathcal{D}$ and $t(u, m, w) > \frac{1}{n}$ for all $(u, m, w) \in B_n$. Since $B_n$ is compact and $f \in C^2(B_n)$, for sufficiently large $n$, there exists 
\begin{align*}
    L \coloneqq  \sup_{\overrightarrow{v} \in B_n} ||\nabla^2 f(\overrightarrow{v})||_{\operatorname{op}} < \infty,
\end{align*}
where $L$ can be chosen to be independent of $n$, and $||\cdot ||_{\operatorname{op}}$ is the operator norm induced by the Euclidean norm on vectors in $\mathbb{R}^3$.

Finally, we define the event $E_n = \left \{\frac{1}{n} \sum_{i=1}^n \mathbf{V}_i  \in B_n \right \}$. On this event, we have 
\begin{align*}
    &\left |f \left(\frac{1}{n} \sum_{i=1}^n \mathbf{V}_i \right) - f(\overrightarrow{\mu}) - \nabla f(\overrightarrow{\mu})^T\left(\frac{1}{n} \sum_{i=1}^n \mathbf{V}_i - \overrightarrow{\mu} \right) \right| \\
    &\leq \frac{L}{2} \left | \left | \frac{1}{n} \sum_{i=1}^n \mathbf{V}_i - \overrightarrow{\mu} \right | \right |^2\\
    &\leq \frac{3LA^2 \log(n)}{2n} = \frac{C_1 \log n}{n}.
\end{align*}
Then note that 
\begin{align*}
  \nabla f(\overrightarrow{\mu})^T\left(\frac{1}{n} \sum_{i=1}^n \mathbf{V}_i - \overrightarrow{\mu} \right) &= \frac{1}{n} \sum_{i=1}^n \left [ \frac{p \sqrt{\alpha}}{\alpha + p^2} G_i N_i - \frac{\alpha}{2(\alpha + p^2)} (N_i^2 - 1) \right ]. 
\end{align*}
Lastly, it remains to show that $\operatorname{Pr}(E_n^c) \leq C_2/n$ for some constant $C_2$. This can be shown by the union bound applied to the three events
\begin{align*}
    &\left \{ \left| \frac{1}{n} \sum_{i=1}^n \left(p^2 N_i^2 - p^2\right) \right| > \delta_n \right \},\\
    &\left \{ \left| \frac{1}{n} \sum_{i=1}^n p N_i G_i \right| > \delta_n \right \}, \text{ and} \\
    &\left \{ \left| \frac{1}{n} \sum_{i=1}^n \left( G_i^2 - 1\right) \right| > \delta_n \right \},
\end{align*}
followed by standard Chernoff bounds with a suitable choice of the constant $A$.

\section{Examples of Sources \label{examples_of_sources}}

The main results in the paper apply to a source distribution $P_S$, distortion level $d$ and distortion measure $\operatorname{d}$ satisfying Assumptions 
\ref{assumpr1_source}\,-\,\ref{assumpr3_source} and Conditions \ref{cond1dee}\,-\,\ref{weird9thmomcond}. In this section, we give examples of source-distortion tuples $(P_S, d, \operatorname{d})$ that satisfy these conditions.  

\subsection{Gaussian source with squared-error distortion measure}

For a Gaussian source $S \sim P_S$ with squared-error distortion measure $\operatorname{d}$, we have 
\begin{align*}
\mathcal S&=\mathcal Z=\mathbb R,\\
    P_S &= \mathcal{N}(0, \sigma_S^2),\\
    0&<\sigma_S^2<\infty,\\    
    \operatorname d(s,z)&=(s-z)^2.  
\end{align*}
For any distortion level \(d\in(0,\infty)\), the
rate-distortion function of the Gaussian source is
\begin{align}
    R_d(P_S) = \begin{cases}
         \frac12\log\frac{\sigma_S^2}{d}, & 0<d<\sigma_S^2,\\[0.5em]
        0, & d\ge \sigma_S^2.
    \end{cases}
\end{align}
Hence, $d_{\min} = 0$ and $d_{\max} = \sigma_S^2$. For $d \in (0, \sigma_S^2)$, the infimum in $(\ref{64})$ is uniquely achieved by the following conditional distribution \cite[10.3.2]{Cover2006}:
\begin{align}
    P_{Z|S}^*(\cdot|s)
    =
    \mathcal N\!\left(
        \left(1-\frac{d}{\sigma_S^2}\right)s,
        d\left(1-\frac{d}{\sigma_S^2}\right)
    \right),
\end{align}
and the reproduction marginal is $P_Z^*=\mathcal N(0,\sigma_S^2-d).$ Equivalently, the rate-distortion-achieving joint law $P_{S,Z}^* = P_S \times P_{Z|S}^*$ admits the backward
representation $S = Z + N$, where $Z\sim \mathcal N(0,\sigma_S^2-d), N\sim\mathcal N(0,d)$ and $Z \ind N$. Assumption \ref{assumpr3_source} can be verified to hold by simply taking $E = \{0 \}$. To verify that Condition \ref{weird9thmomcond} holds, let \(S\sim P_S\) and \(Z\sim P_Z^*\) be independent. Then since $S-Z\sim\mathcal N(0,2\sigma_S^2-d)$, we have 
$\mathbb E\!\left[(\operatorname d(S,Z))^9\right]
    =
    \mathbb E[(S-Z)^{18}]
    <\infty$. Furthermore, for $d \in (0, \sigma_S^2)$, we have 
\begin{align*}
    \lambda^* &= - \partial_d R_d(P_S) = \frac{1}{2d},\\
    \jmath_d(s, P_S, \operatorname{d}) &= \frac12\log\left(\frac{\sigma_S^2}{d}\right) + \frac{1}{2} \left(\frac{s^2}{\sigma_S^2} - 1 \right),\\
    V_d(P_S) &= \frac{1}{2},\\
    \overline{c} &= 1 + \frac{d^2}{\sigma_S^4 - d^2}, 
\end{align*}
where $\overline{c}$ was defined in $(\ref{c28f})$. 

\subsection{Bernoulli source with Hamming distortion measure}

For a Bernoulli source $S \sim P_S$ with Hamming distortion measure $\operatorname{d}$, we have 
\begin{align*}
\mathcal S&=\mathcal Z= \{0,1\},\\
    P_S &= \operatorname{Bern}(p),\\
    0&<p<1,\\    
    \operatorname d(s,z)&= \mathds{1}\{s \neq z\}.  
\end{align*}
For any distortion level \(d\in[0,\infty)\), the
rate-distortion function for a Bernoulli$(p)$ source is given by
\begin{align}
    R_d(P_S) = \begin{cases}
         H_b(p) - H_b(d), & 0 \leq d \leq \min\{p, 1-p \},\\[0.5em]
        0, & d > \min\{p, 1-p \}.
    \end{cases}
\end{align}
Hence, $d_{\min} = 0$ and $d_{\max} = \min\{p, 1-p \}$. For $d \in (0, d_{\max})$, the infimum in $(\ref{64})$ is uniquely achieved by the following conditional distribution \cite[10.3.1]{Cover2006}:

\begin{table}[H]
    \centering
    \renewcommand{\arraystretch}{2.5} 
    \begin{tabular}{c|cc}
        $P_{Z|S}^*(z | s)$ & $z = 0$ & $z = 1$ \\
        \hline
        $s = 0$ & $\frac{(1-d)(1-p-d)}{(1-p)(1-2d)}$ & $\frac{d(p-d)}{(1-p)(1-2d)}$ \\
        $s = 1$ & $\frac{d(1-p-d)}{p(1-2d)}$ & $\frac{(1-d)(p-d)}{p(1-2d)}$. \\
    \end{tabular}
\label{tab:rd_conditional_dist}
\end{table}
The reproduction marginal is $P_Z^*= \left(\frac{1-p-d}{1-2d}, \frac{p - d}{1-2d} \right).$ Equivalently, the rate-distortion-achieving joint law $P_{S,Z}^* = P_S \times P_{Z|S}^*$ admits the backward
representation $S = Z \oplus N$, where $Z\sim P_Z^*, N\sim\operatorname{Bern}(d)$ and $Z \ind N$. Clearly, Assumption \ref{assumpr3_source} and Condition \ref{weird9thmomcond} hold. Furthermore, for $0 < d  < \min \{p, 1-p \}$, we have 
\begin{align*}
    \lambda^* &= - \partial_d R_d(P_S) = \log\left(\frac{1-d}{d} \right),\\
    \jmath_d(s, P_S, \operatorname{d}) &= \log\left(\frac{1}{P_S(s)} \right) - H_b(d),\\
    V_d(P_S) &= p(1-p) \log^2\left(\frac{1-p}{p} \right),\\
    \overline{c} &= 1 + \frac{d(1-d)(2p-1)^2}{(p-d)(1-p-d)}. 
\end{align*}

\ifCLASSOPTIONcaptionsoff
  \newpage
\fi



\bibliographystyle{IEEEtran}

\bibliography{citations}

%








\end{document}